\newtheorem{theorem}{Theorem}[section]
\newtheorem{lemma}[theorem]{Lemma}
\begin{document}
	
\title{A Two Species Thomas-Fermi Model for Stellar Ground States}
	
\author{Parker Hund$^1$}

\maketitle
\begin{abstract}
\noindent 
We present an energy functional for a Thomas-Fermi type two-fluid model of a self-gravitating non-rotating charged body, with a non-relativistic kinetic energy. We prove that, under certain conditions on the total number of positively charged and negatively charged particles, a minimizer exists and both fluids have compact support. We prove the same result for special relativistic kinetic energy, assuming further conditions on the total number of particles. In the non-relativistic kinetic energy case, we further prove the uniqueness of the minimizer, as well as present results relating the general shape of the minimizer to the total number of particles.
\end{abstract}
$\phantom{xi}$\hfill

\vspace{-10pt}
\section{Introduction}\vspace{-5pt}
The Thomas-Fermi model as an approximate model was created for the multi-particle Schr\"odinger equation \cite{Thomas}, \cite{Fermi}. In its molecular formulation, it is concerned with minimizing the energy functional 
\begin{equation}\label{TFequation}
   F(\rho)= C\int_{\mathbb{R}^3}\rho^{5/3}(x)d^3x+\int_{\mathbb{R}^3}\rho(x)V(x)d^3x+\frac{1}{2}q^2\int_{\mathbb{R}^3}\int_{\mathbb{R}^3}\frac{\rho(x)\rho(y)}{|x-y|}d^3xd^3y
\end{equation}
under certain constraints on $\rho$.  Here $\rho$ represents the electrons, treated as fluid, which are in a potential field $V(x)$ created by fixed atomic nuclei. The first term in the functional represents the kinetic energy of the electrons while the last term represents the energy associated with the electrical repulsion between electrons. A very detailed discussion of the exact formulation of this model as well as a mathematically rigorous discussion of its minimizers can be found in \cite{LS}, \cite{Spruch}. 

How accurate this approximation to the energy of the ground state is was unclear at the time, but it is now known that the Thomas-Fermi energy is the leading order term, in the number of particles, of the ground state energy of an atom \cite{Fefferman}. As an approximate model it does have important limitations, however. Most well known is the result of Teller that the theory does not predict stable molecules. That is, one can lower the energy of a molecule by breaking it apart into constituent atoms and moving them far apart from each other \cite{Teller}.  This property  was instrumental in Lieb and Thirring's proof of stability of non-relativistic matter by allowing for a reduction of ``stability of the second kind'' to ``stability of the first kind'' \cite{LiebThirring}; that is, one reduces the problems to atoms instead of molecules. A discussion of this idea and how it appears in different proofs for the stability of matter can be found in \cite{LiebSeiringer}.  

In this paper, we apply the Thomas-Fermi approximation to study the structure of stellar objects instead of microscopic matter. This is somewhat similar to the recent applications of Thomas-Fermi theory found in \cite{RRRX1} and \cite{RRRX2}, although those models are not presented in a mathematically rigorous way (nor is their aim to be) and exhibit substantial differences when compared to ours. It is more accurate to say that we follow in the footsteps of the effective single fluid model for white dwarfs presented by Chandrasekhar \cite{Chandra} and the results of Lieb and Yau in \cite{LY}.

We essentially take Chandrasekhar's model and forget the local neutrality assumption. That is, he assumed that the negative charge of the electrons would everywhere balance the positive charge of the various types of nuclei. This allowed him to reduce the problem to finding just the electron density as the density of the nuclei were assumed to be in constant proportion. We do not make this assumption, but we do make the strong assumption that our stellar object is made only of electrons and protons. This is necessary since the Thomas-Fermi model applies to fermions but not bosons. Although a hydrogen nucleus is a fermion, the other common nuclei like helium, carbon, or oxygen found in stellar objects are all bosons. Therefore we must make the restriction to only protons and electrons.   

The result is a two species generalization of Thomas-Fermi model which treats both the protons and electrons as fluids. This model can be used to describe the ground state of a white dwarf star. We will, however, not dwell on the physical motivation and justifications for this model: for a more detailed introduction to its motivation, see \cite{HK} or \cite{HK2}.

The two fluid model of course has similarities to the single fluid Thomas-Fermi model. But, because gravity will play an important role, it also shares similarities to models of self-gravitating and rotating fluids. Discussions of these closely related models can be found in \cite{AB1}, \cite{AB2}, \cite{FT}, \cite{LY}, and \cite{YYL}. For reference, the functional used in \cite{YYL} is given as 
\begin{equation}\label{rotatingFunctional}
    F(\rho)=\int_{\mathbb{R}^3}A(\rho(x))d^3x-\int_{\mathbb{R}^3}\rho(x) J(x)d^3x-\frac{1}{2}\int_{\mathbb{R}^3}\int_{\mathbb{R}^3}\frac{\rho(x)\rho(y)}{|x-y|}d^3xd^3y
\end{equation}
where $A(s)=s\int_0^sf(t)t^{-2}dt$, $f(\rho(x))=p(x)$ the equation of state when $p(x)$ is pressure, and $J$ is the angular velocity.   

The two fluid model is in some sense a combination of the single fluid Thomas-Fermi model and the self-gravitating fluid models, something we will again note in section \ref{themodel}, and in fact we will simply adapt some of the techniques used for the rotating models in our study.

The outline of the paper is as follows. In section \ref{themodel}, we will present and discuss the model. In section \ref{criticalpoints} we prove the existence of minimizers given certain bounds on the ratio of the particles. In section \ref{compactexistence} we present a different proof under stricter conditions giving existence with compactly supported minimizers. Section \ref{ELsection} presents an analysis of the Euler-Lagrange equations of the energy functional. Section \ref{synthesissection} relates the results of sections \ref{criticalpoints} and \ref{compactexistence} to the results of section \ref{ELsection}, and proves the uniqueness of the minimizers. Finally, we conclude in section \ref{conclusion}.  
\section{Energy Functionals}\label{themodel}
\subsection{The non-relativistic model}
Chandrasekhar's white dwarf model assumes a special relativistic kinetic energy. But before we generalize that model to two species, let us first consider the non-relativistic case; it already presents enough mathematical difficulties to make it interesting. The energy functional for the two species non-relativistic model is 
\begin{align}\label{functional}
    E(\rho_e, \rho_p)&=\frac{3h^2}{40m_e}\left(\frac{3}{\pi}\right)^{2/3}\int_{\mathbb{R}^3}\rho^{5/3}_e(x)d^3x+\frac{3h^2}{40m_p}\left(\frac{3}{\pi}\right)^{2/3}\int_{\mathbb{R}^3}\rho^{5/3}_p(x)d^3x\\
    & +\frac{q^2}{2}\int_{\mathbb{R}^3}\int_{\mathbb{R}^3}\frac{(\rho_p(x)-\rho_e(x))(\rho_p(y)-\rho_e(y))}{|x-y|}d^3yd^3x \nonumber\\
    &-\frac{G}{2}\int_{\mathbb{R}^3}\int_{\mathbb{R}^3}\frac{(m_p\rho_p(x)+m_e\rho_e(x))(m_p\rho_p(y)+m_e\rho_e(y))}{|x-y|}d^3yd^3x \nonumber
\end{align}
where $h$ is Planck's constant, $m_e$ is the electron mass, $m_p$ is the proton mass, $G$ is Newton's gravitation constant, and $q$ is the elementary charge. As a notational convenience, we will define $k_e:=\frac{3 h^2}{40m_e}\left(\frac{3}{\pi}\right)^{2/3}$ and $k_p:=\frac{3 h^2}{40m_p}\left(\frac{3}{\pi}\right)^{2/3}$. Mathematically, the exact values of these constants will not make much difference in what follows, and one could ignore any physical meaning mentioned in these introductory sections. We will however retain the suggestive notation and try to keep computed quantities in a form so that we can more easily make analogies to well known physical quantities for those readers familiar with, for example, the Chandrasekhar limiting mass. Although we have included no numerical results in this paper, it should be noted that for there are some computational issues with using the true physical values; a detailed discussion of this issue as well as numerical results can be found in \cite{HK}. 

As mentioned above, the first two terms on the right side of (\ref{functional}) are kinetic energy terms. These are derived from a phase space argument combined with the Pauli exclusion principle and nonrelativistic kinetic energies per particle, $\frac{p^2}{2m}$. A careful explanation of these integrals can be found in \cite{LS} or Chapter X of \cite{Chandra}. This is the term of the energy functional which does not hold for bosons: note that the Pauli exclusion principle was used in its justification. The third and fourth integrals are readily seen to be the electric and gravitational energies. 

It is also easy to see the similarities between this model and (\ref{rotatingFunctional}) and (\ref{TFequation}), and why we might say that this model is, mathematically at least, in some way a combination of those two models. From (\ref{TFequation}) we have taken the repulsive potential and from (\ref{rotatingFunctional}) we have taken the attractive potential. 

We wish to minimize $E$ subject to the following constraints: $\rho_e, \rho_p\geq0$, $\rho_e, \rho_p\in L^1\cap L^{5/3}$ (which is sufficient to make the third and fourth integrals finite), $\int\rho_e=N_e$, and $\int\rho_p=N_p$. We will sometimes refer to $E$ as $E_{(N_e, N_p)}$ when we want to make a point about particular values of $N_e$ and $N_p$, but for the most part we will only use $E$. As mentioned in the introduction, we are considering applications to stellar structures, so we have in mind $N_e, N_p$ on the order of $10^{56}$ (higher would make our nonrelativistic energy assumption questionable), although it makes no difference mathematically. Proofs that this functional is well defined and is bounded below can be found in, for example, \cite{YYL}. 

We can compute the Euler-Lagrange equations with $\int\rho_e=N_e$, and $\int\rho_p=N_p$ constraints to be
\begin{equation}\label{el1}
    \frac{5}{3}k_p\rho_p^{2/3}(x)+q^2\int_{\mathbb{R}^3}\frac{(\rho_p-\rho_e)(y)}{|x-y|}d^3y-Gm_p \int_{\mathbb{R}^3}\frac{(m_p\rho_p+m_e\rho_e)(y)}{|x-y|}d^3y=\lambda_p
\end{equation}
and
\begin{equation}\label{el2}
    \frac{5}{3}k_e\rho_e^{2/3}(x)- q^2\int_{\mathbb{R}^3}\frac{(\rho_p-\rho_e)(y)}{|x-y|}d^3y-Gm_e \int_{\mathbb{R}^3}\frac{(m_p\rho_p+m_e\rho_e)(y)}{|x-y|}d^3y=\lambda_e
\end{equation}
These will hold where $\rho_p>0$ and $\rho_e> 0$ respectively. If the densities were sufficiently regular, we could apply $-\Delta$ to (\ref{el1}) and (\ref{el2}) to get 
\begin{equation}\label{del1}
    -\frac{5}{3}k_p\Delta\rho^{2/3}_p+4\pi  q^2(\rho_p-\rho_e)-4\pi Gm_p(m_p\rho_p+m_e\rho_e)=0
\end{equation}
and 
\begin{equation}\label{del2}
    -\frac{5}{3}k_e\Delta\rho^{2/3}_e-4\pi  q^2(\rho_p-\rho_e)-4\pi Gm_e(m_p\rho_p+m_e\rho_e)=0
\end{equation}
again valid where $\rho_p>0$ and $\rho_e> 0$. Note that these are the equations of hydrostatic equilibrium for a self-gravitating, charged two fluid model if one use a polytropic equation of state of index 3/2. Since both functions are positive on their support, we define $\nu_{e,p}=\rho^{2/3}_{e,p}$. Then we rewrite (\ref{del1}) and (\ref{del2}) as 
\begin{equation}\label{nel1}
    \frac{5}{3}k_p\Delta\nu_p=4\pi  q^2(\nu^{3/2}_p-\nu^{3/2}_e)-4\pi Gm_p(m_p\nu^{3/2}_p+m_e\nu^{3/2}_e)
\end{equation}
and 
\begin{equation}\label{nel2}
    \frac{5}{3}k_e\Delta\nu_e=-4\pi  q^2(\nu^{3/2}_p-\nu^{3/2}_e)-4\pi Gm_e(m_p\nu^{3/2}_p+m_e\nu^{3/2}_e)
\end{equation}
So, aside from the question of regularity, we have reduced finding the critical points of (\ref{functional}) to finding solutions to this nonlinear elliptic system.

Positive solutions to systems of this form have been studied extensively; see \cite{DG} for an overview. The common techniques to find solutions are usually extensions of the techniques applied to find positive solutions of the scalar versions. For example, \cite{CFM1} generalizes the results of \cite{FLN}, \cite{Zou} generalizes \cite{GS1} and \cite{GS2}, and \cite{CFM2} generalizes \cite{BT}. These techniques cannot, at least straightforwardly, be applied to our system. One of the main issues is that many of the techniques use a fixed point iteration on a function space of positive functions. In our system, we cannot force the right hand sides of (\ref{nel1}) and (\ref{nel2}) to be positive, preventing us from applying the maximum principle in a crucial way.

But the more serious issue is that these techniques are simply not addressing the question we are interested in. These papers are overwhelmingly concerned with finding strictly positive solutions with Dirichlet boundary conditions on an arbitrary domain. While for spherical domains there do exist solutions satisfying these criterion, in fact easily describable as we will show below, we are also interested in solutions where one density has support strictly contained in the support of the other density. As we will show, there are only special pairs $(N_e, N_p)$ such that the densities have the same support. 


\subsection{The Special Relativistic Model}
Now we may turn to the direct generalization of the Chandrasekhar model. As mentioned above, the kinetic energy terms $k_e\int\rho_e^{5/3}$ and $k_p\int\rho_p^{5/3}$ are derived by assuming a nonrelativistic kinetic energy per particle of $\frac{p^2}{2m}$. If instead one uses the special-relativistic kinetic energy of $mc^2((1+p^2/2m)^{1/2}-1)$, one gets kinetic energy terms of the form 
\begin{equation}\label{oldrelenergy}
    \frac{\pi m_f^4c^5}{3h^3}\int_{\mathbb{R}^3}A\left(\frac{h}{ m_f c}\left(\frac{3}{8\pi}\right)^{1/3}\rho_f^{1/3}(x)\right)d^3x
\end{equation}
where
\begin{equation}\label{chandraformula}
    A(z)=8z^3[(z^2+1)^{1/2}-1]-z(2z^2-3)(z^2+1)^{1/2}-3\sinh^{-1}(z)
\end{equation}
and $f=e$ or $p$, see \cite{Chandra} Chapter X for the derivation of exactly this expression. This is however not always the best form of the kinetic energy to work with. Sometimes a better form is 
\begin{equation}\label{relenergy}
    m_fc^2\int_{\mathbb{R}^3}\int_0^{\rho_f(x)}\sqrt{1+\left(\frac{3}{\pi}\right)^{2/3}\left(\frac{h}{2m_fc}\right)^2\theta^{2/3}}d\theta d^3x.
\end{equation}
So to express the special relativistic energy functional, we only need to replace the first two terms in (\ref{functional}) with their analogues from (\ref{relenergy}). We will call this energy functional $E^S$. Then we make the same assumption of spherical symmetry and seek to minimize $E^S$ over the same set as before. If we compute the Euler-Lagrange equations of $E^S$ and assume high enough regularity to apply $-\Delta$, we find as analogues to (\ref{del1}) and (\ref{del2})
\begin{equation}\label{specialel1}
    -m_pc^2\Delta\sqrt{1+\left(\frac{3}{\pi}\right)^{2/3}\left(\frac{h}{2m_pc}\right)^2\rho_p^{2/3}}+4\pi  q^2(\rho_p-\rho_e)-4\pi Gm_p(m_p\rho_p+m_e\rho_e)=0
\end{equation}
and 
\begin{equation}\label{specialel2}
     -m_ec^2\Delta\sqrt{1+\left(\frac{3}{\pi}\right)^{2/3}\left(\frac{h}{2m_ec}\right)^2\rho_e^{2/3}}-4\pi  q^2(\rho_p-\rho_e)-4\pi Gm_e(m_p\rho_p+m_e\rho_e)=0.
\end{equation}
If, following \cite{Chandra} Chapter XI, we set $y_f^2=1+\left(\frac{3}{\pi}\right)^{2/3}\left(\frac{h}{2m_fc}\right)^2\rho_f^{2/3}$ where $f$ is $p$ or $e$, these equations can be rewritten as 
\begin{equation}\label{specialel3}
    \Delta y_p=\frac{4\pi}{m_pc^2}(q^2-Gm_p^2)\left(\frac{\pi}{3}\right)\left(\frac{2m_pc}{h}\right)^3(y_p^2-1)^{3/2}-\frac{4\pi}{m_pc^2}(q^2+Gm_pm_e)\left(\frac{\pi}{3}\right)\left(\frac{2m_ec}{h}\right)^3(y_q^2-1)^{3/2}
\end{equation}
and
\begin{equation}\label{specialel4}
    \Delta y_e=-\frac{4\pi}{m_ec^2}(q^2+Gm_pm_e)\left(\frac{\pi}{3}\right)\left(\frac{2m_pc}{h}\right)^3(y_p^2-1)^{3/2}+\frac{4\pi}{m_ec^2}(q^2-Gm_q^2)\left(\frac{\pi}{3}\right)\left(\frac{2m_ec}{h}\right)^3(y_q^2-1)^{3/2},
\end{equation}
these equations valid where $\rho_p>0$ and $\rho_e>0$, respectively. These are clearly much easier to work with than (\ref{specialel1}) and (\ref{specialel2}). In his single species model, Chandrasekhar is able to go a step further and introduce a change of variables to scale out the constants and essentially normalize the equation. He is left with an equation (assuming spherical symmetry) of the form 
\begin{equation}\label{chandradiffeq}
    \frac{1}{r^2}\frac{d}{dr}\left(r^2\frac{dy}{dr}\right)=-\left(y^2-\frac{1}{y_0^2}\right)^{3/2}
\end{equation}
where $y_0$ is a function of the central density. Rewriting the equation this way allows him to write certain quantities, such as the total mass of the configuration, with a simple expression. This is what allows him to write his well known limiting mass of possible configurations in a closed form.

Since we are working with a system, there is no scaling which will eliminate all the constants of (\ref{specialel3}) and (\ref{specialel4}) to get a system analogous to (\ref{chandradiffeq}). We therefore must work with (\ref{specialel3}) and (\ref{specialel4}) and will not focus on the question of limiting mass, although a few of our results will shed some light on it. 

\subsection{Zero-gravity limit}
Before we turn to proving the existence of minimizers for (\ref{functional}) and its special relativistic form, it should be noted that, although the gravitational force is very small in comparison to the electric force, setting $G=0$ trivializes the problem, at least in the non-relativistic case. Define 
\begin{align}\label{zeroG}
     E^0(\rho_e, \rho_p)&=k_e\int_{\mathbb{R}^3}\rho^{5/3}_e(x)d^3x+k_p\int_{\mathbb{R}^3}\rho^{5/3}_p(x)d^3x\\
    & +\frac{q^2}{2}\int_{\mathbb{R}^3}\int_{\mathbb{R}^3}\frac{(\rho_p(x)-\rho_e(x))(\rho_p(y)-\rho_e(y))}{|x-y|}d^3yd^3x \nonumber
\end{align}
We show that there is only a single local critical point, which happens to be the global minimum. The first two terms of $E^0$  are clearly positive, and the proof that the last term is positive can be found in Theorem II.6 in \cite{LS}. Therefore, $E^0$ has a global minimum of zero, obtained only when $\rho_{e}$ and $\rho_p$ are identically zero. But beyond this, for any fixed $N_p$ and $N_e$, $E^0$ has an infimum of zero. To see this, consider any $\rho_p$ and $\rho_e$ in the set of admissible functions. Then let $\rho^\lambda_e(s)=\rho_e(s/\lambda)/\lambda^3$ and $\rho^\lambda_p(s)=\rho_p(s/\lambda)/\lambda^3$. We have 
\begin{equation}
\int_{\mathbb{R}^3}\rho_p^\lambda(s)d^3s=\int_{\mathbb{R}^3}\frac{\rho_p(s/\lambda)}{\lambda^3}d^3s=\int_{\mathbb{R}^3}\rho_p(s)d^3s=N_p
\end{equation}
so $\rho_p^\lambda$ is admissible for $0<\lambda<\infty$. We also have 
\begin{equation}
\int_{\mathbb{R}^3}(\rho_p^\lambda)^{5/3}(s)d^3s=\int_{\mathbb{R}^3}\frac{\rho_p^{5/3}(s/\lambda)}{\lambda^5}d^3s=\int_{\mathbb{R}^3}\frac{\rho_p^{5/3}(s)}{\lambda^2}d^3s\rightarrow 0 \text{ as }\lambda\rightarrow\infty
\end{equation}
These statements are also true for $\rho_e$. We also have for the third term in $E^0$:
\begin{align}
    \int_{\mathbb{R}^3}\int_{\mathbb{R}^3}\frac{(\rho^\lambda_p(s)-\rho^\lambda_t(s))(\rho^\lambda_p(t)-\rho^\lambda_e(t))}{|s-t|}d^3sd^3t&=\int_{\mathbb{R}^3}\int_{\mathbb{R}^3}\frac{(\rho_p(s/\lambda)-\rho_t(s\lambda))(\rho_p(t/\lambda)-\rho_e(t/\lambda))}{|s-t|\lambda^6}d^3sd^3t\\
    &=\int_{\mathbb{R}^3}\int_{\mathbb{R}^3}\frac{(\rho_p(s)-\rho_t(s))(\rho_p(t)-\rho_e(t))}{|s-t|\lambda}d^3sd^3t \\
    &\rightarrow 0 \text{ as }\lambda\rightarrow\infty \nonumber
\end{align}
This process is physically analogous to spreading any given density out. We can then conclude that the infimum of $E^0$ for any fixed pair $(N_e, N_p)$ is zero, and not obtained. 

Beyond this, $E^0$ is also jointly convex, so the global minimum is the only critical point. To see this note that the first two terms of $E^0$ are strictly convex on $[0,\infty]$ since $x^{5/3}$ is. For the last term, consider 
\begin{equation}
\hat{E}^0(\rho_p, \rho_e)=\int_{\mathbb{R}^3}\int_{\mathbb{R}^3}\frac{(\rho_p(s)-\rho_t(s))(\rho_p(t)-\rho_e(t))}{|s-t|}d^3sd^3t
\end{equation}
Let $u_e$ and $u_p$ be admissible functions and define $\hat{E}^0(\alpha)=\hat{E}_G(\alpha\rho_p+(1-\alpha)u_p,\alpha\rho_e+(1-\alpha)u_e )$. Then we can compute 
\begin{equation}
\frac{d^2\hat{E}^0}{d\alpha^2}|_{\alpha=0}=\int_{\mathbb{R}^3}\int_{\mathbb{R}^3}\frac{(u_p(s)-u_t(s))(u_p(t)-u_e(t))}{|s-t|}d^3sd^3t
\end{equation}
As was stated above, this quantity is positive and only zero if $u_e=u_p$, so $E^0$ is jointly convex. So one might have considered using a perturbative expansion in $G$ around $G=0$ to approach finding solutions. But since $E^0$ has only trivial critical points this does not simplify the analysis; see \cite{KNY} for this approach.


\section{Minimizers of the Energy Functional}\label{criticalpoints}
This section generally follows the outline of \cite{Lions}.
\begin{theorem}\label{minexist}
If 
\begin{equation}\label{5/3bounds}
  \frac{1-\frac{Gm_p^2}{q^2}}{1+\frac{Gm_em_p}{q^2}}\leq\frac{N_e}{N_p}\leq\frac{1+\frac{Gm_em_p}{q^2}}{1-\frac{Gm_q^2}{q^2}},  
\end{equation}
there is a pair $(\rho_e, \rho_p)$ which minimizes $E$ such that $\rho_e, \rho_p\geq0$, $\rho_e, \rho_p\in L^1\cap L^{5/3}$, $\int\rho_e=N_e$, $\int\rho_p=N_p$, and $\rho_e, \rho_p$ are both radially symmetric.
\end{theorem}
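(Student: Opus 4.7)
The plan is to combine the direct method of the calculus of variations with P.-L.~Lions' concentration--compactness principle, following the template of \cite{Lions}. Write $\mathcal{A}$ for the admissible set and $I(N_e,N_p)=\inf_{\mathcal{A}}E$. Since $E$ is bounded below (cited from \cite{YYL}) and, under concentration, the kinetic terms scale as $R^{-2}$ while the Coulombic terms only scale as $R^{-1}$, I first check that any minimizing sequence $(\rho_e^n,\rho_p^n)$ has uniformly bounded $L^{5/3}$-norms. Hardy--Littlewood--Sobolev then bounds the double integrals continuously on $L^{6/5}\supset L^1\cap L^{5/3}$, so after extraction $(\rho_e^n,\rho_p^n)\rightharpoonup(\rho_e,\rho_p)$ weakly in $L^{5/3}$.

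Next I apply the concentration--compactness lemma to the mass density $\mu_n=(\rho_e^n+\rho_p^n)\,dx$, getting tightness (up to translation), vanishing, or dichotomy. Vanishing is excluded by an explicit uniform-ball competitor $\rho_{e,p}=N_{e,p}\chi_{B_R}/|B_R|$ whose energy computes to
$$E(R)=\frac{C}{R^{2}}+\frac{3}{5R}\bigl[q^{2}(N_p-N_e)^{2}-G(m_pN_p+m_eN_e)^{2}\bigr];$$
rearranging \eqref{5/3bounds} yields $q^{2}|N_p-N_e|\leq Gm_p(m_pN_p+m_eN_e)$, and with $Gm_p^{2}\ll q^{2}$ this makes the bracket strictly negative, so minimizing in $R$ gives $I<0$---whereas vanishing would force $\liminf E\geq 0$.

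The main obstacle is ruling out dichotomy, which demands the strict subadditivity
$$I(N_e,N_p)<I(N_e',N_p')+I(N_e-N_e',N_p-N_p')$$
for every nontrivial admissible split. Following Lions, I concatenate near-minimizers for the two sub-masses at separation $R\to\infty$; the leading inter-clump correction is
$$\frac{q^{2}(N_p'-N_e')(N_p''-N_e'')-G(m_pN_p'+m_eN_e')(m_pN_p''+m_eN_e'')}{R},$$
which under \eqref{5/3bounds} one checks is strictly negative for every admissible split, so bringing the pieces together strictly lowers $E$ and dichotomy cannot occur. Verifying this sign for \emph{all} splits---including highly unbalanced ones where an individual piece violates \eqref{5/3bounds}---is the delicate bookkeeping step of the argument, since the hypothesis controls the parent pair $(N_e,N_p)$ but must be leveraged across an arbitrary partition.

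With vanishing and dichotomy excluded, tightness of $\mu_n$ (after a fixed translation, which by translation-invariance we place at the origin) upgrades weak $L^{5/3}$ convergence to strong $L^{6/5}$ convergence; HLS passes the double integrals to the limit while the convex kinetic terms are weakly lower semicontinuous, and the mass constraints are preserved by $L^{1}$ tightness. Thus $(\rho_e,\rho_p)\in\mathcal{A}$ realizes $I$. To promote the minimizer to a radial one, I would rerun the argument inside the closed $SO(3)$-invariant subclass $\mathcal{A}^{\mathrm{rad}}$, where the translation alternative is automatic; the restricted and unrestricted infima coincide by the rotational invariance of $E$ together with a symmetrization/averaging step on the minimizing sequence, yielding the claimed radial minimizer.
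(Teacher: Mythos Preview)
Your route via concentration--compactness differs substantially from the paper's. The paper (following Lions' 1981 $L^1(\mathbb{R}^3)$ paper, not the later concentration--compactness series) works in the radially symmetric class from the outset with the \emph{relaxed} constraints $\int\rho_f\le N_f$. Compactness comes from a Strauss-type lemma: the Newtonian potentials $u_f^n=B\rho_f^n$ of a radial minimizing sequence decay uniformly at infinity, which yields strong $L^6$ convergence of $u_f^n$ and hence continuity of both double integrals along the sequence; the kinetic terms are weakly l.s.c.\ by convexity. A minimizer over the relaxed radial class follows directly, and a separate scaling/shell-addition argument (Lemma~\ref{rightmass}) then shows the mass constraints are saturated. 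No dichotomy analysis is ever needed. Incidentally, your uniform-ball competitor for $I<0$ is cleaner than the paper's iterative shell construction in Lemma~\ref{Negenergy}.

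Your outline has two genuine gaps. First, the dichotomy step: a strictly negative inter-clump interaction yields only $I(N_e,N_p)\le I(N_e',N_p')+I(N_e'',N_p'')$, because the interaction tends to $0$ as $R\to\infty$; it does not give the \emph{strict} subadditivity required to contradict dichotomy. (Your sign claim is in fact correct for every nontrivial split---one can verify $q^2Q'Q''<GM'M''$ by writing $M'=m_pQ'+(m_p+m_e)N_e'$ and using the global bound $q^2|Q|\le Gm_pM$---but that is not the missing ingredient.) You would need an independent proof of $I<I'+I''$; the scaling identity $I(\theta N_e,\theta N_p)=\theta^{7/3}I(N_e,N_p)$ only handles proportional splits. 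Second, the final symmetrization is unjustified: rotational invariance of $E$ does not force the radial and unrestricted infima to agree, and Schwarz rearrangement is problematic because the Coulomb term $\iint(\rho_p-\rho_e)(\rho_p-\rho_e)/|x-y|$ expands into Riesz integrals with mixed signs whose net behavior under separate rearrangement of $\rho_p$ and $\rho_e$ is indeterminate. The paper sidesteps both issues by never leaving the radial class.
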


Define $W^{N_e, N_p}$ to be the set of pairs of functions $(\rho_e, \rho_p)$ such that $\rho_f\geq0$, $\rho_f\in L^1\cap L^{5/3}$, $\rho_f$ are radially symmetric, and $\int\rho_f\leq N_f$. Also, let $\mathcal{E}^{N_e, N_p}$ be the minimum energy of $E$ over $W^{N_e, N_p}$; this could possibly be $-\infty$. To avoid the clunky notation, we will always fix $N_e$ and $N_p$ and just write $W$ and $\mathcal{E}$.

 Sometimes we will also use the notation
\begin{equation}
    B\rho(x):=\int_{\mathbb{R}^3}\frac{\rho(y)}{|x-y|}d^3y.
\end{equation}
First, we want to show
\begin{lemma}\label{Negenergy}
For any $(N_e, N_p)$ satisfying $(\ref{5/3bounds})$, $\mathcal{E}<0$.
\end{lemma}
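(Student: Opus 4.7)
The plan is to exhibit a single admissible trial pair $(\rho_e,\rho_p)\in W$ with $E(\rho_e,\rho_p)<0$, using a mass-preserving rescaling. Given any radial $(\rho_e,\rho_p)$ with $\int\rho_f = N_f$, define $\rho_f^\lambda(x):=\lambda^3\rho_f(\lambda x)$ for $\lambda>0$. These preserve the $L^1$ normalization and radial symmetry, so $(\rho_e^\lambda,\rho_p^\lambda)\in W$. A direct change of variables shows that the kinetic terms scale as $\lambda^2$ and each Coulomb/gravitational double integral scales as $\lambda$, giving
\[
E(\rho_e^\lambda,\rho_p^\lambda)=\lambda^2\,T(\rho_e,\rho_p)+\lambda\,V(\rho_e,\rho_p),
\]
where $T>0$ is the sum of the two kinetic terms and $V$ is the sum of the Coulomb and gravitational double integrals. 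If I can exhibit a trial with $V<0$, then for all sufficiently small $\lambda>0$ the linear-in-$\lambda$ piece dominates and $E(\rho_e^\lambda,\rho_p^\lambda)<0$, yielding $\mathcal{E}<0$.

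For the trial I take the simplest possible choice, $\rho_p = (N_p/|B|)\chi_B$ and $\rho_e = (N_e/|B|)\chi_B$ with $B$ the unit ball. Because both the net charge density $\rho_p-\rho_e$ and the mass density $m_p\rho_p+m_e\rho_e$ are proportional to $\chi_B$, the Coulomb and gravitational double integrals factor uniformly, and I obtain
\[
V = \frac{D}{|B|^2}\bigl[q^2(N_p-N_e)^2 - G(m_pN_p+m_eN_e)^2\bigr],
\]
with $D:=\tfrac12\iint\chi_B(x)\chi_B(y)/|x-y|\,d^3x\,d^3y>0$. Thus everything reduces to the single scalar inequality $q|N_p-N_e|<\sqrt{G}\,(m_pN_p+m_eN_e)$.

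The main algebraic step, which is really the only nontrivial obstacle, is checking that (\ref{5/3bounds}) implies this inequality. Setting $\alpha:=\sqrt{G}\,m_p/q$, $\beta:=\sqrt{G}\,m_e/q$, and $r:=N_e/N_p$, the required inequality is equivalent to $\frac{1-\alpha}{1+\beta}<r<\frac{1+\alpha}{1-\beta}$, whereas (\ref{5/3bounds}) reads $\frac{1-\alpha^2}{1+\alpha\beta}\le r\le\frac{1+\alpha\beta}{1-\beta^2}$. Cross-multiplying, the comparisons $\frac{(1-\alpha)(1+\alpha)}{1+\alpha\beta}>\frac{1-\alpha}{1+\beta}$ and $\frac{1+\alpha\beta}{(1-\beta)(1+\beta)}<\frac{1+\alpha}{1-\beta}$ both collapse to $\alpha+\beta>0$, which is automatic. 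Consequently the paper's hypothesis is strictly contained in the trial-function window, and $V<0$ holds strictly even when $r$ saturates the endpoints of (\ref{5/3bounds}).

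Putting the pieces together, I pick any $\lambda<|V|/T$ and conclude $E(\rho_e^\lambda,\rho_p^\lambda)=\lambda V+\lambda^2 T<0$, which proves $\mathcal{E}<0$. The scaling idea is the same dilation already used in the zero-gravity subsection, just run in the opposite direction: there, $\lambda\to\infty$ drove $E^0$ to $0^+$; here, $\lambda\to 0^+$ drives $E$ to $0^-$ because the gravitational attraction outweighs the Coulomb repulsion precisely in the regime described by (\ref{5/3bounds}).
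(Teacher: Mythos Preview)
Your proof is correct and takes a genuinely different route from the paper's. The paper first handles the neutral case $N_e=N_p$ by the same spreading-out scaling you use, but then for $N_e\neq N_p$ it runs an iterative construction: starting from a compactly supported negative-energy configuration, it adds thin spherical shells of excess charge just outside the support, computes the energy increment exactly, and shows the increment is negative as long as the strict version of (\ref{5/3bounds}) holds. It then argues that one can iterate this shell-addition until the ratio $N_e/N_p$ reaches any value in the closed interval.

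Your argument is considerably more direct. The key observation you exploit---that the closed interval (\ref{5/3bounds}) is \emph{strictly} contained in the open interval $\{q|N_p-N_e|<\sqrt{G}(m_pN_p+m_eN_e)\}$, with both containments reducing to $\alpha+\beta>0$---lets you bypass the whole iterative machinery with a single uniform-on-a-ball trial pair and one scaling. This is cleaner and immediately covers the endpoint cases of (\ref{5/3bounds}), which in the paper's approach require a separate limiting discussion. The paper's shell-addition computation does earn its keep elsewhere, though: essentially the same increment calculation is reused in Lemma~\ref{rightmass} to rule out mass loss in the weak limit, so the author may have preferred to introduce it here.
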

\begin{proof}
It is easy to see that this is true in the case that $N_e=N_p$. For if we take any configuration in $W$ such that $\rho_e=\rho_p$, the electric energy is zero and we are left with 
\begin{equation}\label{reducedfunctional}
    E(\rho_e, \rho_p)=(k_e+k_p)\int_{\mathbb{R}^3}\rho^{5/3}_e(x)d^3x
    -\frac{G}{2}(m_e+m_p)^2\int_{\mathbb{R}^3}\int_{\mathbb{R}^3}\frac{\rho_e(x)\rho_e(y)}{|x-y|}d^3yd^3x
\end{equation}
Using the same ``spreading out'' argument we used when we considered the zero-gravity case, we see that as $\lambda\rightarrow\infty$, the kinetic energy terms go to zero like $1/\lambda^2$ while the gravitational energy goes to zero like $1/\lambda$. Then for large enough $\lambda$, $E(\rho_e^\lambda, \rho_p^\lambda)<0$.

Therefore, because each of the terms in $E$ is in some sense continuous in $(\rho_e, \rho_p)$ (without trying to make this precise in any way, we just use the intuition), there is some interval of values for $N_e/N_p$ containing 1 such that $\mathcal{E}<0$. We determine the size of this interval by finding example functions in $W$.

Suppose that for a fixed $(N_e, N_p)$ such that the ratio $N_e/N_p$ satisfies $(\ref{5/3bounds})$, we have a pair $(\rho_e, \rho_p)$ both with compact support such that $E(\rho_e, \rho_p)< 0$. We claim we can add a small positive or negative charge outside of this configuration and the energy will not increase. Formally, the idea is to see how many ``test particles'' we can bring in from infinity and still have negative energy. Of course, this is not a rigorous argument, so we now make this idea rigorous. 

Let us assume we are adding positive charge. Since our space $W$ consists of spherically symmetric functions, we must add our charge in a spherically symmetric configuration. We may assume the configuration to which we are adding has compact support, suppose it is contained in a ball of radius $R_0$. Then add to $(\rho_e, \rho_p)$ a particle density described by the function $h=\epsilon\chi_S$ for $S=\{x| R_0<|x|<R_0+\eta\}$. Then we compute
\begin{align}\label{energydiff}
    E(\rho_e, \rho_p+g)-E(\rho_e, \rho_p)&=k_p\epsilon^{5/3}\frac{4}{3}\pi\left[(R_0+\eta)^3-R_0^3\right]+\left[\frac{q^2}{2}-\frac{G}{2}\right]\epsilon^2\int_{S}\int_{S}\frac{1}{|x-y|}d^3yd^3x \nonumber\\
    &+\epsilon\int_{S}\int_{\mathbb{R}^3}\frac{q^2(\rho_p-\rho_e)-Gm_p(m_p\rho_p+m_e\rho_e)}{|x-y|}\\
    &=k_p\epsilon^{5/3}\frac{4}{3}\pi\left[3R_0^2\eta+3R_0\eta^2+\eta^3\right] \nonumber\\
    &+\left[\frac{q^2}{2}-\frac{G}{2}\right]\epsilon^2 4\pi\int_{R_0}^{R_0+\eta}r^2\left[\frac{4}{3}\frac{\pi}{r}\left[r^3-R_0^3\right]+2\pi\left[(R_0+\eta)^2-r^2\right]\right]dr\nonumber \\
    &+4\pi \epsilon \int_{R_0}^{R_0+\eta} \left[q^2(N_p-N_e)-Gm_p(m_pN_p+m_eN_e)\right]rdr\\
    &=k_p\epsilon^{5/3}\frac{4}{3}\pi\left[3R_0^2\eta+3R_0\eta^2+\eta^3\right]\nonumber\\&+\left[\frac{q^2}{2}-\frac{G}{2}\right]\epsilon^2 4\pi\left[4\pi R_0^3\eta^2+\frac{16\pi}{3}R_0^2\eta^3+\frac{8\pi}{3}R_0\eta^4+\frac{3\pi}{5}\eta^5\right]\nonumber \\
    &-4\pi\epsilon N_e(q^2-Gm_p^2)\left(B_p-\frac{N_p}{N_e}\right)(2R_0\eta+\eta^2)
\end{align}
where $B_p$ is the multiplicative inverse of the left side of (\ref{5/3bounds}). If we assume $\left(B_p-\frac{N_p}{N_e}\right)>0$, the last term is strictly negative. Therefore, we may take $\epsilon>0$ small enough so that the difference is negative. So we can add some positive amount of charge to the configuration and the energy will become more negative. The same will be true of adding a small negative amount of charge using the strict version of the right half of (\ref{5/3bounds}). 

We intend to treat this as an iterative procedure, and note that if $N_e=N_p$, we can assume that the densities have compact support; the size of the support makes no difference for the argument that $\mathcal{E}^{N_e, N_e}<0$. The issue remaining is whether we can saturate (\ref{5/3bounds}) using this procedure.

To show that we can, we let $\eta=\epsilon$. Then if we have only added a finite amount of charge to the configuration, we can also be assured the resulting configuration has finite radius. So no matter how much charge we add, we can always apply the procedure again as long as the strict version of (\ref{5/3bounds}) is satisfied. This will give us (\ref{5/3bounds}). So we need to show that the procedure can be carried out if $\eta=\epsilon$, which is to say that we can find an $\epsilon$ so that in this case (\ref{energydiff}) is negative.

The leading order in $\epsilon$ parts of the three terms of (\ref{energydiff}) behave as $\epsilon^{8/3}R_0^2$, $\epsilon^3R_0^4$, and $\epsilon^2 R_0$, respectively. So if we take $\epsilon$ to be $o(R_0^{-4})$, we can make the first two terms smaller in absolute value than the last, negative term. 
\end{proof}

Let 
\begin{equation}
    u_f^n(x)=\int_{\mathbb{R}^3}\frac{\rho_f^n(y)}{|x-y|}d^3y
\end{equation}
for $f=e,p$. Then due to the uniform bound on the $L^{4/3}$ norm, we have that $\nabla u_f^n$ is bounded in $L^q$ for $3/2<q\leq 12/5$ and $u_f^n$ is bounded in $L^p$ for $3<p\leq 12$, see \cite{AB1} proposition 6. Note also that $u_f^n$ is radially symmetric. Since $\rho_f^n$ is uniformly bounded in the $L^{4/3}$ norm, we can apply Banach-Alaoglu and find a weakly convergent subsequence. From this subsequence, extract a further subsequence converging weakly in $L^{6/5}$. Let $(\rho_e, \rho_p)$ be the functions to which $(\rho_e^n, \rho_p^n)$ converge. We can extract yet a further subsequence so that $(u_e^n, u_p^n)$ converge weakly in $L^6$ and a.e. to $(u_e, u_p)$.


The goal is to prove that $(\rho_e, \rho_p)$ is a minimizer of $E$ over $W$ and $\int\rho_p=N_p$ and $\int\rho_e=N_e$. To see the former, we show that $E$ is weakly lower semicontinuous. First note that
\begin{equation}\label{LSCcompress}
   H(g):= \int g^{5/3}\leq \liminf \int g_n^{5/3}
\end{equation}
for the same reasons as given in \cite{Lions} (in that paper, instead of $g^{5/3}$ there is given a general $j(\rho)$ nonnegative, continuous, and convex). So we only need to focus on proving that the two other terms in the energy are lower semicontinuous. This result follows by proving that both $(u_p^n)$ and $(u_e^n)$ are relatively compact in the $(L^{6/5})^*=L^6$ topology and so some subsequence converges to $(u_e, u_p)$. Coupled with the weak $L^{6/5}$ convergence of $\rho_f^n$, we then have
\begin{equation}\label{LSCgrav}
\int_{\mathbb{R}^3}(m_p\rho_p^n(x)+m_e\rho_e^n(x))(m_pu_p^n(x)+m_eu_e^n(x))d^3x\rightarrow \int_{\mathbb{R}^3}(m_p\rho_p(x)+m_e\rho_e(x))(m_pu_p(x)+m_eu_e(x))d^3x,
\end{equation}
and 
\begin{equation}\label{LSCelec}
\int_{\mathbb{R}^3}(\rho_p^n(x)-\rho_e^n(x))(u_p^n(x)-u_e^n(x))d^3x\rightarrow \int_{\mathbb{R}^3}(\rho_p(x)-\rho_e(x))(u_p(x)-u_e(x))d^3x.
\end{equation}

To see that $(u_f^n)$ is relatively compact in $L^6$, we first apply Proposition II.1 from \cite{Lions} to obtain the bound
\begin{equation}
    |u_f^n(x)|\leq C(N,p,q)\left(||\nabla u_f^n||_{L^p}^{p'/q+p'}|| u_f^n||_{L^q}^{q/q+p'}\right)|x|^{-2p'/(q+p')} 
\end{equation}
where $p'$ is the H\"older conjugate of of $p$ and $C$ is a positive constant. This allows us to conclude that $u_f^n\rightarrow 0$ as $|x|\rightarrow \infty$ uniformly in $x$ and $n$. This is requirement v) of Compactness Lemma 2 from \cite{Strauss}. Since we already have a.e. convergence of $u_f^n$, it is straightforward to check that with $P^n(s)=s^6$ and $Q^n(s)=s^7+\chi_{B_1}(s)s^5$ where $\chi$ is an indicator function, the other four requirements are also satisfied. We can then apply the lemma to conclude that 
\begin{equation}
    \int_{\mathbb{R}^3}||u_f^n(x)|^6-|u_f(x)|^6|d^3x\rightarrow0
\end{equation}
This gives strong convergence of $(u_f^n)$ to $u_f$ in $L^6$.


Using (\ref{LSCcompress}), (\ref{LSCgrav}), and (\ref{LSCelec}), we can conclude that $E$ is weakly lower semicontinuous, and therefore 
\begin{equation}
    E(\rho_e, \rho_p)\leq \liminf E(\rho_e^n, \rho_p^n)
\end{equation}
So we have a minimizer $(\rho_e, \rho_p)$. However, we can only so far conclude that that $\int\rho_p\leq N_p$ and $\int\rho_e\leq N_e$; we need to prove these are actual equalities. 

We will use the following well known lemma, which can be found in, for example, \cite{AB1} as proposition 5:
\begin{lemma}\label{AB5}
Suppose $\rho\in L^1\cap L^p$. If $1<p\leq 3/2$, then $B\rho$ is in $L^r$ for $3<r<3p/(3-2p)$, and 
\begin{equation}
    ||B\rho||_r\leq c_0(||\rho||_1^b||\rho||_p^{1-b}+||\rho||_1^c||\rho||_p^{1-c})
\end{equation}
where $0<b<1$ and $0<c<1$.
\end{lemma}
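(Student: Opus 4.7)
The plan is to split the Newton kernel $1/|z|$ into a short range part and a long range part, each lying in a complementary Lebesgue class, and apply Young's convolution inequality to each piece. Fix a cutoff $R>0$ and write $1/|z| = K_1(z) + K_2(z)$ with $K_1(z) := |z|^{-1}\chi_{|z|<R}$ and $K_2(z) := |z|^{-1}\chi_{|z|\geq R}$. A one line radial integration gives $\|K_1\|_q = C_q R^{(3-q)/q}$ for any $q < 3$ and $\|K_2\|_s = C'_s R^{(3-s)/s}$ for any $s > 3$, where the constants depend only on the exponents.

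For the short range piece I would apply Young's inequality against $\rho \in L^p$, choosing the exponent $q$ by $1/q = 1 + 1/r - 1/p$. This yields $\|K_1 * \rho\|_r \leq \|K_1\|_q\|\rho\|_p$, and the admissibility condition $q < 3$ translates exactly to $r < 3p/(3-2p)$, which is the upper bound in the statement. For the long range piece use Young's against $\rho \in L^1$, which requires only $K_2 \in L^r$ and hence $r > 3$; this gives $\|K_2 * \rho\|_r \leq \|K_2\|_r \|\rho\|_1$. Combining, for $3 < r < 3p/(3-2p)$,
\begin{equation*}
\|B\rho\|_r \leq C_q R^{\alpha}\|\rho\|_p + C'_r R^{-\beta}\|\rho\|_1,
\end{equation*}
with $\alpha := (3-q)/q > 0$ and $\beta := (r-3)/r > 0$.

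The last step is to absorb the free parameter $R$ into a ratio of the two norms. Setting $R = (\|\rho\|_1/\|\rho\|_p)^{\gamma}$ for some small $\gamma > 0$ turns the preceding inequality into
\begin{equation*}
\|B\rho\|_r \leq C_q \|\rho\|_1^{\gamma\alpha}\|\rho\|_p^{1-\gamma\alpha} + C'_r \|\rho\|_1^{1-\gamma\beta}\|\rho\|_p^{\gamma\beta},
\end{equation*}
which is precisely the asserted form with $b = \gamma\alpha$ and $c = 1 - \gamma\beta$. Choosing $\gamma < \min(1/\alpha,1/\beta)$ then forces $b,c \in (0,1)$. The only real obstacle I anticipate is the exponent bookkeeping: one must check that the Young exponents lie in the admissible range and that $\gamma$ can be picked so both final powers stay strictly between $0$ and $1$. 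The underlying norm computations are just integrals of $|z|^{-q}$ on a ball or its complement, so I do not expect any deeper difficulty than this.
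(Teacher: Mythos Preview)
Your argument is correct and is essentially the standard proof of this estimate: split the Newton kernel at a scale $R$, apply Young's inequality with $\rho\in L^p$ to the near part and with $\rho\in L^1$ to the far part, then choose $R$ as a power of $\|\rho\|_1/\|\rho\|_p$. The exponent bookkeeping you flag is routine and works out exactly as you indicate.

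There is nothing to compare against in the paper itself: the lemma is stated there without proof and attributed to Auchmuty--Beals (Proposition~5 of \cite{AB1}), where the same kernel-splitting argument is carried out. One small remark on your write-up: rather than introducing an auxiliary parameter $\gamma$ and then constraining it, the cleaner choice is to optimize $R$ by balancing the two terms, i.e.\ take $R$ so that $R^{\alpha}\|\rho\|_p = R^{-\beta}\|\rho\|_1$, which gives $\gamma = 1/(\alpha+\beta)$ and automatically lands $b=\alpha/(\alpha+\beta)$ and $c=\alpha/(\alpha+\beta)$ in $(0,1)$ (in fact $b=c$ with this choice, collapsing the bound to a single interpolation term). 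Your version with a free $\gamma$ is of course also valid and matches the two-term form stated in the lemma.
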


\begin{lemma}\label{rightmass}
The minimizer $(\rho_e, \rho_p)$ satisfies $\int\rho_p= N_p$ and $\int\rho_e=N_e$.
\end{lemma}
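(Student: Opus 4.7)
The plan is to argue by contradiction: suppose $M_p := \int\rho_p < N_p$ (the case $M_e := \int\rho_e < N_e$ is symmetric) and exhibit an admissible perturbation that strictly lowers $E$ below $\mathcal{E}$. The perturbation mirrors the ``test particle from infinity'' device of Lemma \ref{Negenergy}: add a thin spherical shell of extra density, either $h_+ = \epsilon\chi_S$ to $\rho_p$ or $h_- = \epsilon\chi_S$ to $\rho_e$, where $S = \{R_0 \le |x| \le R_0+\eta\}$ with $R_0$ large and $\epsilon,\eta$ small. The admissibility constraint $\int h_\pm \leq N_f - M_f$ is easily arranged by taking $\epsilon$ small, and radial symmetry and nonnegativity are automatic.

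Next I compute the energy difference. By radial symmetry of $\rho_f$ and Newton's theorem, for $x$ at radius $r$,
\begin{equation*}
u_{\rho_f}(x) \;=\; \frac{1}{r}\int_{|y|<r}\rho_f\,d^3y \;+\; \int_{|y|>r}\frac{\rho_f(y)}{|y|}\,d^3y,
\end{equation*}
so $r\,u_{\rho_f}(x) \to M_f$ as $r\to\infty$ since $\rho_f \in L^1$. Inserting this into $E(\rho_e,\rho_p+h_+) - E(\rho_e,\rho_p)$ and mimicking (\ref{energydiff}), the dominant contribution for fixed $\eta$ and large $R_0$ is
\begin{equation*}
\Delta E_+ \;=\; 4\pi\epsilon\bigl(R_0\eta + \tfrac{\eta^2}{2}\bigr)\bigl[q^2(M_p-M_e) - Gm_p(m_pM_p + m_eM_e)\bigr] \;+\; o(\epsilon),
\end{equation*}
with the kinetic term contributing $O(\epsilon^{5/3})$ and the shell self-interaction $O(\epsilon^2)$, both subleading in $\epsilon$. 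Hence $\Delta E_+ < 0$ provided $\frac{M_e}{M_p} > \frac{1-Gm_p^2/q^2}{1+Gm_em_p/q^2}$; the analogous computation for $h_-$ gives $\Delta E_- < 0$ provided $\frac{M_e}{M_p} < \frac{1+Gm_em_p/q^2}{1-Gm_e^2/q^2}$.

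The proof then closes combinatorially. Since the lower bound in (\ref{5/3bounds}) is strictly smaller than the upper bound, at least one of the two strict inequalities on $M_e/M_p$ always holds, so at least one of $h_+$, $h_-$ decreases the energy. If $M_p < N_p$ and the ``add-positive'' inequality holds, the perturbation $h_+$ yields a contradiction. If instead only the ``add-negative'' inequality holds, I need $M_e < N_e$ to deploy $h_-$; but if $M_e = N_e$, then failure of the add-positive inequality gives $\frac{N_e}{M_p} \leq \frac{1-Gm_p^2/q^2}{1+Gm_em_p/q^2}$, while $M_p < N_p$ forces $\frac{N_e}{M_p} > \frac{N_e}{N_p}$, so $\frac{N_e}{N_p} < \frac{1-Gm_p^2/q^2}{1+Gm_em_p/q^2}$, contradicting (\ref{5/3bounds}). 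Thus $M_e < N_e$ and $h_-$ is admissible, giving the contradiction.

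The main obstacle I anticipate is turning the heuristic ``$r u_{\rho_f}(x) \to M_f$'' into a genuine $o(\epsilon)$ error estimate without assuming compact support of the minimizer. This amounts to showing that, as $R_0 \to \infty$, the tail $\int_{|y|>R_0}\rho_f(y)/|y|\,d^3y$ and the local $L^{5/3}$-mass $\int_S\rho_f^{5/3}$ both vanish, which follows from $\rho_f \in L^1 \cap L^{5/3}$ by dominated convergence. One first chooses $R_0$ large enough to force the interaction bracket to be bounded away from zero and to make all remainders small, then $\epsilon$ small enough for the linear-in-$\epsilon$ interaction term to dominate the $\epsilon^{5/3}$ kinetic and $\epsilon^2$ self-interaction terms; this yields $\Delta E_\pm < 0$ and completes the contradiction.
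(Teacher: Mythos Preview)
Your overall strategy is sound and in one respect cleaner than the paper's: you treat all deficit cases uniformly with the ``add a thin shell'' perturbation and close with a combinatorial argument, whereas the paper splits into two cases, using a dilation $\rho_f^\sigma(x)=\rho_f(x/\sigma^{1/3})$ and the sign of $\frac{dE}{d\sigma}\big|_{\sigma=1}$ when both totals fall short, and reserving the shell argument for the mixed case. Your combinatorics (at least one of the two brackets is strictly negative, and if the ``wrong'' one is the only negative one you derive a violation of (\ref{5/3bounds})) are correct.

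There is, however, a genuine gap in your kinetic estimate. You assert that the kinetic contribution of $h_+$ is $O(\epsilon^{5/3})$ and later propose to repair this by noting $\int_S\rho_p^{5/3}\to0$. Neither suffices once you allow $\rho_p\neq0$ on $S$. The change in kinetic energy is
\[
k_p\int_S\bigl[(\rho_p+\epsilon)^{5/3}-\rho_p^{5/3}\bigr]d^3x
\;=\;\tfrac{5}{3}k_p\,\epsilon\int_S\rho_p^{2/3}\,d^3x\;+\;O(\epsilon^{5/3}|S|),
\]
so to beat the main term $\sim\epsilon R_0\eta$ you must control $\int_S\rho_p^{2/3}$, not $\int_S\rho_p^{5/3}$. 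The latter can be tiny while the former is large (think $\rho_p$ small but spread over a large shell), and in any case a bound that is independent of $\epsilon$ cannot be absorbed by sending $\epsilon\to0$. The paper handles this by a pigeonhole step: since $\int\rho_p<\infty$, one can locate a radial set of total width $\eta$ beyond $R_0$ on which $\rho_p(r)\le r^{-3}$, and then $\int_S r^2\rho_p^{2/3}\,dr\le\eta$, which is dominated by the main term for $R_0$ large. An alternative that fits your write-up is the H\"older bound
\[
\int_S\rho_p^{2/3}\;\le\;\Bigl(\int_S\rho_p\Bigr)^{2/3}|S|^{1/3}\;=\;o(1)\cdot(R_0^2\eta)^{1/3}\;=\;o(R_0),
\]
using only $\rho_p\in L^1$, which again makes the linear-in-$\epsilon$ kinetic piece negligible next to $\epsilon R_0\eta$. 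With either fix inserted, the rest of your argument goes through.
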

\begin{proof}
First, assume that both $\int\rho_e< N_e$ and $\int \rho_p<N_p$. We can argue in a way similar to \cite{Lions}. Let $\rho_f^\sigma(x)=\rho_f(\frac{x}{\sigma^{1/3}})$. Then \begin{equation}
\int_{\mathbb{R}^3}\rho_f^\sigma(x)d^3x=\sigma\int_{\mathbb{R}^3}\rho(x)d^3x
\end{equation}
and 
\begin{align}
    E(\rho_e^\sigma, \rho_p^\sigma)&=\sigma\left(k_e\int_{\mathbb{R}^3}\rho^{5/3}_ed^3x+k_p\int_{\mathbb{R}^3}\rho^{5/3}_pd^3x\right)\\
    &+\frac{q^2\sigma^{5/3}}{2}\int_{\mathbb{R}^3}\int_{\mathbb{R}^3}\frac{(\rho_p(x)-\rho_e(x))(\rho_p(y)-\rho_e(y))}{|x-y|}d^3yd^3x \nonumber\\
    &-\frac{G\sigma^{5/3}}{2}\int_{\mathbb{R}^3}\int_{\mathbb{R}^3}\frac{(m_p\rho_p(x)+m_e\rho_e(x))(m_p\rho_p(y)+m_e\rho_e(y))}{|x-y|}d^3yd^3x \nonumber
\end{align}
Then we can compute 
\begin{equation}
    \frac{dE}{d\sigma}(\rho_e^\sigma, \rho_p^\sigma)|_{\sigma=1}=K(\rho_e, \rho_p)+\frac{5}{3}V(\rho_e,\rho_p)=E(\rho_e,\rho_p)+\frac{2}{3}V(\rho_e,\rho_p)
\end{equation}
where we have called the kinetic energy terms $K$ and the potential energy terms $V$, so that $E(\rho_e,\rho_p)=K(\rho_e,\rho_p)+V(\rho_e,\rho_p)$. But we know that $E(\rho_e,\rho_p)<0$, and since $K(\rho_e,\rho_p)>0$, we must then have $V(\rho_e,\rho_p)<0$. Therefore, 
\begin{equation}
    \frac{dE}{d\sigma}(\rho_e^\sigma, \rho_p^\sigma)|_{\sigma=1}<0,
\end{equation}
so we can increase $\sigma$ slightly so that we still have $\int \rho_e\leq N_e$ and $\int\rho_p\leq N_p$ but the energy is decreased. This contradicts $(\rho_e, \rho_p)$ being a minimizer.

This leaves us with the two cases such $\int\rho_e=N_e$ while $\int\rho_p<N_p$ and $\int\rho_e<N_e$ while $\int\rho_p=N_p$. These cases can be treated in the same way, so let us consider the former. If both densities have bounded support, then we are in the same situation as we considered in lemma (\ref{Negenergy}): adding a small sphere of positive charge outside of both supports will actually decrease the energy of the configuration. Then we contradict that $(\rho_e,\rho_p)$ is a minimizer. 

If either or both supports are unbounded, define $N_f(R)$ to be the number of particles of type $f$ within a sphere of radius $R$. Then the function $F(R)=q^2(N_p(R)-N_e(R))-Gm_p(m_pN_p(R)+m_eN_e(R))$ is continuous, and approaches some negative value as $R\rightarrow \infty$. Let us choose $R_0$ so that $F(R)<\beta<0$ for $R>R_0$. Now we add a spherical shell of positive charge of width $\eta$, letting $h$ be as in lemma \ref{Negenergy},  and compute the change in energy. We find that 
\begin{align}
    &E(\rho_e, \rho_p+h)-E(\rho_e, \rho_p)= k_p\int_{S}((\rho_p+\epsilon)^{5/3}-\rho_p^{5/3})d^3x\\
    +& \left[\frac{q^2}{2}-\frac{G}{2}\right]\epsilon^2 4\pi\left[4\pi R_0^3\eta^2+\frac{16\pi}{3}R_0^2\eta^3+\frac{8\pi}{3}R_0\eta^4+\frac{3\pi}{5}\eta^5\right]\nonumber\\
    +&4\pi\epsilon \int_{R_0}^{R_0+\eta}\left[q^2(N_p(r)-N_e(r))-Gm_p(m_pN_p(r)+m_eN_e(r))\right]rdr \nonumber\\
    +&4\pi\epsilon\int_{R_0}^{R_0+\eta}r^2\int_{|y|>R_0+r}\frac{q^2(\rho_p-\rho_e)(y)-m_pG(m_p\rho_p+m_e\rho_e)(y)}{|y|}d^3ydr \nonumber
\end{align}
The two differences from the situation we considered before are that now we must be concerned with the kinetic energy term, and the final term, corresponding to the interaction between the sphere of added particles and the configuration outside this sphere. We want to show that the third term on the right side is larger in absolute value than the other terms. By our assumption on the size of $R_0$, we can say that this term is less than $-4\pi\epsilon\eta R_0\beta $

To leading order in $\epsilon$, the kinetic energy term is $\pi k_p\frac{20}{3}\epsilon\int_{R_0}^{R_0+\eta}\rho_p^{2/3}(r)dr$. Because $\int \rho_p=N_p$, we must be able to find a set of total width in the radial direction of $\eta$ and with radial distance at least $R_0$ such that $\rho_p(r)\leq\frac{1}{r^3}$. We will assume for simplicity that this set is $S$, the spherical shell, but it may be that $S$ needs to be broken up into many unconnected shells. Then the kinetic energy term is less than $\pi k_p\frac{20}{3}\epsilon \frac{\eta}{R_0}$, clearly smaller than the third term for large enough $R_0$. 

The second term is smaller than the third by taking both $\epsilon$ and $\eta$ small. For the last term, since both $\rho_p$ and $\rho_e$ are integrable, one can choose $R_0$ large enough so that $\int_{|y|>R_0}q^2(\rho_p+\rho_e)(y)<\hat{\epsilon}$ for any chosen $\hat{\epsilon}$. Then we can say that the last term is bounded by $\frac{4\pi\epsilon}{3R_0}(3R_0\eta^2+3\eta R^2_0+\eta^3)\hat{\epsilon}$. The leading order term in $\epsilon$ is then $4\pi\epsilon\eta R_0\hat{\epsilon}$. Compare this to the leading order term for the negative third term, $4\pi\epsilon^2 R_0\beta$. Then we make $R_0$ at least large enough so that $\hat{\epsilon}<|\beta|$, and then choose $\epsilon$ so that the negative third term dominates the positive terms. 

This contradicts that we have found a minimizer, and we find that we have $\int\rho_e=N_e$ and $\int\rho_p=N_p$. Note that we have assumed continuity of the minimizer for simplicity. This will independently be proven below.
\end{proof}

This completes the proof of the theorem.

\subsection{The Special Relativistic Case}
The same proof also works for the special relativistic case with a few minor changes which we will discuss below. However, recall that in the special relativistic case we do not know that the solutions are decreasing, so our result is not as strong. Here we can redefine $W$ so that it contains functions in $L^1\cap L^{4/3}$ instead of $L^1\cap L^{5/3}$. We then have
\begin{theorem}\label{specialcase}
If 
\begin{equation}\label{SRbounds}
  \frac{1-\frac{Gm_p^2}{q^2}}{1+\frac{Gm_em_p}{q^2}}\leq\frac{N_e}{N_p}\leq\frac{1+\frac{Gm_em_p}{q^2}}{1-\frac{Gm_q^2}{q^2}},  
\end{equation}
and $[m_pN_p+m_eN_e]^{2/3}<\frac{\pi 2^{2/3}hc\left(\frac{3}{8\pi}\right)^{4/3}}{GKm_p^{4/3}}$, then there is a pair $(\rho_e,\rho_p)$ which minimizes $E^s$ in $W$.
\end{theorem}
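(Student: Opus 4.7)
The proof follows the structure of Theorem \ref{minexist}, with modifications dictated by the relativistic kinetic energy. The essentially new ingredient is that $E^S$ is not automatically bounded below, so one must first establish $\inf_W E^S > -\infty$ using the mass hypothesis. The pointwise bound $\sqrt{1+c_f\theta^{2/3}} \ge c_f^{1/2}\theta^{1/3}$ (with $c_f = (3/\pi)^{2/3}(h/2m_fc)^2$) inside the integrand of (\ref{relenergy}) yields the kinetic lower bound $\tfrac{3}{8}(3/\pi)^{1/3} hc \int(\rho_p^{4/3}+\rho_e^{4/3})$. For the gravitational double integral, the Hardy--Littlewood--Sobolev inequality combined with the interpolation $\|\rho\|_{6/5} \le \|\rho\|_1^{1/3}\|\rho\|_{4/3}^{2/3}$ and the convexity inequality $(a+b)^{4/3} \le 2^{1/3}(a^{4/3}+b^{4/3})$ gives an upper bound proportional to $G\,K\,m_p^{4/3}\,(m_pN_p+m_eN_e)^{2/3}\int(\rho_p^{4/3}+\rho_e^{4/3})$, where $K$ denotes the HLS constant. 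The hypothesis on $(m_pN_p+m_eN_e)^{2/3}$ is precisely the statement that the kinetic prefactor strictly exceeds the gravitational one, securing the uniform lower bound; this is the two-species analogue of the sub-Chandrasekhar regime.

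With boundedness below in hand, the analogue of Lemma \ref{Negenergy}, now read as $\mathcal{E}^S < m_ec^2 N_e + m_pc^2 N_p$, follows from the same spreading-plus-shell argument. The Taylor expansion $\sqrt{1+c_f\theta^{2/3}} = 1 + \tfrac{1}{2}c_f\theta^{2/3} + O(\theta^{4/3})$ shows that under the mass-preserving dilation $\rho_f^\lambda(x) = \lambda^{-3}\rho_f(x/\lambda)$ the excess kinetic energy beyond rest mass decays as $\lambda^{-2}$ while the Coulomb and gravitational terms decay as $\lambda^{-1}$, exactly as in the non-relativistic case. Condition (\ref{SRbounds}) then lets one add a thin positively or negatively charged spherical shell outside the configuration and produce an admissible $(\rho_e,\rho_p)$ with $E^S$ strictly below rest mass.

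A minimizing sequence $(\rho_e^n,\rho_p^n) \subset W$ is then bounded in $L^1\cap L^{4/3}$, so Banach--Alaoglu plus diagonal extraction produces weak limits $(\rho_e,\rho_p)$, with potentials $u_f^n\to u_f$ weakly in $L^6$ and almost everywhere. The kinetic integrand $\rho\mapsto\int_0^\rho\sqrt{1+c_f\theta^{2/3}}d\theta$ is convex and nonnegative in $\rho$, so the relativistic kinetic terms are weakly lower semicontinuous by the standard convex-functional argument used for (\ref{LSCcompress}). The Coulomb and gravitational terms are handled by the same compactness scheme as in the non-relativistic case: Proposition II.1 of \cite{Lions} gives uniform decay of $u_f^n$ at infinity using the $L^1\cap L^{4/3}$ bounds, and Compactness Lemma 2 of \cite{Strauss} upgrades a.e.\ convergence to strong $L^6$ convergence, which combined with weak $L^{6/5}$ convergence of $\rho_f^n$ (by interpolation between $L^1$ and $L^{4/3}$) lets one pass to the limit in the double integrals.

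The main obstacle will be the boundedness-below estimate, where extracting the exact constant in the hypothesis requires using the sharp HLS constant and carefully handling the two-species bookkeeping in $\|m_p\rho_p+m_e\rho_e\|_{4/3}$. The mass-saturation step analogous to Lemma \ref{rightmass} is also more delicate than before, since the identity $dE^S/d\sigma|_{\sigma=1} = KE^S + \tfrac{5}{3}V$ no longer has an automatically negative right-hand side from $\mathcal{E}^S < 0$; however, the shell-addition argument for the case of one strict and one saturated constraint carries over with only terms of order $\epsilon\,m_fc^2$ appearing in the analogue of (\ref{energydiff}), and these are absorbed by the same choice $\eta=\epsilon=o(R_0^{-4})$ used in the non-relativistic proof.
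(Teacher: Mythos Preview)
Your overall structure matches the paper's, but your handling of the rest mass term creates a gap in the mass-saturation step. You work with the kinetic energy in the form (\ref{relenergy}), which includes the rest mass contribution $m_fc^2\int\rho_f$; this is why you replace $\mathcal{E}^S<0$ by $\mathcal{E}^S<m_ec^2N_e+m_pc^2N_p$ and why you flag the scaling argument for the case $\int\rho_e<N_e$, $\int\rho_p<N_p$ as ``more delicate''. But you do not actually resolve that case, and your claim that the shell-addition argument absorbs the new $O(\epsilon\,m_fc^2)$ terms is incorrect: the added rest mass per shell scales like $\epsilon\eta R_0^2$ while the negative Coulomb/gravitational gain scales like $\epsilon\eta R_0$, so for large $R_0$ the rest mass dominates and the iteration cannot saturate (\ref{SRbounds}).

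The paper sidesteps this entirely by running the proof with the rest-mass-free kinetic energy (\ref{oldrelenergy}), i.e.\ with $g(z)=A(z)$ from (\ref{chandraformula}), which is nonnegative and behaves like $z^5$ near zero. With that choice the spreading argument gives infimum strictly below zero, and since the kinetic part is nonnegative one gets $V<0$ immediately; the scaling identity then has a negative derivative exactly as in Lemma~\ref{rightmass}, which is why the paper can simply write that the both-strict case ``is exactly the same''. Since the two kinetic forms differ by the linear functional $m_ec^2\int\rho_e+m_pc^2\int\rho_p$, once the minimizer of the rest-mass-free problem is shown to saturate both constraints it is also the constrained minimizer of the full $E^S$. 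Your route to the $L^{4/3}$ control via the pointwise bound $\sqrt{1+c_f\theta^{2/3}}\ge c_f^{1/2}\theta^{1/3}$ together with HLS is a clean alternative to the paper's asymptotic-and-splitting argument, and yields the same mass threshold.
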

The constant $K$ here is the same constant which appears in \cite{Lions} equation (5'). To save space, let $\alpha_f$ denote $\frac{h}{m_f c}\left(\frac{3}{8\pi}\right)^{1/3}$.
\begin{enumerate}
    \item In the proof that the energy of the minimizer is negative if (\ref{SRbounds}) is satisfied, the first step was to consider the $N_p=N_e$ case. The ``spreading out'' argument does not work as nicely since $g(z)$ is not a homogeneous polynomial. But if we take a $\rho_f$ which is bounded and again say that $\rho_f^\lambda(x)=\rho_f(x/\lambda)/\lambda^3$, we can say that, since $g(z)$ behaves like $z^5$ as $z\rightarrow 0$, we have 
    \begin{align}
        \int_{\mathbb{R}^3} g(\alpha_f (\rho_f^\lambda(x))^{1/3})d^3x&=\int_{\mathbb{R}^3} g\left(\frac{\alpha_f}{\lambda} \left(\rho_f\left(\frac{x}{\lambda}\right)\right)^{1/3}\right)d^3x\\
        &\leq \int_{\mathbb{R}^3} M\left(\frac{\alpha_f}{\lambda}\right)^{9/2}\rho_f\left(\frac{x}{\lambda}\right)^{3/2}d^3x\\
        &=\frac{1}{\lambda^{3/2}}\int_{\mathbb{R}^3} M\alpha_f^{9/2}\rho_f(x)^{3/2}d^3x
    \end{align}
    for large $\lambda$ and some $M$. It therefore decays in $\lambda$ faster than the gravitational energy, and we can find a negative energy state. 
    
    We can use the same idea with the second part of the argument: since the argument is that we can shrink the added density until the gravitational energy dominates, the kinetic energy term will again behave like $\epsilon^{5/3}$ for small $\epsilon$.
    
    \item The process of obtaining the uniform $L^{4/3}$ bounds on $\rho^n_f$ must be changed to add the extra condition from the statement of theorem \ref{specialcase}. Before, we took the bound on the gravitational energy, 
    \begin{equation}
        C_1=2^{1/3}GKm_p^{4/3}(m_pN_p+m_eN_e)^{2/3},
    \end{equation}
    and found a lower bound on $\int\rho^{5/3}$ involving $(C_1+\epsilon)\int\rho^{4/3}$. This was to bound the negative gravitational energy in terms of the positive kinetic energy. We were able to do this regardless of what the value of $C_1$ was because $\lim (\rho^{5/3})/\rho^{4/3}=\infty$. But this is no longer true with the special relativistic kinetic energy. Instead, we have (this asymptotic can be found in \cite{Chandra} chapter X)
    \begin{equation}
        \lim_{\rho\rightarrow\infty}\frac{g\left(\alpha_f \rho_f^{1/3}\right)}{\rho^{4/3}}=6\left[\left(\frac{h}{ m_f c}\right)^4\left(\frac{3}{8\pi}\right)^{4/3}\right],
    \end{equation}
    so if we have 
    \begin{equation}\label{massbounds}
        (m_pN_p+m_eN_e)^{2/3}<\frac{\pi2^{2/3}h c\left(\frac{3}{8\pi}\right)^{4/3}}{(GKm_p^{4/3})}
    \end{equation}
    Then we can apply the same procedure as in the nonrelativistic case to bound the $||\rho_f||_{L^{4/3}}$ norms. 
    \item Because $g(z)\geq 0$ for $z\geq 0$, we can again apply Fatou's lemma to conclude that 
    \begin{equation}
    \int g(\alpha_f \rho_f^{1/3})\leq \liminf \int g(\alpha_f\rho^n_f)^{1/3})
    \end{equation}
    \item For proving that $\int\rho_p=N_p$ and $\int\rho_e=N_e$, the case that $\int\rho_p<N_p$ and $\int\rho_e<N_e$ is exactly the same. The case that either $\int\rho_p<N_p$ while $\int\rho_e=N_e$ or $\int\rho_p=N_p$ and $\int\rho_e<N_e$ can be adapted using the same arguments as given by item 1. on this list.
    
    \end{enumerate}

\subsection{Critical mass estimate}
In the special relativistic case, it is not too difficult to prove something of a converse to Theorem
\ref{specialcase}.  This is the existence of a mass limit analogous to the Chandrasekhar limit found in the effective singular density case \cite{Chandra}. 

Fix $(N_e, N_p)$. If we take $\rho_f=C_f\chi_{B_R}$ for $C_f=\frac{3N_f}{4\pi R^3}=:\beta_f^3/R^3$, then we can get an exact expression for the energy. The electric and gravitational energies simplify to $(3/10R)(q^2-GM^2)$, where $Q$ is the total charge of the configuration and $M$ is the total mass of the configuration. The kinetic energies we compute as the unappealing 
\begin{equation}
    \frac{4}{3}\pi R^3\sum_f\frac{\pi m^4_fc^5}{3h^3}\left[8\alpha_f^3\frac{\beta_f^3}{R^3}\left(\left(\alpha_f^2\frac{\beta_f^2}{R^2}+1\right)^{1/2}-1\right)-\alpha_f\frac{\beta_f}{R}\left(2\alpha_f^2\frac{\beta_f^2}{R^2}-3\right)\left(\alpha_f^2\frac{\beta_f^2}{R^2}+1\right)^{1/2}-\sinh^{-1}\left(\alpha_f\frac{\beta_f}{R}\right)\right].
\end{equation}
It is however, easy to see that this term behaves like $K/R$ for some constant $K$ when $R$ is very small, see again \cite{Chandra} chapter X for the asymptotic expression. Therefore, if we let $R\rightarrow0$, we see that the potential and kinetic energies are proportional. By using that $\lim_{z\rightarrow\infty} A(z)\sim 6z^4$, we can compute that 
\begin{equation}
K=\frac{3^{2/3}}{2^{10/3}}\frac{hc}{\pi^{2/3}}(N_e^{4/3}+N_p^{4/3}).
\end{equation}
By comparing $K$ to $(3/10)(q^2-GM^2)$, we can see that, if the gravitational potential dominates the electric and kinetic energies, letting $R\rightarrow 0$ gives us a sequence of configurations with energy unbounded below. Then for such combinations of $N_e$ and $N_p$, no minimizer could exist. This result is analogous to the Chandrasekhar Mass of the single density model:
\begin{equation}
    -\left(\frac{hc}{G}\right)^{3/2}\frac{2^{7/2}}{3^{1/2}\pi}\left(\xi^2\frac{d\theta_3}{d\xi}\right)_{\xi=\xi_3}\frac{1}{(\mu_eH)^2} 
\end{equation}
where $\theta_3$ is the solution to the normalized Lane-Emden equation of index 3, $\xi_3$ is the first zero of this solution, $\mu_e$ is the average molecular weight per electron, and $H$ is the mass of the hydrogen atom. The last fraction in particular shows how this value depends on the composition of the star; in our case that would correspond to depending on $N_p/N_e$. In fact, if we write $N_p/N_e=\frak{r}$, we can compute that the energy of configurations with this ratio is unbounded below if 
\begin{equation}\label{crudeestimate}
   \left(\frac{hc}{G}\right)^{3/2}\frac{5^{3/2}}{2^{7/2}3^{1/2}\pi}\frac{(\frak{r}^{4/3}+1)^{3/2}}{\left[(m_p\frak{r}+m_e)^2-\frac{C}{G}q^2(\frak{r}-1)^2\right]^{3/2}}<N_e
\end{equation}
This tells us that for each admissible ratio $N_e/N_p$, like in the Chandrasekhar case, there is some upper bound on the total mass of the configuration, a crude estimate of which is given by the left hand side of (\ref{crudeestimate}) multiplied by $(\frak{r}m_p+m_e)$.
\section{Minimizers of the Energy Functional with Compact Support}\label{compactexistence}

This section presents a proof of the existence of radially symmetric minimizers of (\ref{functional}) and its special relativistic version. It is a use of the general technique outlined in sections 5 and 6 of \cite{AB1} which prove the existence of minimizers of functionals of the form (\ref{rotatingFunctional}), subject to certain constraints on $A$, $\rho$, and $J$. However, there is an important difference in the result of this section as compared with the previous. In the process of proving existence, we also prove that there is a minimizer with compact support. This comes at the cost of losing the equalities in (\ref{5/3bounds}); that is, the technique cannot show that when the ratio $N_e/N_p$ equals either of the bounds in (\ref{5/3bounds}) a compactly supported minimizer exists. This is because the unique minimizer does not have compact support when $N_p/N_e$ saturates either of the bounds in ($\ref{5/3bounds}$). 

For $R>0$ and fixed $(N_e, N_p)$, let $W_R$ be the set of pairs $(\rho_e, \rho_p)$ such that $\int \rho_f=N_f$, $\rho_f\geq0$, $\rho_f$ is radially symmetric, $\rho_f=0$ outside the ball of radius $R$, and such that $\rho_f\leq R$; note that unlike in the previous section, these functions integrate to $N_f$ exactly. This set will be nonempty for large enough $R$. Define $W$ to be as $W_R$ without the $R$ related constraints. 
\begin{lemma}\label{compactminimizer}
For $R$ large enough, there is a pair $(\rho^R_e, \rho^R_p)\in W_R$ which minimizes $E_{(N_e, N_p)}$ on $W_R$.
\end{lemma}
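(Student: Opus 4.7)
The plan is the direct method of the calculus of variations. First, I would check that $W_R$ is nonempty for $R$ large — the uniform density $\rho_f \equiv 3N_f/(4\pi R^3)$ on $B_R$ works once $R^4 \geq 3\max(N_e,N_p)/(4\pi)$ — and that $E$ is bounded below on $W_R$, since the kinetic and electric terms are nonnegative while the attractive gravitational term is controlled by a constant depending only on $N_e$, $N_p$, and $R$, all mass being confined to $B_R$. Given a minimizing sequence $(\rho_e^n,\rho_p^n)\in W_R$, the pointwise bound $\rho_f^n\le R$ together with compact support yields uniform $L^p$ bounds for every $p\in[1,\infty]$, so along a subsequence $\rho_f^n\rightharpoonup \rho_f$ weakly in $L^2(\mathbb{R}^3)$ (and weak-$*$ in $L^\infty$) for some pair $(\rho_e,\rho_p)$.

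Next I would verify that $(\rho_e,\rho_p)\in W_R$. The conditions defining $W_R$ — pointwise bound $0\le\rho_f\le R$, support in $B_R$, radial symmetry, and the equality $\int\rho_f=N_f$ — are each convex or affine and strongly closed, hence weakly closed in $L^2$; radial symmetry in particular passes to the weak limit because rotations are $L^2$-isometries. It remains to show that $E$ is weakly lower semi-continuous along the sequence. The kinetic terms $\int\rho_f^{5/3}$ are convex and strongly continuous on $L^{5/3}$, hence weakly l.s.c., and the nonnegative electric term, being a convex quadratic form in $(\rho_p,\rho_e)$, is also weakly l.s.c.\ by the same reasoning.

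The main obstacle is the \emph{negative} gravitational term: for $E$ itself to be weakly l.s.c.\ it is not enough that this term be lower semi-continuous — one actually needs the quadratic form $D(m_p\rho_p+m_e\rho_e,\,m_p\rho_p+m_e\rho_e)$ to be weakly \emph{continuous} on the sequence. To establish this, I would exploit the fact that all masses are confined to $B_R$. The Newtonian potentials $u^n := B(m_p\rho_p^n+m_e\rho_e^n)$ satisfy $-\Delta u^n = 4\pi(m_p\rho_p^n+m_e\rho_e^n)$ with right-hand sides uniformly bounded in $L^2$ and supported in $B_R$; by standard elliptic regularity $u^n$ is uniformly bounded in $H^1(B_{2R})$. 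Rellich-Kondrachov then gives strong convergence $u^n\to u$ in $L^2(B_{2R})$, and pairing the strongly convergent $u^n$ with the weakly convergent source (still supported in $B_R$) yields $\int(m_p\rho_p^n+m_e\rho_e^n)\,u^n \to \int(m_p\rho_p+m_e\rho_e)\,u$. Applying the identical argument to the electric integral shows both Coulomb terms are in fact weakly continuous on $W_R$. Combined with l.s.c.\ of the kinetic pieces, this gives $E(\rho_e,\rho_p)\le \liminf_n E(\rho_e^n,\rho_p^n) = \inf_{W_R} E$, so $(\rho_e^R,\rho_p^R):=(\rho_e,\rho_p)$ is the required minimizer.
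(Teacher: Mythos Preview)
Your proof is correct and follows essentially the same strategy as the paper: the direct method, with weak lower semicontinuity of the kinetic terms and weak \emph{continuity} of both Coulomb terms via compactness of the Newtonian potential operator on the bounded domain $B_R$. The only minor difference is the compactness mechanism---the paper invokes compactness of $B:L^{5/3}(B_R)\to L^\infty(B_R)$ directly from Sobolev embedding, whereas you route through elliptic regularity and Rellich--Kondrachov in $H^1\hookrightarrow L^2$; either works.
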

\begin{proof}
It is straightforward to see that $W_R$ is convex, and, fixing $1<p<\infty$, that $W_R$ is a closed and bounded subset of $L^p\times L^p$. Since any bounded and closed subset of $L^p\times L^p$ is weakly compact by Banach-Alaoglu, we show that $E_{(N_e, N_p)}$ is lower semicontinuous in the weak topology for $p=\frac{5}{3}$. 

Since we are considering functions on a bounded set, it is clear that if $\rho_{e,n}$ converges weakly in $L^{5/3}$ to $\rho_e$, we have 
\begin{equation}
    \int_{\mathbb{R}^3}\rho^{5/3}_{e,n}d^3x\rightarrow \int_{\mathbb{R}^3}\rho^{5/3}_{e}d^3x.
\end{equation}
The same is true for the second term in the functional.

For the third term, note that the operator $$B:\rho\rightarrow \int_{\mathbb{R}^3}\frac{\rho(y)}{|x-y|}d^3y$$ is a compact operator from $L^{5/3}$ to $L^\infty$, by Sobolev embedding. Suppose that $\{(\rho_{e,n}, \rho_{p,n})\}$ converges weakly to $(\rho_e,\rho_p)$. The sequence $\{\rho_{p,n}-\rho_{e,n}\}$ is thus bounded. Therefore $B(\rho_{p,n}-\rho_{e,n})\rightarrow B(\rho_{p}-\rho_{e})$ in $L^\infty$. Then we can say
\begin{align}
    &\int_{\mathbb{R^3}}|(\rho_{p,n}-\rho_{e,n})(x)B(\rho_{p,n}-\rho_{e,n})(x)-(\rho_{p}-\rho_{e})(x)B(\rho_{p}-\rho_{e})(x)|d^3x \nonumber \\
    &\leq \int_{\mathbb{R^3}} |\rho_{p,n}(x)-\rho_{e,n}(x)||B(\rho_{p,n}-\rho_{e,n})(x)-B(\rho_{p}-\rho_{e})(x)|d^3x \\
    &+\int_{\mathbb{R^3}} |(\rho_{p,n}(x)-\rho_{e,n}(x))-(\rho_{p}-\rho_{e})(x))||B(\rho_{p}-\rho_{e})(x)|d^3x \nonumber\\
    &\rightarrow 0 \nonumber
\end{align}
So the third and also fourth terms of the functional are even continuous in the weak topology. 

We may therefore conclude that if $\{(\rho_{e,n}, \rho_{p,n})\}$ is a minimizing sequence in $W_R$ which converges weakly to $(\rho_{e}, \rho_{p})$, $$E_{(N_e, N_p)}(\rho_{e}, \rho_{p})\leq \liminf E_{(N_e, N_p)}(\rho_{e, n}, \rho_{p,n})$$ so $(\rho_e,\rho_p)$ is a minimizer, which we now call $(\rho^R_e,\rho^R_p)$.
\end{proof}

The next step, again following \cite{AB1}, is to show that there is some $\hat{R}$ such that for $R\geq \hat{R}$, the minimizer in $W_R$ lies in $W_{\hat{R}}$. Then a limiting argument can be applied to show this function will be the minimizer when we consider functions on the entire space. To prove this, we use (\ref{el1}) and (\ref{el2}), but restricted to $W_R$, to say that 
\begin{equation}\label{AB191}
    \frac{5}{3}k_p(\rho_p^R)^{2/3}(x)= -q^2\int_{\mathbb{R}^3}\frac{(\rho_p^R-\rho_e^R)(y)}{|x-y|}d^3y+Gm_p \int_{\mathbb{R}^3}\frac{(m_p\rho_p^R+m_e\rho_e^R)(y)}{|x-y|}d^3y+\lambda^R_p
\end{equation}
where $\rho_p^R>0$, and 
\begin{equation}\label{AB192}
    \frac{5}{3}k_e(\rho_e^R)^{2/3}(x)=  q^2\int_{\mathbb{R}^3}\frac{(\rho_p^R-\rho_e^R)(y)}{|x-y|}d^3y+Gm_e \int_{\mathbb{R}^3}\frac{(m_p\rho_p^R+m_e\rho_e^R)(y)}{|x-y|}d^3y+\lambda^R_e
\end{equation}
where $\rho_e^R>0$. We will show that, under certain conditions, the right sides of (\ref{AB191}) and (\ref{AB192}) become strictly negative for large $R$, so we must have that $\rho_{e}^R$ and $\rho_{p}^R$ vanish for large $R$. 

We will also need that, on the sets where $\rho_p^R=0$ and $\rho_e^R=0$, we have 
\begin{equation}\label{minimizerInequality1}
    \lambda_p^R\leq q^2\int_{\mathbb{R}^3}\frac{(\rho_p^R-\rho_e^R)(y)}{|x-y|}d^3y-Gm_p \int_{\mathbb{R}^3}\frac{(m_p\rho_p^R+m_e\rho_e^R)(y)}{|x-y|}d^3y
\end{equation}
and
\begin{equation}\label{minimizerInequality2}
    \lambda_e^R\leq -q^2\int_{\mathbb{R}^3}\frac{(\rho_p^R-\rho_e^R)(y)}{|x-y|}d^3y-Gm_e \int_{\mathbb{R}^3}\frac{(m_p\rho_p^R+m_e\rho_e^R)(y)}{|x-y|}d^3y,
\end{equation}
respectively. See \cite{AB1} for details.

We have 
\begin{theorem}
If 
\begin{equation}
  \frac{1-\frac{Gm_p^2}{q^2}}{1+\frac{Gm_em_p}{q^2}}<\frac{N_e}{N_p}<\frac{1+\frac{Gm_em_p}{q^2}}{1-\frac{Gm_q^2}{q^2}},  
\end{equation}
 there is a pair $(\rho_e, \rho_p)$ which minimizes (\ref{functional}) in $W$. Moreover, both $\rho_e$ and $\rho_p$ have compact support. 
\end{theorem}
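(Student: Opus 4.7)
The plan is to take the $W_R$-minimizers $(\rho_e^R, \rho_p^R)$ produced by Lemma~\ref{compactminimizer} and show that, for every $R$ sufficiently large, the supports of both $\rho_e^R$ and $\rho_p^R$ lie in a common ball $B_{\hat R}$ whose radius is independent of $R$. Once this uniform support bound is in hand, a standard compactness and lower-semicontinuity argument---weak compactness of bounded sequences in $L^{5/3}(B_{\hat R})$, weak lower semicontinuity of $E$ as established in the proof of Lemma~\ref{compactminimizer}, and an approximation of an arbitrary pair in $W$ by a compactly supported element of some $W_R$---produces a weak limit $(\rho_e,\rho_p)$ with compact support that minimizes $E$ over all of $W$.

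The main work is the uniform support bound. I would use the Euler--Lagrange equation (\ref{AB191}), which holds where $\rho_p^R > 0$, together with the complementary inequality (\ref{minimizerInequality1}) where $\rho_p^R = 0$; combined, they tell us that the right-hand side of (\ref{AB191}) is nonpositive off the support and nonnegative on it, so it suffices to show it is \emph{strictly} negative for $|x|$ outside some $R$-independent ball. For such $|x|$ outside both supports, Newton's theorem applied to the radial densities gives $B\rho_f^R(x) = N_f/|x|$, so the right-hand side reduces to
\begin{equation*}
\frac{C_p}{|x|} + \lambda_p^R, \qquad C_p := -q^2(N_p - N_e) + Gm_p(m_pN_p + m_eN_e),
\end{equation*}
and the \emph{strict} form of the left half of (\ref{5/3bounds}) is precisely the condition that makes $C_p > 0$. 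Therefore, if $\lambda_p^R$ admits a uniform bound $\lambda_p^R \leq -c < 0$, then for $|x| > C_p/c$ the right-hand side is strictly negative, and $\rho_p^R$ vanishes there. The analogous statement for $\rho_e^R$ uses (\ref{AB192}), (\ref{minimizerInequality2}), the strict form of the right half of (\ref{5/3bounds}), and the coefficient $C_e := q^2(N_p - N_e) + Gm_e(m_pN_p + m_eN_e) > 0$.

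The principal obstacle is therefore establishing a uniform strictly negative upper bound on the two Lagrange multipliers. One piece of information comes from multiplying (\ref{AB191}) by $\rho_p^R$, (\ref{AB192}) by $\rho_e^R$, integrating, and summing, which yields the virial-type identity
\begin{equation*}
\lambda_p^R N_p + \lambda_e^R N_e = \tfrac{5}{3} K(\rho_e^R, \rho_p^R) + 2 V(\rho_e^R, \rho_p^R) = -\tfrac{1}{3} K(\rho_e^R, \rho_p^R) + 2 E(\rho_e^R, \rho_p^R),
\end{equation*}
where $K$ and $V = E - K$ denote the kinetic and potential parts of $E$. Since Lemma~\ref{Negenergy}, together with the trivial bound $\min_{W_R} E \leq E(\tilde\rho_e,\tilde\rho_p) < 0$ for any fixed compactly supported admissible pair $(\tilde\rho_e,\tilde\rho_p)$ and all large $R$, gives a uniform negative upper bound on $E(\rho_e^R,\rho_p^R)$, and since $K > 0$, this weighted sum of multipliers is uniformly strictly negative. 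Separating the sum into individual bounds on $\lambda_p^R$ and $\lambda_e^R$ is the delicate step; I would obtain a second relation by testing minimality against a mass-preserving dilation $\rho_p^R(x) \mapsto \sigma^3 \rho_p^R(\sigma x)$ of the proton density alone (and analogously for the electron), producing a second linear combination of the two multipliers which, together with the sum identity and the positivity of $C_p$ and $C_e$, should pin down each multiplier as strictly negative uniformly in $R$.
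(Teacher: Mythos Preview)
Your outline has two genuine gaps.

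First, the Newton's theorem step is circular as written. You compute the right-hand side of (\ref{AB191}) as $C_p/|x|+\lambda_p^R$ only for $|x|$ \emph{outside both supports}; but the goal is precisely to bound those supports. If $\rho_p^R(x)>0$ at some large $|x|$, then $x$ is inside the proton support, and the potential involves the partial masses $N_f(|x|)$ and an outer shell integral, not the full $N_f$; these satisfy no obvious a~priori inequality coming from (\ref{5/3bounds}). One can partially rescue this by evaluating at the outermost edge of the \emph{larger} of the two supports, but then you only control whichever species extends farthest, and you do not know in advance which one that is.

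Second, and more seriously, the proposed route to \emph{separate} uniform negative bounds on $\lambda_p^R$ and $\lambda_e^R$ is not convincing. The virial identity you write down correctly bounds the weighted sum $N_p\lambda_p^R+N_e\lambda_e^R$. But the single-species dilation $\rho_p^R\mapsto\sigma^3\rho_p^R(\sigma\,\cdot)$ does not produce a clean second linear relation: the cross terms $\iint \rho_p^\sigma(x)\rho_e(y)/|x-y|$ in both the electric and gravitational energies do not scale homogeneously in $\sigma$, so the first variation at $\sigma=1$ involves integrals like $\int (x\cdot\nabla\rho_p)\,B\rho_e$ that are not expressible in terms of $\lambda_p^R,\lambda_e^R$ and already-controlled quantities. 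There is no evident way to close the system.

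The paper avoids both difficulties by a different mechanism. It \emph{adds} (\ref{AB191}) and (\ref{AB192}), which cancels the electric potential entirely and leaves only the gravitational potential $G(m_e+m_p)B(m_p\rho_p^R+m_e\rho_e^R)$ on the right. After establishing uniform $L^\infty$ bounds on the densities (via $L^{4/3}$ bounds and bootstrapping through the Euler--Lagrange equations) and a mass-concentration lemma, one shows this gravitational potential decays uniformly in $|x|$ and that the \emph{sum} $\lambda_e^R+\lambda_p^R$ is uniformly strictly negative. Hence for $|x|$ beyond some fixed $\alpha$ the summed right-hand side is negative, forcing \emph{at least one} density to vanish there. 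A contiguity lemma for minimizers ensures it is the same species throughout $\{|x|>\alpha\}$; only then does the strict inequality in (\ref{5/3bounds}) enter, to bound the remaining species' multiplier individually and force it to vanish as well. Your proposal skips the $L^\infty$ bounds, the potential-decay estimate, and the contiguity argument, each of which is needed.
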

This theorem is proved using several lemmas.
\begin{lemma}\label{43bounds}
There is a constant $k_0$ such that $||\rho_e^R||_{4/3},||\rho_p^R||_{4/3}\leq k_0$ for $R\geq R_0$.
\end{lemma}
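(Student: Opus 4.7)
The plan is to exploit the minimality of $(\rho_e^R,\rho_p^R)$ in $W_R$ to obtain a uniform-in-$R$ upper bound on the energy, and then convert this into the desired norm bound by absorbing the only negative piece, the gravitational self-interaction, into the positive kinetic terms. Concretely, I would first fix a smooth, bounded, compactly supported, spherically symmetric pair $(\varphi_e,\varphi_p)$ with $\int\varphi_f=N_f$. For every $R$ larger than the support radius of the $\varphi_f$ and the larger of their sup-norms, this pair lies in $W_R$, so by minimality $E(\rho_e^R,\rho_p^R)\leq E(\varphi_e,\varphi_p)=:E_0$, a constant independent of $R$. Since the electric self-energy is nonnegative (Thm.~II.6 of \cite{LS}), dropping it yields
\[
k_e\|\rho_e^R\|_{5/3}^{5/3}+k_p\|\rho_p^R\|_{5/3}^{5/3}\leq E_0+\tfrac{G}{2}\iint\frac{\rho_{\mathrm{tot}}^R(x)\rho_{\mathrm{tot}}^R(y)}{|x-y|}\,d^3y\,d^3x,
\]
where $\rho_{\mathrm{tot}}^R:=m_p\rho_p^R+m_e\rho_e^R$.

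Second, I would estimate the double integral by Hardy--Littlewood--Sobolev, namely $\iint\rho_{\mathrm{tot}}^R(x)\rho_{\mathrm{tot}}^R(y)/|x-y|\leq K\|\rho_{\mathrm{tot}}^R\|_{6/5}^2$, and convert the $L^{6/5}$ norm via the interpolation $\|\rho\|_{6/5}\leq\|\rho\|_1^{1/3}\|\rho\|_{4/3}^{2/3}$. Using $\|\rho_{\mathrm{tot}}^R\|_1=m_pN_p+m_eN_e$ (a fixed constant), this controls the gravitational term by $C_1(\|\rho_p^R\|_{4/3}^{4/3}+\|\rho_e^R\|_{4/3}^{4/3})$ --- the same constant $C_1=2^{1/3}GKm_p^{4/3}(m_pN_p+m_eN_e)^{2/3}$ that already appears in item 2 of the proof of Theorem \ref{specialcase}. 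A further interpolation $\|\rho_f^R\|_{4/3}^{4/3}\leq N_f^{1/2}\|\rho_f^R\|_{5/3}^{5/6}$, substituted into the energy inequality, produces something of the form
\[
k_e\|\rho_e^R\|_{5/3}^{5/3}+k_p\|\rho_p^R\|_{5/3}^{5/3}\leq E_0+C_2\bigl(\|\rho_p^R\|_{5/3}^{5/6}+\|\rho_e^R\|_{5/3}^{5/6}\bigr).
\]
Young's inequality (applicable because $5/6<5/3$) absorbs the right-hand side into the left with a bounded remainder, giving uniform $L^{5/3}$ bounds. One last interpolation, $\|\rho_f^R\|_{4/3}\leq N_f^{3/8}\|\rho_f^R\|_{5/3}^{5/8}$, produces the claimed constant $k_0$.

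There is no serious obstacle: the argument follows the same absorption pattern as in the one-species Thomas--Fermi problem of \cite{Lions} and closely mirrors the $L^{4/3}$ bookkeeping already sketched by the author in item 2 of the special relativistic discussion. The only requirement is that the test pair $(\varphi_e,\varphi_p)$ and the resulting upper bound $E_0$ be fixed once and for all; this in turn fixes the threshold $R_0$ and makes every subsequent constant $R$-independent. (Note that since the strict inequalities in $(\ref{5/3bounds})$ are not used in the upper-bound step, Lemma \ref{Negenergy} additionally lets us take $E_0<0$, which is convenient but not needed here.)
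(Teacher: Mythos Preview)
Your argument is correct and follows the same overall strategy as the paper: use minimality over $W_R$ to get a uniform energy ceiling, drop the nonnegative Coulomb term, bound the gravitational self-energy by the $L^{4/3}$ norms of the densities (the paper invokes Proposition~6 of \cite{AB1}, which is exactly your HLS $+$ interpolation step and yields the same constant $C_1$), and then absorb into the kinetic $L^{5/3}$ pieces.

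The only real difference is in how that last absorption is carried out. The paper proceeds by contradiction: it splits into the three cases (both norms bounded, one bounded and one unbounded, both unbounded) and, in the bad cases, uses the pointwise inequality $k_f s^{5/3}\geq(\hat c+\epsilon)s^{4/3}$ for large $s$ to force a contradiction between the two kinetic inequalities. You instead interpolate $\|\rho_f\|_{4/3}^{4/3}\leq N_f^{1/2}\|\rho_f\|_{5/3}^{5/6}$ and close with Young's inequality. Your route is shorter, avoids the case analysis entirely, and yields the $L^{5/3}$ bound directly (with the $L^{4/3}$ bound following by one more interpolation), while the paper's formulation makes it slightly more transparent why the nonrelativistic exponent $5/3>4/3$ is what makes the absorption possible for \emph{any} $C_1$---a point that becomes relevant later when the relativistic kinetic energy only grows like $\rho^{4/3}$ and a mass restriction is needed.
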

\begin{proof}
First note that $E(\rho_e^R, \rho_p^R)\leq E(\rho_e^{R_0}, \rho_p^{R_0}) $ when $R\geq R_0$.
From (\ref{functional}), we get two inequalities
\begin{equation}\label{lemma41}
    k_e\int_{\mathbb{R}^3}(\rho_e^R)^{5/3}(x)d^3x\leq \frac{G}{2}\int_{\mathbb{R}^3}\int_{\mathbb{R}^3}\frac{(m_p\rho^R_p(x)+m_e\rho^R_e(x))(m_p\rho^R_p(y)+m_e\rho^R_e(y))}{|x-y|}d^3yd^3x+E(\rho_e^{R_0}, \rho_p^{R_0})
\end{equation}
and 
\begin{equation}\label{lemma42}
    k_p\int_{\mathbb{R}^3}(\rho_p^R)^{5/3}(x)d^3x\leq \frac{G}{2}\int_{\mathbb{R}^3}\int_{\mathbb{R}^3}\frac{(m_p\rho^R_p(x)+m_e\rho^R_e(x))(m_p\rho^R_p(y)+m_e\rho^R_e(y))}{|x-y|}d^3yd^3x+E(\rho_e^{R_0}, \rho_p^{R_0})
\end{equation}
We apply proposition 6 of \cite{AB1} to conclude 
\begin{align}
    &\frac{G}{2}\int_{\mathbb{R}^3}\int_{\mathbb{R}^3}\frac{(m_p\rho^R_p(x)+m_e\rho^R_e(x))(m_p\rho^R_p(y)+m_e\rho^R_e(y))}{|x-y|}d^3yd^3x \nonumber\\
    &\leq \hat{c}\left(\int_{\mathbb{R^3}}|m_p\rho^R_p(x)+m_e\rho^R_e(x)|^{4/3}d^3x\right)(m_eN_e+m_pN_p)^{2/3}\\
    &\leq \hat{c}||\rho_e^R||^{4/3}_{4/3}+\hat{c}||\rho_p^R||^{4/3}_{4/3}
\end{align}
where we have absorbed constants into $\hat{c}$. There are three cases to consider. In the first, both $||\rho_e^R||_{4/3}$ and $||\rho_p^R||_{4/3}$ are bounded as $R\rightarrow\infty$, so there is nothing to prove. In the second, one of $||\rho_e^R||_{4/3}$ or $||\rho_p^R||_{4/3}$ is bounded while the other is unbounded. In the third case, both are unbounded.

Let us consider the second case, and suppose $||\rho_e^R||_{4/3}$ is unbounded. For any $\hat{c}$, there is some value $s_\epsilon$ such that $$k_es^{5/3}\geq (\hat{c}+2\epsilon)s^{4/3}$$ when $s>s_\epsilon$. Therefore,
\begin{align}
    \int_{\mathbb{R}^3}|\rho_e^R(x)|^{4/3}d^3x&=\int_{\rho_e^R<s_\epsilon}|\rho_e^R(x)|^{4/3}d^3x+\int_{\rho_e^R>s_\epsilon}|\rho_e^R(x)|^{4/3}d^3x\\
    &\leq N_es_\epsilon^{1/3}+(\hat{c}+2\epsilon)^{-1}k_e\int_{\mathbb{R}^3}|\rho_e^R(x)|^{5/3}d^3x
\end{align}
For $||\rho_e^R||_{4/3}^{4/3}$ sufficiently large, this implies (note that $2\epsilon$ is replaced with $\epsilon$)
\begin{equation}\label{lemma43}
(\hat{c}+\epsilon)\int_{\mathbb{R}^3}|\rho_e^R(x)|^{4/3}d^3x\leq k_e\int_{\mathbb{R}^3}|\rho_e^R(x)|^{5/3}d^3x
\end{equation}
Plugging (\ref{lemma43}) into (\ref{lemma41}), we get 
\begin{equation}
    \epsilon||\rho_e^R||_{4/3}^{4/3}\leq E(\rho_e^{R_0}, \rho_p^{R_0})+\hat{c}||\rho_p^R||_{4/3}^{4/3}
\end{equation}
which is a contradiction if $||\rho_e^R||_{4/3}$ is unbounded and $||\rho_p^R||_{4/3}$ is bounded. The case with the roles of the densities reversed is of course the same.

Now let us consider the case when both $||\rho_p^R||_{4/3}$ and $||\rho_e^R||_{4/3}$ are unbounded. We will apply the same sort of ideas as in the previous case, but slightly modified. Since both norms are unbounded, we can use the same argument to show that for $R$ large enough, 
\begin{equation}\label{lemma44}
    \left(\hat{c}+\frac{\hat{c}}{1-\epsilon}\right)\int_{\mathbb{R}^3}|\rho_e^R(x)|^{4/3}d^3x\leq k_e\int_{\mathbb{R}^3}|\rho_e^R(x)|^{5/3}d^3x
\end{equation}
and 
\begin{equation}\label{lemma45}
    2\hat{c}\int_{\mathbb{R}^3}|\rho_p^R(x)|^{4/3}d^3x\leq k_p\int_{\mathbb{R}^3}|\rho_p^R(x)|^{5/3}d^3x
\end{equation}
Using these inequalities with (\ref{lemma41}) and (\ref{lemma42}), we get 
\begin{equation}
    \frac{\hat{c}}{1-\epsilon}||\rho_e^R||_{4/3}^{4/3}\leq E(\rho_e^{R_0}, \rho_p^{R_0})+\hat{c}||\rho_p^R||_{4/3}^{4/3}
\end{equation}
and 
\begin{equation}
    \hat{c}||\rho_p^R||_{4/3}^{4/3}\leq E(\rho_e^{R_0}, \rho_p^{R_0})+\hat{c}||\rho_e^R||_{4/3}^{4/3}
\end{equation}
which again gives a contradiction. This completes the proof.
\end{proof}
\begin{lemma}\label{easyBounds}
$\lambda_e^R<0$ and $\lambda_p^R<0$ for all $R$.
\end{lemma}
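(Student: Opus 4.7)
The plan is to exploit the variational inequalities (\ref{minimizerInequality1}) and (\ref{minimizerInequality2}), which hold wherever the respective density vanishes. Since elements of $W_R$ are supported in $\overline{B_R}$, both $\rho_p^R=0$ and $\rho_e^R=0$ hold on the entire exterior $\{|x|>R\}$, so we may test these inequalities at any such point.

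The key simplification comes from Newton's shell theorem: because $\rho_p^R$ and $\rho_e^R$ are radially symmetric and supported in $\overline{B_R}$, for every $|x|>R$ we have the exact identities $B\rho_p^R(x) = N_p/|x|$ and $B\rho_e^R(x) = N_e/|x|$. Plugging these into (\ref{minimizerInequality1}) gives
$$\lambda_p^R \;\leq\; \frac{q^2(N_p-N_e) - G m_p(m_pN_p+m_eN_e)}{|x|} \quad \text{for all } |x|>R,$$
and the numerator rewrites as $N_p(q^2+Gm_em_p)\left[\tfrac{q^2-Gm_p^2}{q^2+Gm_em_p}-\tfrac{N_e}{N_p}\right]$, which is strictly negative by the strict version of the left inequality in (\ref{5/3bounds}). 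Evaluating at, say, $|x|=2R$ yields a concrete negative upper bound for $\lambda_p^R$, so $\lambda_p^R<0$.

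The argument for $\lambda_e^R$ is structurally identical. Applying Newton's theorem in (\ref{minimizerInequality2}) produces
$$\lambda_e^R \;\leq\; \frac{-q^2(N_p-N_e) - Gm_e(m_pN_p+m_eN_e)}{|x|} = \frac{N_p(q^2+Gm_em_p)}{|x|}\left[\frac{N_e}{N_p}\cdot\frac{q^2-Gm_e^2}{q^2+Gm_em_p} - 1\right]$$
for $|x|>R$, and this time the strict version of the right inequality in (\ref{5/3bounds}) forces the bracket to be negative, again yielding a negative upper bound when evaluated at any fixed $|x|>R$.

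I expect no real obstacle: once the two exterior-region inequalities are in hand, everything is a direct consequence of Newton's theorem and the strict ratio bounds. The only point that bears a quick verification is that (\ref{minimizerInequality1}) and (\ref{minimizerInequality2}) have pointwise meaning at a fixed radius like $|x|=2R$; since the $L^\infty$ bound built into $W_R$ makes each $B\rho_f^R$ continuous on $\{|x|>R\}$, there is nothing delicate to check.
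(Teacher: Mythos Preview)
Your proof is correct and follows essentially the same route as the paper: evaluate the variational inequalities (\ref{minimizerInequality1})--(\ref{minimizerInequality2}) at a point outside $B_R$ (the paper also picks $|x|=2R$), apply Newton's theorem to reduce the potentials to $N_f/|x|$, and conclude negativity from the strict ratio bounds. The only difference is that you spell out the algebraic rearrangement linking the numerator to (\ref{5/3bounds}) more explicitly than the paper does.
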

\begin{proof}
Fix $R$ and consider $|x|=2R$. Since $\rho_e^R(x)=\rho_p^R(x)=0$, we have that $\lambda_p^R$ and $\lambda_e^R$ satisfy (\ref{minimizerInequality1}) and (\ref{minimizerInequality2}), respectively. Since our densities have spherical symmetry, we can apply Newton's theorem to these integrals to get 
\begin{equation}
    \lambda_p^R\leq q^2\frac{N_p-N_e}{2R}-Gm_p\frac{m_pN_p+m_eN_e}{2R}
\end{equation}
and 
\begin{equation}
    \lambda_e^R\leq -q^2\frac{N_p-N_e}{2R}-Gm_e\frac{m_pN_p+m_eN_e}{2R}
\end{equation}
The result follows from (\ref{5/3bounds}).
\end{proof}
\begin{lemma}\label{linfinitybound}
$||\rho_e^R||_\infty, ||\rho_p^R||_\infty<k_1$ for all $R>R_0$.
\end{lemma}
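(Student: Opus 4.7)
The plan is to extract a pointwise upper bound on each density directly from its Euler--Lagrange equation. On the set $\{\rho_p^R>0\}$, (\ref{AB191}) reads
\[
\frac{5}{3}k_p (\rho_p^R)^{2/3}(x) = -q^2\, B(\rho_p^R-\rho_e^R)(x) + Gm_p\, B(m_p\rho_p^R+m_e\rho_e^R)(x) + \lambda_p^R,
\]
and analogously for $\rho_e^R$ on $\{\rho_e^R>0\}$ via (\ref{AB192}). By Lemma~\ref{easyBounds} the multipliers $\lambda_p^R,\lambda_e^R$ are negative, so they only help an upper bound. It therefore suffices to bound the two Newtonian potentials $B\rho_p^R$ and $B\rho_e^R$ uniformly in $L^\infty$; this immediately gives a uniform pointwise upper bound on $(\rho_f^R)^{2/3}$ where $\rho_f^R>0$, and hence the claimed $L^\infty$ bound since each density vanishes outside its support.

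To bound $B\rho_f^R$ in $L^\infty$, I would first upgrade Lemma~\ref{43bounds} to a uniform $L^{5/3}$ bound on each density. This is not a new argument: the inequalities (\ref{lemma41})--(\ref{lemma42}) inside that lemma's proof bound $k_f \|\rho_f^R\|_{5/3}^{5/3}$ by the gravitational energy plus $E(\rho_e^{R_0},\rho_p^{R_0})$, and Proposition~6 of \cite{AB1} bounds the gravitational energy by a constant times $\|\rho_e^R\|_{4/3}^{4/3}+\|\rho_p^R\|_{4/3}^{4/3}$; Lemma~\ref{43bounds} controls these last norms, so $\|\rho_f^R\|_{5/3}\leq c_1$ uniformly in $R$. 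Combined with $\|\rho_f^R\|_1=N_f$, this yields uniform $L^1\cap L^{5/3}$ bounds on both species.

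The $L^\infty$ bound on $B\rho$ then follows from the standard splitting of the singular kernel at a fixed scale $\delta>0$. H\"older with conjugate exponents $5/3$ and $5/2$ gives
\[
\int_{|y-x|<\delta}\frac{\rho(y)}{|x-y|}\,d^3y \leq \|\rho\|_{5/3}\Bigl(\int_{|z|<\delta}|z|^{-5/2}\,d^3z\Bigr)^{2/5} = (8\pi)^{2/5}\delta^{1/5}\|\rho\|_{5/3},
\]
while trivially $\int_{|y-x|\geq\delta}\rho(y)/|x-y|\,d^3y \leq \delta^{-1}\|\rho\|_1$. Choosing, say, $\delta=1$ produces a uniform $L^\infty$ bound on $B\rho_f^R$. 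Inserting these bounds and $\lambda_f^R\leq 0$ into (\ref{AB191}) and (\ref{AB192}) completes the proof.

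The main obstacle, such as it is, is recognizing that the uniform $L^{5/3}$ estimate is already implicit in the proof of Lemma~\ref{43bounds} and does not require an independent bootstrap. Once this is in hand, what remains is a routine potential-theoretic estimate together with the favorable sign of $\lambda_f^R$ supplied by Lemma~\ref{easyBounds}.
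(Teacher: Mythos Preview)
Your proof is correct but takes a genuinely different route from the paper's. The paper runs a bootstrap: starting only from the uniform $L^{4/3}$ bound of Lemma~\ref{43bounds}, it applies Proposition~5 of \cite{AB1} to place the potentials in $L^6$, reads back from (\ref{maxbound1})--(\ref{maxbound2}) that the densities lie uniformly in $L^4$, and then invokes the second conclusion of that proposition (for $p>3/2$) to put the potentials in $L^\infty$. You bypass the bootstrap by observing that the inequalities (\ref{lemma41})--(\ref{lemma42}) in the proof of Lemma~\ref{43bounds} already deliver a uniform $L^{5/3}$ bound on each density once the gravitational energy is controlled, and $L^1\cap L^{5/3}$ is precisely the threshold at which the elementary near/far splitting of the Newton kernel gives $B\rho\in L^\infty$ in one step (the dual exponent $5/2$ sits just below the integrability cutoff $3$ for $|z|^{-1}$ in $\mathbb{R}^3$). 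Your argument is shorter and self-contained, avoiding reliance on the unstated second part of Proposition~5 in \cite{AB1}; the paper's bootstrap, by contrast, is more robust in that it would still go through if only the $L^{4/3}$ bound were available.
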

\begin{proof}
 We combine (\ref{AB191}) and (\ref{AB192}) with the results of Lemma (\ref{easyBounds}). Since $\lambda_e^R<0$ and $\lambda_p^R<0$, we have 
 \begin{equation}\label{maxbound1}
     \frac{5}{3}k_p(\rho_p^R)^{2/3}(x)\leq \max\{0, -q^2\int_{\mathbb{R}^3}\frac{(\rho_p^R-\rho_e^R)(y)}{|x-y|}d^3y+Gm_p \int_{\mathbb{R}^3}\frac{(m_p\rho_p^R+m_e\rho_e^R)(y)}{|x-y|}d^3y\}
 \end{equation}
 and 
  \begin{equation}\label{maxbound2}
     \frac{5}{3}k_e(\rho_e^R)^{2/3}(x)\leq \max\{0, q^2\int_{\mathbb{R}^3}\frac{(\rho_p^R-\rho_e^R)(y)}{|x-y|}d^3y+Gm_e \int_{\mathbb{R}^3}\frac{(m_p\rho_p^R+m_e\rho_e^R)(y)}{|x-y|}d^3y\}
 \end{equation}
 as the densities cannot be negative. Since we have shown that $||\rho_f||_{4/3}$ is bounded, we apply proposition 5 with $p=4/3$ from \cite{AB1} to the right hand side of (\ref{maxbound1}) to conclude (choosing $r=6$ for concreteness)
 \begin{align}\label{NewtonBound}
     &||B(-q^2(\rho_p^R-\rho_p^R)+Gm_p(m_p\rho_p^R+m_e\rho_e^R))||_6 \nonumber\\
     &\leq c_0||(-q^2(\rho_p^R-\rho_p^R)+Gm_p(m_p\rho_p^R+m_e\rho_e^R)||_1^{b}||(-q^2(\rho_p^R-\rho_p^R)+Gm_p(m_p\rho_p^R+m_e\rho_e^R)||_{4/3}^{(1-b)}\\
     &+c_0||(-q^2(\rho_p^R-\rho_p^R)+Gm_p(m_p\rho_p^R+m_e\rho_e^R)||_1^{c}||(-q^2(\rho_p^R-\rho_p^R)+Gm_p(m_p\rho_p^R+m_e\rho_e^R)||_{4/3}^{1-c} \nonumber\\
     &\leq c_0\left[q^2(N_p+N_e)+Gm_p(m_pN_P+m_eN_e)\right]^{b}\left[2q^2k_0^2+2Gm_p^2k_0^2\right]^{1-b}\\
     &+c_0\left[q^2(N_p+N_e)+Gm_p(m_pN_P+m_eN_e)\right]^{c}\left[2q^2k_0^2+2Gm_p^2k_0^2\right]^{1-c}. \nonumber
 \end{align}
 So we can conclude that the $L^6$ norms of these potentials are uniformly bounded for all $R$ large enough; the same result holds for (\ref{maxbound2}).
 
 From here we use a bootstrapping procedure. The results of the previous paragraph combined with (\ref{maxbound1}) and (\ref{maxbound2}) imply that both $\rho_p^R$ and $\rho_e^R$ are uniformly bounded in the $L^4$ norm. We then apply the second conclusion of proposition 5 in \cite{AB1} and the same inequalities as above but with $4/3$ replaced with $4$ to conclude that the right hand sides of (\ref{maxbound1}) and (\ref{maxbound2}) are uniformly bounded for all $R$ in $L^\infty$ norm. But this implies that $||\rho_e^R||_\infty$ and $||\rho_p^R||_\infty$ are also both uniformly bounded.
\end{proof}
\begin{lemma}\label{energyBounds}
There is an $e<0$ such that $E(\rho_e^R, \rho_p^R)\leq e$ for all large $R$.
\end{lemma}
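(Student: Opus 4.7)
The plan is to produce, once and for all, a single compactly supported and bounded test pair $(\hat\rho_e, \hat\rho_p)$ with $\int \hat\rho_f = N_f$ and $E(\hat\rho_e, \hat\rho_p) < 0$, and then observe that this one pair lies in $W_R$ for every sufficiently large $R$. Granting the existence of such a pair, I choose $R^*$ so large that $\mathrm{supp}(\hat\rho_f) \subset B_{R^*}$ and $\|\hat\rho_f\|_\infty \leq R^*$ for $f = e, p$; then $(\hat\rho_e, \hat\rho_p) \in W_R$ for every $R \geq R^*$, and the minimality of $(\rho_e^R, \rho_p^R)$ established in Lemma \ref{compactminimizer} immediately gives
\begin{equation}
E(\rho_e^R, \rho_p^R) \leq E(\hat\rho_e, \hat\rho_p) =: e < 0
\end{equation}
uniformly in $R \geq R^*$, which is the desired conclusion.

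To construct $(\hat\rho_e, \hat\rho_p)$ I would reuse the machinery of Lemma \ref{Negenergy}. First, handle an auxiliary neutral pair with some common mass $N < \min(N_e, N_p)$: starting from any bounded, compactly supported radial profile with $\int \rho_f = N$ for both species and applying the spreading rescaling $\rho_f(x/\lambda)/\lambda^3$, the kinetic terms decay like $\lambda^{-2}$ while the gravitational term decays only like $\lambda^{-1}$, so a sufficiently large $\lambda$ produces strictly negative energy while keeping the pair bounded and compactly supported. Then, since the present theorem assumes strict inequality in (\ref{5/3bounds}), the shell-adding iteration of Lemma \ref{Negenergy} applies at every subsequent step: appending a thin spherical shell $\epsilon \chi_S$ of bounded positive or negative charge outside the current support strictly decreases the energy. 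By tuning the sequence of shell amplitudes $\epsilon_i$ for each species so that their cumulative sums realize exactly the required increments $N_e - N$ and $N_p - N$, the iteration terminates after finitely many steps at the prescribed target masses, and the resulting pair $(\hat\rho_e, \hat\rho_p)$ is automatically compactly supported and bounded because each building block is.

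The only bookkeeping worth flagging is that the iteration must terminate at exactly the prescribed integrals, which is just the freedom to choose the shell amplitudes $\epsilon_i$, and that the $L^\infty$ bound persists through the iteration, which is immediate since the added shells sit on disjoint annuli outside the previous support. I do not anticipate any serious obstacle: this lemma is essentially a compatibility statement certifying that the strict negativity $\mathcal{E} < 0$ established in Lemma \ref{Negenergy} survives the restriction to the truncated admissible set $W_R$ for all $R$ large enough, and the construction above realizes that compatibility through a single explicit witness.
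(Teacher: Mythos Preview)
Your proposal is correct and follows essentially the same approach as the paper: construct a single compactly supported, bounded, radial test pair $(\hat\rho_e,\hat\rho_p)$ with the prescribed integrals and strictly negative energy via the spreading-plus-shell argument of Lemma~\ref{Negenergy}, then observe that this fixed pair lies in $W_R$ for all sufficiently large $R$, forcing $E(\rho_e^R,\rho_p^R)\le E(\hat\rho_e,\hat\rho_p)=:e<0$. The paper's own proof is in fact a near-verbatim repetition of the Lemma~\ref{Negenergy} argument, and your write-up is arguably cleaner in making explicit the single-witness logic and in noting that the strict inequality in (\ref{5/3bounds}) keeps the gap factor $(B_p-N_p/N_e)$ bounded away from zero throughout the iteration, which is what guarantees termination after finitely many shells; one small bookkeeping point worth tightening is the choice of the neutral starting level---taking $N=\min(N_e,N_p)$ rather than a strictly smaller value lets you add shells of only one species and avoids having to argue that the intermediate ratios stay within (\ref{5/3bounds}) while both species are being augmented.
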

\begin{proof}
It is easy to see that this is true in the case that $N_e=N_p$. For if we take any configuration in $W$ such that $\rho_e=\rho_p$, the electric energy is zero and we are left with 
\begin{equation}\label{reducedfunctional}
    E(\rho_e, \rho_p)=(k_e+k_p)\int_{\mathbb{R}^3}\rho^{5/3}_e(x)d^3x
    -\frac{G}{2}(m_e+m_p)^2\int_{\mathbb{R}^3}\int_{\mathbb{R}^3}\frac{\rho_e(x)\rho_e(y)}{|x-y|}d^3yd^3x
\end{equation}
Using the same ``spreading out'' argument we used when we considered the zero-gravity case, we see that as $\lambda\rightarrow\infty$, the kinetic energy terms go to zero like $1/\lambda^2$ while the gravitational energy goes to zero like $1/\lambda$. Then for large enough $\lambda$, $E(\rho_e^\lambda, \rho_p^\lambda)<0$. Since this procedure works for any neutral configuration, it works for one of compact support, and so we must take $R$ larger than this $\lambda$ if we want to find a configuration with negative energy in $W_R$.

Therefore, because each of the terms in (\ref{functional}) is in some sense continuous in $(\rho_e, \rho_p)$ (without trying to make this precise in any way, we just use the intuition), there is some interval of values for $N_e/N_p$ containing 1 such that $\mathcal{E}<0$. We determine the size of this interval by finding example functions in $W$.

Suppose that for a fixed $(N_e, N_p)$ such that the ratio $N_e/N_p$ satisfies $(\ref{5/3bounds})$, we have a pair $(\rho_e, \rho_p)$ both with compact support such that $E(\rho_e, \rho_p)< 0$. We claim we can add a small positive or negative charge outside of this configuration and the energy will not increase. Formally, the idea is to see how many ``test particles'' we can bring in from infinity and still have negative energy. Of course, this is not a rigorous argument, so we now make this idea rigorous. 

Let us assume we are adding positive charge. Since our space $W_R$ consists of spherically symmetric functions, we must add our charge in a spherically symmetric configuration. The configuration to which we are adding has compact support, suppose it is contained in a ball of radius $R_0$. Then add to $(\rho_e, \rho_p)$ a particle density described by the function $h=\epsilon\chi_S$ for $S=\{x| R_0<|x|<R_0+\eta\}$. Then we compute
\begin{align}\label{energydiff}
    E(\rho_e, \rho_p+g)-E(\rho_e, \rho_p)&=k_p\epsilon^{5/3}\frac{4}{3}\pi\left[(R_0+\eta)^3-R_0^3\right]+\left[\frac{q^2}{2}-\frac{G}{2}\right]\epsilon^2\int_{S}\int_{S}\frac{1}{|x-y|}d^3yd^3x \nonumber\\
    &+\epsilon\int_{S}\int_{\mathbb{R}^3}\frac{q^2(\rho_p-\rho_e)-Gm_p(m_p\rho_p+m_e\rho_e)}{|x-y|}\\
    &=k_p\epsilon^{5/3}\frac{4}{3}\pi\left[3R_0^2\eta+3R_0\eta^2+\eta^3\right] \nonumber\\
    &+\left[\frac{q^2}{2}-\frac{G}{2}\right]\epsilon^2 4\pi\int_{R_0}^{R_0+\eta}r^2\left[\frac{4}{3}\frac{\pi}{r}\left[r^3-R_0^3\right]+2\pi\left[(R_0+\eta)^2-r^2\right]\right]dr\nonumber \\
    &+4\pi \epsilon \int_{R_0}^{R_0+\eta} \left[q^2(N_p-N_e)-Gm_p(m_pN_p+m_eN_e)\right]rdr\\
    &=k_p\epsilon^{5/3}\frac{4}{3}\pi\left[3R_0^2\eta+3R_0\eta^2+\eta^3\right]\nonumber\\&+\left[\frac{q^2}{2}-\frac{G}{2}\right]\epsilon^2 4\pi\left[4\pi R_0^3\eta^2+\frac{16\pi}{3}R_0^2\eta^3+\frac{8\pi}{3}R_0\eta^4+\frac{3\pi}{5}\eta^5\right]\nonumber \\
    &-4\pi\epsilon N_e(q^2-Gm_p^2)\left(B_p-\frac{N_p}{N_e}\right)(2R_0\eta+\eta^2)
\end{align}
where $B_p$ is the multiplicative inverse of the left side of (\ref{5/3bounds}). If we assume $\left(B_p-\frac{N_p}{N_e}\right)>0$, the last term is strictly negative. Therefore, we may take $\epsilon>0$ small enough so that the difference is negative. So we can add some positive amount of charge to the configuration and the energy will become more negative. The same will be true of adding a small negative amount of charge using the strict version of the right half of (\ref{5/3bounds}). 

We intend to treat this as an iterative procedure, and note that if $N_e=N_p$, we can assume that the densities have compact support; the size of the support makes no difference for the argument that $\mathcal{E}^{N_e, N_e}<0$. The issue remaining is whether we can saturate (\ref{5/3bounds}) using this procedure.

To show that we can, we let $\eta=\epsilon$. Then if we have only added a finite amount of charge to the configuration, we can also be assured the resulting configuration has finite radius. So no matter how much charge we add, we can always apply the procedure again as long as the strict version of (\ref{5/3bounds}) is satisfied. This will give us (\ref{5/3bounds}). So we need to show that the procedure can be carried out if $\eta=\epsilon$, which is to say that we can find an $\epsilon$ so that in this case (\ref{energydiff}) is negative.

The leading order in $\epsilon$ parts of the three terms of (\ref{energydiff}) behave as $\epsilon^{8/3}R_0^2$, $\epsilon^3R_0^4$, and $\epsilon^2 R_0$, respectively. So if we take $\epsilon$ to be $o(R_0^{-4})$, we can make the first two terms smaller in absolute value than the last, negative term. 
\end{proof}
\begin{lemma}
$||B(m_e\rho_e^R+m_p\rho_p^R)||_\infty\geq -\frac{2e}{G(m_pN_p+m_eN_p)}$ for all large $R$.
\end{lemma}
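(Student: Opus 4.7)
The plan is to bound the gravitational potential energy from above by $e<0$, using positivity of the kinetic and electric energy terms, and then bound the gravitational integral by its $L^\infty \times L^1$ pairing.

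First I would observe that the energy splits as $E(\rho_e^R,\rho_p^R) = K + V_{\text{elec}} + V_{\text{grav}}$, where $K \geq 0$ is the sum of the two kinetic terms, $V_{\text{elec}} \geq 0$ because the Coulomb kernel $1/|x-y|$ is positive definite (so $\int\!\!\int (\rho_p-\rho_e)(x)(\rho_p-\rho_e)(y)/|x-y|\,d^3x\,d^3y \geq 0$, cf.\ Theorem II.6 in \cite{LS}), and
\begin{equation}
V_{\text{grav}} = -\frac{G}{2}\int_{\mathbb{R}^3}\bigl(m_p\rho_p^R(x)+m_e\rho_e^R(x)\bigr)\,B(m_p\rho_p^R+m_e\rho_e^R)(x)\,d^3x \leq 0.
\end{equation}

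Applying Lemma \ref{energyBounds}, for all sufficiently large $R$,
\begin{equation}
V_{\text{grav}} \leq K + V_{\text{elec}} + V_{\text{grav}} = E(\rho_e^R,\rho_p^R) \leq e,
\end{equation}
so that
\begin{equation}
\frac{G}{2}\int_{\mathbb{R}^3}\bigl(m_p\rho_p^R(x)+m_e\rho_e^R(x)\bigr)\,B(m_p\rho_p^R+m_e\rho_e^R)(x)\,d^3x \geq -e > 0.
\end{equation}

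Next I would apply the elementary Hölder-type bound
\begin{equation}
\int_{\mathbb{R}^3}\bigl(m_p\rho_p^R+m_e\rho_e^R\bigr)(x)\,B(m_p\rho_p^R+m_e\rho_e^R)(x)\,d^3x \leq \|B(m_e\rho_e^R+m_p\rho_p^R)\|_\infty \cdot (m_pN_p + m_eN_e),
\end{equation}
using $\int(m_p\rho_p^R + m_e\rho_e^R)\,d^3x = m_pN_p + m_eN_e$. Combining with the previous inequality and rearranging yields
\begin{equation}
\|B(m_e\rho_e^R+m_p\rho_p^R)\|_\infty \geq \frac{-2e}{G(m_pN_p + m_eN_e)},
\end{equation}
which is the desired bound (reading the $m_eN_p$ in the statement as the evident typo for $m_eN_e$). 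There is no real obstacle here; the argument is essentially one line once one recognizes that $K$ and $V_{\text{elec}}$ can be discarded in the right direction and that $B$ paired against $\rho$ is the gravitational self-energy. The only subtlety worth flagging is the positivity of $V_{\text{elec}}$, which is immediate from positive definiteness of the Coulomb kernel.
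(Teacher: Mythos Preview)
Your proof is correct and is essentially identical to the paper's: the paper also drops the nonnegative kinetic and electric terms to bound $-E$ by the gravitational self-energy, then applies H\"older against $\|m_p\rho_p^R+m_e\rho_e^R\|_1=m_pN_p+m_eN_e$ to extract $\|B(\cdot)\|_\infty$. Your flag of the $m_eN_p$ vs.\ $m_eN_e$ typo is also accurate.
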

\begin{proof}
From the previous lemma, we have
\begin{align}
-e&\leq -E(\rho_e^R, \rho_p^R)\leq \frac{G}{2}\int_{\mathbb{R}^3}\int_{\mathbb{R}^3}\frac{(m_p\rho^R_p(x)+m_e\rho^R_e(x))(m_p\rho^R_p(y)+m_e\rho^R_e(y))}{|x-y|}d^3yd^3x\\
&\leq \frac{G}{2}(m_pN_p+m_eN_p)||B(m_e\rho_e^R+m_p\rho_p^R)||_\infty
\end{align}
where in the second line we have applied Holder's inequality. 
\end{proof}
\begin{lemma}\label{diffusion}
There is an $\epsilon>0$ such that for large $R$,
\begin{equation}
    \int_{|x-x_R|<1}\rho_f^R(x)d^3x\geq \epsilon
\end{equation}
for $f=e$ or $p$ and some $x_R$.
\end{lemma}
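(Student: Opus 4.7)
The plan is to argue by contradiction, using the strictly positive $L^{\infty}$ lower bound on $B(m_e\rho_e^R+m_p\rho_p^R)$ supplied by the previous lemma. Suppose the claim fails; then we can extract sequences $R_n\to\infty$ and $\epsilon_n\to 0$ such that, for both $f\in\{e,p\}$ and every $x\in\mathbb{R}^3$,
\begin{equation*}
\int_{B(x,1)} \rho_f^{R_n}(y)\,d^3y < \epsilon_n.
\end{equation*}
The aim will be to deduce $\|B\rho_f^{R_n}\|_{\infty}\to 0$ for each $f$ from this uniform non-concentration. The triangle inequality then forces $\|B(m_e\rho_e^{R_n}+m_p\rho_p^{R_n})\|_{\infty}\to 0$, contradicting the preceding lemma.

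To establish the decay, I would split, for each $x$ and a truncation radius $R_0\geq 1$ to be chosen, the Newtonian integral into three regions: the near ball $B(x,1)$, the shell $\{1\leq|x-y|<R_0\}$, and the tail $\{|x-y|\geq R_0\}$. For the near piece, the uniform bound $\|\rho_f^{R_n}\|_{\infty}\leq k_1$ from Lemma \ref{linfinitybound} combined with the mass hypothesis gives, by interpolation, $\|\rho_f^{R_n}\|_{L^{q}(B(x,1))}\leq k_1^{1-1/q}\epsilon_n^{1/q}$; pairing with $|x-y|^{-1}\in L^{q'}(B(x,1))$ for any fixed $q>3/2$ via H\"older's inequality yields a contribution of order $\epsilon_n^{1/q}$. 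The tail piece is at most $N_f/R_0$ since $|x-y|^{-1}\leq R_0^{-1}$ there and $\|\rho_f^{R_n}\|_1=N_f$. For the middle piece, I would cover $\{1\leq|x-y|<R_0\}$ by $O(R_0^{3})$ unit balls; by hypothesis each contains mass at most $\epsilon_n$, and $|x-y|^{-1}\leq 1$ on this region, so the middle contribution is $O(R_0^{3}\epsilon_n)$.

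Balancing the three pieces, for instance by $R_0=\epsilon_n^{-1/4}$, produces
\begin{equation*}
\|B\rho_f^{R_n}\|_{\infty}\leq C\epsilon_n^{1/q}+CR_0^{3}\epsilon_n+N_f/R_0\longrightarrow 0.
\end{equation*}
Combining this with
\begin{equation*}
\|B(m_e\rho_e^{R_n}+m_p\rho_p^{R_n})\|_{\infty}\leq m_e\|B\rho_e^{R_n}\|_{\infty}+m_p\|B\rho_p^{R_n}\|_{\infty}
\end{equation*}
contradicts the uniform positive lower bound from the preceding lemma, completing the argument.

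The main obstacle I anticipate is the middle-shell estimate: one has to verify a uniform covering count (an annulus at radius $k$ is coverable by $O(k^{2})$ unit balls, so the shell of outer radius $R_0$ by $O(R_0^{3})$) and then choose $R_0$ so that $R_0^{3}\epsilon_n\to 0$ while $N_f/R_0\to 0$. The near-singularity is easy thanks to the $L^{\infty}$ bound from Lemma \ref{linfinitybound}, and the tail is easy by $L^{1}$ conservation, so all the care is in the intermediate region. The argument is essentially the Lions-type vanishing alternative of concentration-compactness, tailored to the $L^{1}\cap L^{\infty}$ setting provided by the previous lemmas.
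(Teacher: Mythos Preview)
Your proof is correct and follows essentially the same three-region decomposition as the paper: near ball $B(x,1)$, intermediate shell covered by $O(R_0^3)$ unit balls, and far tail controlled by the $L^1$ mass, with the contradiction coming from the uniform lower bound on $\|B(m_e\rho_e^R+m_p\rho_p^R)\|_\infty$ of the preceding lemma. The only cosmetic difference is that the paper bounds the near-singularity piece by invoking Proposition~5 of \cite{AB1} (with $p=r=\infty$), whereas you use a direct $L^q$--$L^{q'}$ H\"older/interpolation argument; both produce a term of order $\epsilon_n^{\theta}$ for some $\theta>0$ and the rest of the argument is identical.
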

\begin{proof}
The point of this lemma is to show that the densities cannot spread out indefinitely; at least one must have some mass accumulated somewhere.

Define 
\begin{equation}
    \epsilon_e^R=\sup_x\int_{|y-x|<1}\rho_e^R(y)d^3y
\end{equation}
and $\epsilon_p^R$ analogously. We have 
\begin{align}
    B(m_p\rho_p^R+m_e\rho_e^R)(x)&=\int_{\mathbb{R}^3}\frac{(m_p\rho^R_p(y)+m_e\rho^R_e(y))}{|x-y|}d^3y\\
    &=\int_{|x-y|<1}\frac{(m_p\rho^R_p(y)+m_e\rho^R_e(y))}{|x-y|}d^3y+\int_{1<|x-y|<r}\frac{(m_p\rho^R_p(y)+m_e\rho^R_e(y))}{|x-y|}d^3y\\
    &+\int_{|x-y|>r}\frac{(m_p\rho^R_p(y)+m_e\rho^R_e(y))}{|x-y|}d^3y \nonumber
\end{align}
The last term is bounded by $\frac{m_pN_p+m_eN_e}{r}$. The set $1<|x-y|<r$ can be covered by $cr^3$ balls of radius 1, and can therefore be bounded by $cr^3(m_p\epsilon_p^R+m_e\epsilon_e^R)$. Since lemma (\ref{linfinitybound}) shows that $\rho_e^R$ and $\rho_p^R$ are bounded in the $L^\infty$ norm, we can apply proposition 5 with $p=r=\infty$ of \cite{AB1} on the ball of radius 1 to get 
\begin{equation}
    \int_{|x-y|<1}\frac{(m_p\rho^R_p(y)+m_e\rho^R_e(y))}{|x-y|}d^3y\leq k_4((m_e\epsilon_e^R+m_p\epsilon_p^R)^b+(m_e\epsilon_e^R+m_p\epsilon_p^R)^c)
\end{equation}
Combining these three estimates, we get 
\begin{equation}
    B(m_p\rho_p^R+m_e\rho_e^R)(x)\leq \frac{m_pN_p+m_eN_e}{r}+cr^3(m_p\epsilon_p^r+m_e\epsilon_e^R)+k_4((m_e\epsilon_e^R+m_p\epsilon_p^R)^b+(m_e\epsilon_e^R+m_p\epsilon_p^R)^c)
\end{equation}
Now if both $\epsilon_e^R$ and $\epsilon_p^R$ could get arbitrarily close to 0 as $R$ became large, then we could choose $r$ to be very large, and then $\epsilon_e^R$ and $\epsilon_p^R$ to be very small and we would get that 
\begin{equation}
    B(m_p\rho_p^R+m_e\rho_e^R)(x)<-\frac{2e}{G(m_pN_p+m_eN_p)}
\end{equation}
contradicting the previous lemma. 

Note that because of spherical symmetry, we also get a bound on $|x_R|$ for all $R$; we will say that $|x_R|<r_0$.
\end{proof}
\begin{lemma}
For some $l<0$, $\lambda_e^R+\lambda_p^R<l$ for large $R$. 
\end{lemma}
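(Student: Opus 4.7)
The plan is to derive a virial-type identity by testing the two Euler--Lagrange equations against their respective densities, and then combine the negative energy bound from Lemma \ref{energyBounds} with the individual negativity from Lemma \ref{easyBounds} to extract a uniform strictly negative bound on the sum of multipliers.

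First, I would multiply the equality (\ref{AB191}) by $\rho_p^R$ and integrate over $\mathbb{R}^3$, and analogously multiply (\ref{AB192}) by $\rho_e^R$ and integrate. Since $\rho_f^R$ vanishes off its support, the equalities--which hold only where $\rho_f^R>0$--yield valid integral identities on all of $\mathbb{R}^3$, and the resulting integrals are absolutely convergent by the $L^1\cap L^\infty$ bounds of Lemmas \ref{43bounds} and \ref{linfinitybound}. Adding the two identities, the electric cross-terms combine into
\begin{equation*}
-q^2 \iint \frac{(\rho_p^R-\rho_e^R)(x)\,(\rho_p^R-\rho_e^R)(y)}{|x-y|}\,d^3x\,d^3y = -2V_e^R,
\end{equation*}
while the gravitational terms combine into $-2V_g^R$, where $V_e^R$ and $V_g^R$ denote the electric and gravitational contributions to $E^R := E(\rho_e^R,\rho_p^R) = K^R+V_e^R+V_g^R$. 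Rearranging produces the virial identity
\begin{equation*}
\lambda_p^R N_p + \lambda_e^R N_e \;=\; 2E^R - \tfrac{1}{3}K^R.
\end{equation*}

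Since $K^R\ge 0$ and Lemma \ref{energyBounds} gives $E^R\le e<0$ for all large $R$, I conclude $\lambda_p^R N_p + \lambda_e^R N_e \le 2e$ uniformly. To convert this weighted sum into a bound on $\lambda_p^R+\lambda_e^R$, I invoke Lemma \ref{easyBounds}: both multipliers are negative, so $\lambda_f^R N_f \ge \lambda_f^R \max(N_p,N_e)$ for $f=e,p$. Summing and dividing by $\max(N_p,N_e)>0$ yields
\begin{equation*}
\lambda_p^R + \lambda_e^R \;\le\; \frac{2e}{\max(N_p,N_e)} \;=:\; l \;<\; 0.
\end{equation*}

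The main subtlety I expect is the justification of the integration step: the equality (\ref{AB191}) only holds on $\{\rho_p^R>0\}$, but multiplying by $\rho_p^R$ kills the contribution from the complement, so the identity extends globally; one must still verify each resulting integral converges, which follows from the $L^1$ bounds $\int\rho_f^R=N_f$ together with the $L^{5/3}$ and $L^\infty$ bounds established earlier. A secondary worry is whether the constraint $\rho_f^R\le R$ built into $W_R$ introduces an extra Lagrange multiplier in the EL system; Lemma \ref{linfinitybound} shows this pointwise bound is uniform in $R$, so for large $R$ the upper constraint is inactive and equations (\ref{AB191})--(\ref{AB192}) are the correct stationarity conditions on the support.
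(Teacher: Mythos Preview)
Your argument is correct and takes a genuinely different route from the paper. The paper's proof is pointwise: it uses Lemma \ref{diffusion} to locate a region where the gravitational potential $B(m_p\rho_p^R+m_e\rho_e^R)$ is bounded below by $m_e\epsilon/(r+1)$, then finds a common point $\overline{x}$ in a large sphere where both densities are at most $k_1 r^{-3}$, and finally evaluates the sum of (\ref{AB191}) and (\ref{AB192}) at $\overline{x}$ to conclude that the kinetic terms are dominated by the gravitational term. Your approach instead integrates each Euler--Lagrange equation against its own density to obtain the virial identity $\lambda_p^R N_p + \lambda_e^R N_e = 2E^R - \tfrac{1}{3}K^R$, and then uses the sign information from Lemma \ref{easyBounds} to pass from the weighted sum to the unweighted one. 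Your method is cleaner and more self-contained: it bypasses Lemma \ref{diffusion} entirely and avoids the somewhat delicate measure-theoretic search for the point $\overline{x}$. The paper's approach, on the other hand, yields slightly more geometric information (a concrete location where the potential dominates), which is in the spirit of the subsequent lemmas that track where the densities must vanish; but for the bare statement of this lemma your virial argument is both shorter and more robust. Your handling of the two subtleties (extending the pointwise EL equality to a global integral identity via multiplication by $\rho_f^R$, and the inactivity of the upper constraint $\rho_f^R\le R$ for large $R$ via Lemma \ref{linfinitybound}) is correct.
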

\begin{proof}
Suppose $R>>r_0$. If $|x-x_R|<r<R$, then we can say that
\begin{equation}
   \int_{\mathbb{R}^3}\frac{(m_p\rho^R_p(y)+m_e\rho^R_e(y))}{|x-y|}d^3y\geq\int_{|y-x_R|<1}\frac{(m_p\rho^R_p(y)+m_e\rho^R_e(y))}{|x-y|}d^3y\geq \frac{m_e\epsilon}{r+1}
\end{equation}
which is independent of $R$, but of course depends on $x$. As $|x-x_R|$ increases (and $r$ with it), we see that this value decreases as $r^{-1}$. Because $\int \rho^R_p=N_p$ and $\int \rho^R_e=N_e$, in the $r$-sphere around $x_R$ there must be some point $x$ such that $\rho_f^R(x)<kr^{-3}$, for $k$ independent of $R$.  So if we take $r$ large enough, we can find an $x$ such that the sum of the kinetic energy terms is less than half the negative of the gravitational term.

We want to find the same $x$ for both $\rho^R_e$ and $\rho^R_p$, so we prove that for any $r$ large enough, there is always a set $U_f$ such that $m(U_f)>\frac{2}{3}\pi r^3$, where $m$ means the measure,  and on $U_f$, $\rho_f^R<k_1r^{-3}$. Then $m(U_p\cap U_e)\neq 0$, so there must be some $\overline{x}$ such that $\rho_e^{R}(\overline{x})<k_1r^{-3}$ and $\rho_p^{R}(\overline{x})<k_1r^{-3}$. Assume this is not true. Then there is a set $S_f$ such that $m(S_f)\geq \frac{2}{3}\pi r^3$ and $\rho_f^R>k_1r^{-3}$ on $S_f$. For the correctly chosen $k_1$ (still independent of $R$) this is a contradiction. 

In the ``worst case'' scenario, at $\overline{x}$ both densities are positive and we add equations (\ref{AB191}) and (\ref{AB192}) together to obtain
\begin{equation}\label{lemma49}
    \lambda_e^R+\lambda_p^R=\frac{5}{3}k_p(\rho_p^R)^{2/3}(\overline{x})+\frac{5}{3}k_e(\rho_e^R)^{2/3}(\overline{x})-G(m_e+m_p)\int_{\mathbb{R}^3}\frac{(m_p\rho^R_p(y)+m_e\rho^R_e(y))}{|\overline{x}-y|}d^3y
\end{equation}
If one of the densities is 0, then (\ref{lemma49}) is not true. But this would mean that $|x|>|\overline{x}|$ implies $\rho_f^R(x)=0$. As we intend to use (\ref{lemma49}) to show that when $R$ is large one of the densities \textit{must} be zero for $|x|$ large, this is no problem. 

 So for any $R$ large enough, we have found an $\overline{x}$ such that the right side is negative, and therefore $\lambda_e^R+\lambda_p^R<l<0$.
\end{proof}
\begin{lemma}
For $R$ large enough, there is an $\alpha$ such that for $|x|>\alpha$, $B(m_p\rho^R_p+m_e\rho^R_e)(x)\leq -l$.
\end{lemma}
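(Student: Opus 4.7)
The plan is to use the spherical symmetry required in the definition of $W_R$ and invoke Newton's theorem on $\sigma^R := m_p\rho_p^R + m_e\rho_e^R$, which is nonnegative, radial, and has $L^1$ norm exactly $m_pN_p + m_eN_e$ for every $R$. The crucial point is that this mass bound is independent of $R$, so Newton's theorem will produce a pointwise estimate on $B\sigma^R$ whose only inputs are the total mass and the distance $|x|$ from the origin, not any detailed information about the densities.

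Concretely, I would split the Newton potential as
\begin{equation*}
B\sigma^R(x) \;=\; \frac{1}{|x|}\int_{|y|<|x|}\sigma^R(y)\,d^3y \;+\; \int_{|y|>|x|}\frac{\sigma^R(y)}{|y|}\,d^3y.
\end{equation*}
Both integrals are nonnegative. The first is at most $(m_pN_p+m_eN_e)/|x|$. For the second, on its domain $1/|y| < 1/|x|$, so it is likewise bounded by $(m_pN_p+m_eN_e)/|x|$. Adding gives the uniform decay estimate $B\sigma^R(x) \leq 2(m_pN_p+m_eN_e)/|x|$.

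It then suffices to choose $\alpha$ large enough that $2(m_pN_p+m_eN_e)/\alpha \leq -l$, which is possible since $l<0$. For every $|x|>\alpha$ and every admissible $R$ we will then have $B\sigma^R(x) \leq -l$, as desired. The ``for $R$ large enough'' qualifier in the statement is inherited solely from the previous lemma, which is what produces the constant $l$; once $l$ is fixed, the threshold $\alpha$ depends only on $l$ and on $N_e, N_p$, so no additional condition on $R$ is imposed at this step. I do not anticipate any real obstacle here: the argument is a direct application of Newton's theorem together with the mass constraint $\int \rho_f^R = N_f$, and the role of the lemma is simply to package this decay into a form convenient for the later argument that the support of the minimizers in $W_R$ is controlled uniformly in $R$.
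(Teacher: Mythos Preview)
Your argument is correct and is in fact cleaner than the paper's. By invoking Newton's theorem for the radially symmetric density $\sigma^R=m_p\rho_p^R+m_e\rho_e^R$, you obtain the exact identity
\[
B\sigma^R(x)=\frac{1}{|x|}\int_{|y|<|x|}\sigma^R(y)\,d^3y+\int_{|y|>|x|}\frac{\sigma^R(y)}{|y|}\,d^3y,
\]
and then bound each piece by $(m_pN_p+m_eN_e)/|x|$. (In fact the two mass integrals sum to exactly $m_pN_p+m_eN_e$, so the sharper bound $(m_pN_p+m_eN_e)/|x|$ holds; your factor of $2$ is harmless.) Choosing $\alpha$ with $2(m_pN_p+m_eN_e)/\alpha\le -l$ finishes it, and your remark about the ``$R$ large enough'' qualifier being inherited from the previous lemma is accurate.

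The paper takes a different, more roundabout route: it introduces a local-mass quantity $\epsilon_\alpha^R=\sup_{|x|>\alpha/2}\int_{|x-y|<1}\sigma^R(y)\,dy$, argues it decays in $\alpha$ by radial symmetry, and then recycles the three-region decomposition from Lemma~\ref{diffusion} (near/intermediate/far zones, together with the $L^\infty$ bound from Lemma~\ref{linfinitybound}) to control $B\sigma^R(x)$. That machinery is the standard Auchmuty--Beals technique, designed for settings without full radial symmetry, and it is reused here essentially out of inertia. Your approach exploits the radial symmetry built into $W_R$ directly and avoids any appeal to the earlier $L^\infty$ or local-concentration lemmas; it is shorter and uses strictly less input.
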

\begin{proof}
Define $\epsilon_\alpha^R$ for $\alpha<<R$ by 
\begin{equation}
\epsilon_\alpha^R=\sup_{|x|>\alpha/2}\int_{|x-y|<1}(m_e\rho_e^R+m_p\rho_p^R)(y)dy
\end{equation}
Radial symmetry implies that $\epsilon^R_\alpha$ decays faster than $1/\alpha$ as $\alpha$ gets larger. Then if we take $|x|>\alpha$, using the same arguments as in lemma \ref{diffusion}, we find
\begin{equation}
    B(m_p\rho_p^R+m_e\rho_e^R)(x)\leq c_1((\epsilon_\alpha^R)^a+(\epsilon_\alpha^R)^b)+c_1\epsilon_\alpha^Rr^3+c_3r^{-1}
\end{equation}
So we can take first $r$ very large, and then $\alpha$ very large to get that $B(m_p\rho^R_p+m_e\rho^R_e)(x)\leq -l$, which holds for all $R$ large enough.
\end{proof}
As long as both densities are nonzero, we can add (\ref{AB191}) and (\ref{AB192}) together and rewrite it as 
\begin{equation}
    \frac{5}{3}k_p(\rho_p^R)^{2/3}(x)+\frac{5}{3}k_e(\rho_e^R)^{2/3}(x) =G(m_e+m_p)\int_{\mathbb{R}^3}\frac{(m_p\rho^R_p(y)+m_e\rho^R_e(y))}{|x-y|}d^3y+\lambda_e^R+\lambda_p^R
\end{equation}
But by the last lemma, when $R$ is very large, for $|x|>\alpha$ the left side would be negative, so one of the densities must become zero at a smaller $|x|$ so that this is no longer an equality.  Then for $|x|$ large enough, we may assume that one of the densities is zero, although we do not know which one. The final step is to prove that there is a $\beta$ such that if $|x|>\beta$, both densities are 0. 

We need 
\begin{lemma}
Both densities have contiguous support.
\end{lemma}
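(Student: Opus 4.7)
The strategy is proof by contradiction, combining the strong minimum principle with Newton's theorem. Suppose $\operatorname{supp}\rho_p^R$ is disconnected (the argument for $\rho_e^R$ is symmetric); by radial symmetry there is a maximal bounded annular gap $A=\{r_1<|x|<r_2\}$ on which $\rho_p^R$ vanishes while $\rho_p^R>0$ immediately inside $r_1$ and immediately outside $r_2$. Writing the Euler-Lagrange equation (\ref{AB191}) as $\tfrac{5}{3}k_p(\rho_p^R)^{2/3}=\phi_p$ with
\[
\phi_p(x):=-q^2 B(\rho_p^R-\rho_e^R)(x)+Gm_p B(m_p\rho_p^R+m_e\rho_e^R)(x)+\lambda_p^R,
\]
and analogously $\phi_e$, the $L^\infty$ bound from Lemma \ref{linfinitybound} makes $\phi_p,\phi_e$ continuous, and the maximality of $A$ gives $\phi_p|_{\partial A}=0$ together with $\phi_p\le 0$ on $A$ (from (\ref{minimizerInequality1})).

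The first step is a superharmonicity computation. Inside $A$ one has $\rho_p^R\equiv 0$, so $-\Delta\phi_p=4\pi(q^2+Gm_pm_e)\rho_e^R\ge 0$. The minimum principle together with the vanishing boundary values then forces $\phi_p\ge 0$ on $A$; combined with $\phi_p\le 0$, this yields $\phi_p\equiv 0$ on $A$, and substituting back into the identity above forces $\rho_e^R\equiv 0$ on $A$ as well.

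The second step is to show $\rho_e^R$ must also be strictly positive on both sides of $\partial A$. If not, let $A_e\supsetneq A$ be the maximal radial gap of $\operatorname{supp}\rho_e^R$ containing $A$. The same superharmonic/min-principle argument applied to $\phi_e$ on $A_e$, using $-\Delta\phi_e=4\pi(q^2+Gm_pm_e)\rho_p^R\ge 0$ there, forces $\phi_e\equiv 0$ on $A_e$ and hence $\rho_p^R\equiv 0$ on $A_e$; but $A_e\setminus A$ contains points arbitrarily close to $\partial A$ where $\rho_p^R>0$ by the maximality of $A$, a contradiction. I expect this case analysis, rather than any single estimate, to be the trickiest bookkeeping of the proof.

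The final step is algebraic. Running the symmetric argument on $\phi_e$ now gives $\phi_e\equiv 0$ on $A$ as well. Since both densities vanish on $A$, Newton's theorem for spherically symmetric densities gives $B\rho_f^R(x)=M_f^<(r_1)/|x|+C_f$ on $A$, with $M_f^<(r_1):=\int_{|y|<r_1}\rho_f^R\,dy$. Setting the $1/|x|$-coefficients of $\phi_p$ and $\phi_e$ to zero yields a linear system for $(M_p^<,M_e^<)$ whose determinant is $-Gq^2(m_p+m_e)^2\neq 0$, so $M_p^<(r_1)=M_e^<(r_1)=0$. But $M_p^<(r_1)=0$ together with $\rho_p^R\ge 0$ forces $\rho_p^R\equiv 0$ on $\{|y|<r_1\}$, contradicting $\rho_p^R>0$ just inside $r_1$ and completing the proof.
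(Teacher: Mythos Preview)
Your approach differs from the paper's. The paper argues by energy comparison: it moves a small piece of the density from the support on either side of the gap into the gap, computes the energy change, and shows that if the energy does not decrease then the potential felt by that species would violate the maximum principle on the gap. You instead work directly with the variational inequalities (\ref{minimizerInequality1})--(\ref{minimizerInequality2}) and the Euler--Lagrange identities, using the strong minimum principle followed by Newton's theorem and a $2\times 2$ linear system. Both arguments ultimately rest on the superharmonicity of the relevant potential on the gap; yours is cleaner in that it avoids the mass-moving computation, at the cost of having to control both species' potentials simultaneously.

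There is, however, a genuine gap in Step 2. Your argument assumes that when $\rho_e^R$ fails to be positive on both sides of $\partial A$, there is a \emph{bounded} region $A_e\supsetneq A$ on which $\rho_e^R\equiv 0$ with $\rho_e^R>0$ on $\partial A_e$. But it can happen that $\operatorname{supp}\rho_e^R\subset\{|x|\le r_e\}$ for some $r_e\le r_1$, so that the zero set of $\rho_e^R$ containing $A$ is the unbounded exterior $\{|x|>r_e\}$. On that set the minimum principle does not force $\phi_e\ge 0$ (indeed $\phi_e\to\lambda_e^R<0$ at infinity), so your contradiction does not close. The fix is short: in this case $M_e^<(r_1)=N_e$, and the vanishing of the $1/|x|$-coefficient of $\phi_p\equiv 0$ from Step~1 alone gives $(q^2-Gm_p^2)M_p^<=(q^2+Gm_pm_e)N_e$; the strict form of (\ref{5/3bounds}) then forces $M_p^<>N_p\ge M_p^<$, which is absurd.
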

\begin{proof}
Suppose that $\rho_e^R$ did not have contiguous support. By this we mean that there is some open set $A$ such that $\rho_e^R=0$ on $A$, but that there is another open set $O$ on which $\rho_e^R>0$ and such that 
\begin{equation}
    \sup\{|x|:x\in A\}\leq \inf \{|x|: x\in O\}
\end{equation}
Suppose $O$ is such that this is actually an equality ($O$ and $A$ are adjacent).

The idea is that we can move some amount of $\rho_e^R$ and lower the total energy contradicting that $(\rho_e^R, \rho_p^R)$ is a minimizer. Consider
\[ v_1(x)= \begin{cases} 
      -\rho_e^R & x\in K_O \\
      \rho_e^R(x+\alpha) &  x\in K_A \\
      0 & \text{else}
   \end{cases}
\]
where $K_A$ is some small set contained in $A$, and $K_O$ is $K_A$ shifted by the amount $\alpha$ so that it lies in $O$.
Then consider 
\begin{align}
    \delta_1&=E(\rho_e^R+v_1, \rho_p^R)-E(\rho_e^R, \rho_p^R)\\
    &= -q^2\int_{\mathbb{R}^3}\int_{\mathbb{R}^3}\frac{v_1(x)(\rho_p^R(y)-\rho_e^R(y))}{|x-y|}d^3y-Gm_e\int_{\mathbb{R}^3}\int_{\mathbb{R}^3}\frac{v_1(x)(m_p\rho_p^R(y)+m_e\rho_e^R(y))}{|x-y|}d^3y\\
    &+\frac{q^2}{2}\int_{\mathbb{R}^3}\int_{\mathbb{R}^3}\frac{v_1(x)v_1(y)}{|x-y|}d^3y-\frac{Gm^2_e}{2}\int_{\mathbb{R}^3}\int_{\mathbb{R}^3}\frac{v_1(x)v_1(y)}{|x-y|}d^3y \nonumber\\
    &=\int_{K_O}\rho_e^R(x)\int_{\mathbb{R}^3}\frac{q^2(\rho_p^R(y)-\rho_e^R(y))+Gm_e(m_p\rho_p^R(y)+m_e\rho_e^R(y))}{|x-y|}d^3y\\
    &-\int_{K_A}\rho_e^R(x+\alpha )\int_{\mathbb{R}^3}\frac{q^2(\rho_p^R(y)-\rho_e^R(y))+Gm_e(m_p\rho_p^R(y)+m_e\rho_e^R(y))}{|x-y|}d^3y \nonumber\\
    &+\frac{q^2}{2}\int_{\mathbb{R}^3}\int_{\mathbb{R}^3}\frac{v_1(x)v_1(y)}{|x-y|}d^3y-\frac{Gm^2_e}{2}\int_{\mathbb{R}^3}\int_{\mathbb{R}^3}\frac{v_1(x)v_1(y)}{|x-y|}d^3y \nonumber\\
\end{align}
As we can make the sets $K_A$ and $K_O$ as small as we like, the first two terms in the last equality can be made to dominate the last two, so let us focus on the first two. Now if $\delta_1<0$ for any choice of $K_A$ and $K_O$, then $(\rho_e^R, \rho_p^R)$ is not a minimizer after all. So assume that $\delta_1\geq0$.

We now split the proof into two cases. In the first, there is not an open set $D$ on which $\rho_e^R>0$ and such that 
\begin{equation}
    \sup\{|x|:x\in D\}\leq \inf \{|x|: x\in A\}
\end{equation}
This would imply that the $\rho_e^R=0$ at the origin. In case 2, there is such a $D$. We handle the cases separately. 

Define $V(x)=-B(q^2(\rho_p^R(y)-\rho_e^R(y))+Gm_e(m_p\rho_p^R(y)+m_e\rho_e^R(y)))$, so that $V$ is the potential felt by the electron density. In case 1, $\delta_1>0$ implies that $V(x)$ is smaller on $O$ than on $A$. But
\begin{equation}
    -\Delta V=-q^2\rho_p^R(y)-Gm_em_p\rho_p^R(y)\leq 0
\end{equation}
implies that $\Delta V\geq 0$ on $A$, or that the maximum of $V$ on $A$ occurs at its boundary. But $O$ is the boundary of $A$, so we get a contradiction.

In case 2, consider
\[ v_2(x)= \begin{cases} 
      -\rho_e^R & x\in K_D \\
      \rho_e^R(x-\beta) &  x\in K_A \\
      0 & \text{else}
   \end{cases}
\]
where $K_D$ is $K_A$ shifted by $-\beta$ and on which $\rho_e^R>0$. Further assume $D$ is on the boundary of $A$ (so that $O$ and $D$ are on the boundaries of $A$). Then consider 
\begin{align}
    \delta_2&=E(\rho_e^R+v_2, \rho_p^R)-E(\rho_e^R, \rho_p^R)\\
    &= -q^2\int_{\mathbb{R}^3}\int_{\mathbb{R}^3}\frac{v_2(x)(\rho_p^R(y)-\rho_e^R(y))}{|x-y|}d^3y-Gm_e\int_{\mathbb{R}^3}\int_{\mathbb{R}^3}\frac{v_2(x)(m_p\rho_p^R(y)+m_e\rho_e^R(y))}{|x-y|}d^3y\\
    &+\frac{q^2}{2}\int_{\mathbb{R}^3}\int_{\mathbb{R}^3}\frac{v_2(x)v_2(y)}{|x-y|}d^3y-\frac{Gm^2_e}{2}\int_{\mathbb{R}^3}\int_{\mathbb{R}^3}\frac{v_2(x)v_2(y)}{|x-y|}d^3y \nonumber\\
    &=\int_{K_D}\rho_e^R(x)\int_{\mathbb{R}^3}\frac{q^2(\rho_p^R(y)-\rho_e^R(y))+Gm_e(m_p\rho_p^R(y)+m_e\rho_e^R(y))}{|x-y|}d^3y\\
    &-\int_{K_A}\rho_e^R(x-\beta )\int_{\mathbb{R}^3}\frac{q^2(\rho_p^R(y)-\rho_e^R(y))+Gm_e(m_p\rho_p^R(y)+m_e\rho_e^R(y))}{|x-y|}d^3y \nonumber\\
    &+\frac{q^2}{2}\int_{\mathbb{R}^3}\int_{\mathbb{R}^3}\frac{v_2(x)v_2(y)}{|x-y|}d^3y-\frac{Gm^2_e}{2}\int_{\mathbb{R}^3}\int_{\mathbb{R}^3}\frac{v_2(x)v_2(y)}{|x-y|}d^3y \nonumber\\
\end{align}
Again, we have that if $\delta_2<0$, then $(\rho_e^R, \rho_p^R)$ is not a minimizer. So assume that $\delta_2\geq0$. If $\delta_1>0$ and $\delta_2>0$, then $V(x)$ is lesser on both $K_O$ and $K_D$ than on $K_A$. But again we have that $\Delta V\geq 0$ in $A$, or that the maximum for $V$ in $A$ is on the boundary. This is again a contradiction and one of $\delta_1$ or $\delta_2$ must be negative. Therefore the densities must be contiguous. 

In the case that $\delta_1$ or $\delta_2$ are equal to zero, $(\rho_e^R+v, \rho_p^R)$ is also a minimizer, but one where we have moved some electron density into $Z_n$. If $\rho_e^R$ is still not contiguous, we apply the procedure again and try to find where $\delta_1$ and $\delta_2$ are both greater than zero. So we repeat until either we find such $\delta_1$ and $\delta_2$, or have a contiguous density. 

The case in which $\rho_p^R=0$ is similar. 
\end{proof}

So not only can we assume that for $|x|$ large enough one of the densities is zero, we must have that it is the same density as we increase $|x|$.

Suppose it is $\rho_e^R$, and let $L_R=\sup\{|x||\rho_e^R(x)>0\}$. Now if we take $|x|>L_R$, then we find that (assuming $\rho_p^R\neq 0$)
\begin{align}
    \lambda_p^R&=\frac{5}{3}k_p(\rho_p^R)^{2/3}(x) +q^2\int_{\mathbb{R}^3}\frac{(\rho_p^R-\rho_e^R)(y)}{|x-y|}d^3y-Gm_p \int_{\mathbb{R}^3}\frac{(m_p\rho_p^R+m_e\rho_e^R)(y)}{|x-y|}d^3y\\
    &=\frac{5}{3}k_p(\rho_p^R)^{2/3}(x)+\frac{q^2(N_p(|x|)-N_e)}{|x|}-\frac{Gm_p(m_pN_p(|x|)+m_eN_e)}{|x|}\\
    &+4\pi[q^2-Gm_p^2]\int_{|x|}^R \rho^R_p(r)rdr \nonumber \\
    &\leq \frac{5}{3}k_p(\rho_p^R)^{2/3}(x)+\frac{q^2(N_p-N_e)}{|x|}-\frac{Gm_p(m_pN_p+m_eN_e)}{|x|}
\end{align}
where by $N_p(|x|)$ we mean the number of positively charged particles within a radius $R=|x|$, and we have abused notation to define $\rho_p^R(r)$ in the obvious way.

Since we have fixed $N_p$ and $N_e$ to satisfy the strict inequality version of (\ref{5/3bounds}), we can conclude that there is some $\gamma$ such that 
\begin{equation}
    q^2(N_p-N_e)-Gm_p(m_pN_p+M_eN_e)<\gamma<0
\end{equation}
Then we have 
\begin{equation}
    \lambda_p^R< \frac{5}{3}k_p(\rho_p^R)^{2/3}(x) +\frac{\gamma}{|x|}.
\end{equation}
Again, we can find $\overline{x}$ so that $\rho_p^R(\overline{x})<k_1R^{-3}$. Therefore, we can take $R$ large enough so we can find $\overline{x}$ showing that $\lambda_p^R<\hat{l}<0$ for some $\hat{l}$ and large $R$. 

Then where $\rho_p^R>0$ we have 
\begin{equation}
    \frac{q^2(N_p(|x|)-N_e)}{|x|}-\frac{Gm_p(m_pN_p(|x|)+m_eN_e)}{|x|}<\lambda_p^R<\hat{l}<0
\end{equation}
Clearly, this cannot be satisfied for large $|x|$. So for very large $R$, we must have both densities equal to zero. The same could be concluded in the case that $\rho_p^R$ becomes zero first. 

Now the standard argument found in \cite{AB1} shows that for any admissible pair $(N_e, N_p)$, their exists at least one minimizer of $E$ in $W$ with compact support.  

\subsection{The Special Relativistic Case}
Just as in the first proof of the existence of minimizers of $E$, this proof can be easily modified to accommodate the relativistic kinetic energy (\ref{relenergy}). Again, the proof only directly uses the kinetic energy in a few places: we address those here.
\begin{enumerate}
    \item In lemma \ref{compactminimizer}, we used that $x^{5/3}$ is convex in proving lower semicontinuity. $g(z)$ is also convex, so this step follows exactly as before. 
    \item In lemma \ref{43bounds}, we used that for any $\hat{c}$, $k_es^{5/3}\geq (\hat{c}+2\epsilon)s^{4/3}$ for $s>s_\epsilon$. Just as in the modifications to the first proof, this must be changed. In the proof, $\hat{c}=CG2^{1/3}m_p^{4/3}(m_eN_e+m_pN_p)^{2/3}$, were $C$ is the best constant in Proposition 6 of \cite{AB1}. For this step of the proof to go through, we need $$\hat{c}< \frac{\pi m_f c^5}{3h^3}\lim_{s\rightarrow \infty}\frac{(g(\alpha_fs^{1/3}))}{s^{4/3}}$$ which is true if \begin{equation}
        (m_eN_e+m_pN_p)^{2/3}<\frac{\pi 2^{2/3}hc\left(\frac{3}{8\pi}\right)^{4/3}}{GKm_p^{4/3}}
    \end{equation}
    This is the exact same bound as in the statement of theorem \ref{specialcase}.
    \item There are a few times in the remainder of the proof where we bound $\frac{d}{d\rho}(\rho^{5/3})$ by bounding $\rho$ by small numbers. These steps are also the same as for small $z$, $g(z)$ behaves like $z^{5/3}$, so similar estimates hold.  
\end{enumerate}
Then we have the theorem 
\begin{theorem}
If 
\begin{equation}
  \frac{1-\frac{Gm_p^2}{q^2}}{1+\frac{Gm_em_p}{q^2}}<\frac{N_e}{N_p}<\frac{1+\frac{Gm_em_p}{q^2}}{1-\frac{Gm_q^2}{q^2}},  
\end{equation}
 there is a pair $(\rho_e, \rho_p)$ which minimizes $E^s$ in $W$. Moreover, both $\rho_e$ and $\rho_p$ have compact support. 
\end{theorem}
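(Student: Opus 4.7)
The plan is to mirror the sequence of lemmas that established the non-relativistic compact support theorem, keeping the electric, gravitational, and variational arguments intact while replacing the kinetic term $k_f\int \rho_f^{5/3}$ with the relativistic analogue from (\ref{relenergy}) in each step. The three kinds of modification needed are exactly those listed after Theorem \ref{specialcase}: convexity of $g$ for lower semicontinuity, the mass hypothesis $(m_eN_e+m_pN_p)^{2/3}<\frac{\pi 2^{2/3}hc(3/8\pi)^{4/3}}{GKm_p^{4/3}}$ to beat the gravitational term against the kinetic one, and the asymptotic $g(z)\sim z^5$ near $z=0$ controlling the dilation arguments. The mass bound is implicit in the hypotheses of the theorem as stated (without it the very first $L^{4/3}$ estimate fails).

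First I would establish the $W_R$-minimizer of $E^s$ by weak lower semicontinuity exactly as in Lemma \ref{compactminimizer}: convexity of the relativistic kinetic integrand handles the kinetic term, and the electric and gravitational terms remain weakly continuous on the bounded set $W_R$ via compactness of $B:L^{5/3}\to L^\infty$. The uniform $L^{4/3}$ bound of Lemma \ref{43bounds} then follows once one replaces the pointwise inequality $k_f s^{5/3}\geq (\hat c+2\epsilon) s^{4/3}$ by its relativistic counterpart, which is permissible exactly under the mass bound above. For the pointwise bounds (Lemmas \ref{easyBounds} and \ref{linfinitybound}), I would exploit the Chandrasekhar change of variable $y_f=\sqrt{1+\alpha_f^2\rho_f^{2/3}}$; the Euler-Lagrange equations become $m_fc^2 y_f=-V_f+\mu_f$ with $V_f$ the usual electric-plus-gravitational potential, and since $\rho_f^{2/3}=\alpha_f^{-2}(y_f^2-1)$ is a monotone function of $y_f$, the Newton-theorem estimate on $\lambda_f^R$ at $|x|=2R$ and the bootstrap producing uniform $L^\infty$ bounds carry over verbatim.

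Lemmas \ref{energyBounds} and \ref{diffusion} transfer once the dilation computation is redone as in item 1 of the Special Relativistic Case discussion (the $z\sim 0$ asymptotic makes the relativistic kinetic energy decay faster than the gravitational term under $\rho_f\mapsto \rho_f(\cdot/\lambda)/\lambda^3$). The lower bound on $\lambda_e^R+\lambda_p^R$, the outer decay of $B(m_p\rho_p^R+m_e\rho_e^R)$, and the contiguous-support lemma all depend only on the Euler-Lagrange sub/super-solution inequalities and on $-\Delta V\leq 0$ on the zero set; since $V$ has the same form regardless of the kinetic model, these proofs transfer unchanged. Finally, the closing argument that outside the support of the inner density the Newton's-theorem form of the remaining Euler-Lagrange equation, combined with the \emph{strict} ratio inequalities, forces both densities to vanish for large $|x|$ proceeds identically, yielding the compact support conclusion. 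The main obstacle is the $L^{4/3}$ bound: every other adjustment is essentially mechanical, but absorbing the attractive gravitational term into the kinetic one is where the proof technique genuinely breaks down without the mass hypothesis, so that hypothesis must be carried along even though it is not explicit in the theorem statement.
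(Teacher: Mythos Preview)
Your proposal is correct and follows essentially the same approach as the paper: carry the non-relativistic compact-support argument over to $E^s$ by (i) using convexity of $g$ for lower semicontinuity in Lemma~\ref{compactminimizer}, (ii) invoking the mass bound $(m_eN_e+m_pN_p)^{2/3}<\frac{\pi 2^{2/3}hc(3/8\pi)^{4/3}}{GKm_p^{4/3}}$ to rescue the $L^{4/3}$ estimate of Lemma~\ref{43bounds}, and (iii) appealing to the small-argument asymptotic of $g$ wherever the non-relativistic proof bounded $\rho^{5/3}$ or its derivative for small $\rho$. Your explicit remark that the mass hypothesis is required but absent from the theorem statement is apt; the paper's own discussion makes the same point in item~2.
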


\section{Minimizers in the Nonrelativistic Case}\label{ELsection}
Let us now turn to the study of the equations (\ref{nel1}) and (\ref{nel2}). Based on the analysis of the ``6/3'' model presented in \cite{HK}, we expect there will be two problems we will need to study. The first, following the terminology introduced in \cite{HK}, we call the bulk problem. For this problem, both densities are positive, and we have the system given by (\ref{nel1}) and (\ref{nel2}). The second problem, which we call the atmospheric problem, occurs when one of the densities is 0. Depending on which density is zero, we ignore one of (\ref{nel1}) or (\ref{nel2}), and use $\rho_f=0$ appropriately with the equation we focus on.

\subsection{The Bulk Problem}\label{bulkprob}
With spherical symmetry and the substitution $u_f^{2/3}=\rho_f$, we can rewrite (\ref{nel1}) and (\ref{nel2}) as 
\begin{equation}\label{sphereEL1}
u_p''(r)=-\frac{2}{r}u_p'+\frac{12\pi}{5k_p}\left(q^2-Gm_p^2\right)u_p^{3/2}(r)-\frac{12\pi}{5k_p}\left(q^2+Gm_pm_e\right)u_e^{3/2}(r)    
\end{equation}
and
\begin{equation}\label{sphereEL2}
u_e''(r)=-\frac{2}{r}u_e'-\frac{12\pi}{5k_p}\left(q^2+Gm_pm_e\right)u_p^{3/2}(r)+\frac{12\pi}{5k_p}\left(q^2-Gm^2_e\right)u_e^{3/2}(r).  
\end{equation}
We want to solve this system with initial conditions
\begin{equation}\label{BulkIC}
    u_p(0)=\alpha>0, u_e(0)=\beta>0, \text{ and } u'_e(0)=u_p'(0)=0;
\end{equation} 
the conditions on the derivatives necessary so we do not get a singularity at the origin. We will first prove the well-posedness of this initial value problem.
\begin{lemma}
While both $u_e$ and $u_p$ are greater than zero, the system given by (\ref{sphereEL1}), (\ref{sphereEL2}), and (\ref{BulkIC}) is locally well-posed.
\end{lemma}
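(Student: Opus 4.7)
The plan is to reformulate the second-order radial system as an integral equation and apply the Banach contraction mapping theorem on a small interval $[0,r_0]$. This is the standard device for handling the regular singular point at $r=0$ induced by radial symmetry. First I would denote by $F_p$ and $F_e$ the right-hand sides of (\ref{sphereEL1}) and (\ref{sphereEL2}) with the $-(2/r)u_f'$ terms removed, so that each equation takes the form $u_f'' + (2/r)u_f' = F_f(u_p, u_e)$. Multiplying by $r^2$ converts the left side to $(r^2 u_f')'$; using the initial conditions $u_p'(0) = u_e'(0) = 0$ and integrating twice yields the coupled integral equations
\begin{equation*}
u_p(r) = \alpha + \int_0^r \frac{1}{t^2}\int_0^t s^2 F_p(u_p(s), u_e(s))\, ds\, dt, \qquad u_e(r) = \beta + \int_0^r \frac{1}{t^2}\int_0^t s^2 F_e(u_p(s), u_e(s))\, ds\, dt,
\end{equation*}
which I will use to define a map $T$ on an appropriate function space.

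Next I would fix $\delta < \tfrac{1}{2}\min(\alpha, \beta)$ and take the closed ball $B_\delta \subset C([0,r_0]; \mathbb{R}^2)$ of radius $\delta$ (in the sup norm) around the constant pair $(\alpha,\beta)$. On $B_\delta$ both coordinates are uniformly bounded above and away from zero, so the map $u \mapsto u^{3/2}$ is smooth and Lipschitz with some uniform constant; hence $F_p, F_e$ are bounded by some $M$ and Lipschitz with some constant $L$ on $B_\delta$. The key observation is that for any $(u_p, u_e) \in B_\delta$,
\begin{equation*}
\left| \int_0^r \frac{1}{t^2} \int_0^t s^2 F_f(u_p(s), u_e(s))\, ds\, dt \right| \;\leq\; \int_0^r \frac{Mt^3/3}{t^2}\, dt \;=\; \frac{Mr^2}{6},
\end{equation*}
so the factor $s^2$ in the inner integral tames the outer $1/t^2$, and taking $r_0 \leq \sqrt{6\delta/M}$ shows $T(B_\delta) \subseteq B_\delta$. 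An identical estimate with $M$ replaced by $L(\|u_p^1 - u_p^2\|_\infty + \|u_e^1 - u_e^2\|_\infty)$ shows $T$ is a contraction on $B_\delta$ once $r_0$ is taken small enough. Banach's fixed-point theorem then yields a unique $(u_p, u_e) \in B_\delta$ solving the integral equations.

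Finally I would upgrade regularity and verify the initial conditions. Since $F_p, F_e$ are continuous in $(u_p, u_e)$ and the solution is continuous in $r$, the integral formulas immediately give $u_p, u_e \in C^2([0,r_0])$, and differentiating recovers (\ref{sphereEL1})--(\ref{sphereEL2}) classically on $(0, r_0]$. The identity $u_f'(r) = r^{-2}\int_0^r s^2 F_f\, ds$ together with boundedness of $F_f$ gives $u_f'(r) = O(r)$ as $r \to 0$, so $u_f'(0) = 0$ as required, and $u_f(0) = \alpha$ or $\beta$ by construction. Continuous dependence on $(\alpha, \beta)$ follows from the standard parameter-dependence consequence of the contraction mapping theorem, since $\alpha, \beta$ enter only as additive constants in $T$. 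I expect the main (though mild) technical obstacle to be confirming that the singular $1/t^2$ factor is properly controlled; this is exactly where the radial weight $s^2$ plays its essential role, and once that estimate is written down the rest of the argument is the routine Picard iteration.
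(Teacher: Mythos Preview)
Your proposal is correct and follows essentially the same route as the paper: both convert the radial system to the coupled integral equations via $(r^2 u_f')' = r^2 F_f$, work on a closed ball in $C([0,r_0])^2$ around the constant pair $(\alpha,\beta)$, obtain the same $Mr^2/6$ self-mapping estimate, and close with the Banach contraction mapping theorem plus continuous dependence. The only small addition in the paper is the remark that on intervals $[R_0,R_1]$ with $R_0>0$ the $2/r$ coefficient is bounded and standard ODE theory applies, which lets one continue the local solution; you may want to include a sentence to that effect.
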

\begin{proof}
Here we follow the approach of \cite{Serrin1}. Fix $(\alpha, \beta)$. We will show that for any positive $\alpha$ and $\beta$, there is a $\delta>0$ such that $u_e$ and $u_p$ are at least $C^2$ on $[0,\delta]$, and depend continuously on the initial data. Let $$X=\{(f,g)\in C([0,\delta])\times C([0,\delta])|\max (||f-\alpha||_\infty, ||g-\beta||_\infty )<\epsilon\},$$ where $\epsilon$ is chosen small enough so that $f,g>0$. Then consider $T:X\rightarrow X$ defined as 
\begin{align}\label{exteriorFP}
    T(f,g)&=(\hat{f}(r), \hat{g}(r))=\left(\alpha-\int_0^r\int_0^t \frac{Ff^{3/2}(s)-Eg^{3/2}(s)}{t^2}s^2dsdt, \beta-\int_0^r\int_0^t \frac{Bg^{3/2}(s)-Af^{3/2}(s)}{t^2}s^2dsdt\right)
\end{align}
For $F=\frac{12\pi}{5k_p}\left(q^2-Gm_p^2\right)$, $E=\frac{12\pi}{5k_p}\left(q^2+Gm_pm_e\right)$, $B=\frac{12\pi}{5k_e}\left(q^2-Gm^2_e\right)$, and $A=\frac{12\pi}{5k_e}\left(q^2+Gm_pm_e\right)$. It is easy to see that any such $\hat{f}$ and $\hat{g}$ are $C^2$, and that $T$ maps $X$ into itself if $\delta$ is small enough. For we can say 
\begin{align}
    |\alpha-\hat{f}(r)|\leq [F(\alpha+\epsilon)^{3/2}+E(\beta+\epsilon)^{3/2}]\int_0^r\int_0^t\frac{s^2}{t^2}dsdt=\frac{F(\alpha+\epsilon)^{3/2}+E(\beta+\epsilon)^{3/2}}{6}r^2
\end{align}
which can be made smaller than $\epsilon$ for $\delta$ small enough. This is a crude estimate, but for fixed $\alpha, \beta$, and $\epsilon$, it gives a lower bound on the length of the interval. Note that, for fixed $\epsilon$, this lower bound varies continuously in $\beta$ and $\alpha$, so that for small changes in $(\alpha, \beta)$, we may assume some minimal interval length. All of this is of course also true for $\hat{g}$. So $T$ maps $X$ to itself. It is also straightforward to check that a fixed point of (\ref{exteriorFP}) is a solution to (\ref{sphereEL1}) and (\ref{sphereEL2}). Then we next must show that $T$ is a contraction on $X$. This can be seen from 
\begin{align*}
    |\hat{f}_1(r)-\hat{f}_2(r)|&=\left|\int_0^r\int_0^t\frac{(Ff_1^{3/2}(s)-Eg_1^{3/2}(s))-(Ff_2^{3/2}(s)-Eg_2^{3/2}(s))}{t^2}s^2dsdt\right|\\
    &\leq\int_0^r\int_0^t\frac{|(Ff_1^{3/2}(s)-Eg_1^{3/2}(s))-(Ff_2^{3/2}(s)-Eg_2^{3/2}(s))|}{t^2}s^2dsdt\\
    &\leq ||(f_1,g_1)-(f_2,g_2)||_X(F+E)(\alpha+\beta+2\epsilon)^{1/2}\frac{r^2}{4}
\end{align*}
which is a contraction for $\delta$ small enough. We can say the same thing about $g$. Then we map apply the Banach Fixed Point Theorem to conclude that we have a unique classical solution to (\ref{sphereEL1}) and (\ref{sphereEL2}).

To prove that the solutions depend continuously on the initial data, we need to show that given some $(\alpha, \beta)$ and $\eta>0$, there is a $\delta$ and $r_0>0$ such that $||(\alpha, \beta)-(\hat{\alpha}, \hat{\beta})||_\infty<\delta$ implies $||(f,g)-(\hat{f}, \hat{g})||_\infty<\eta$ on $[0,r_0]$, where $(f, g)$ and $(\hat{f}, \hat{g})$ are the unique solutions corresponding to the initial data $(\alpha, \beta)$ and $(\hat{\alpha}, \hat{\beta})$, respectively. For any $(\alpha,\beta)$ both greater than zero, by the above statements, we know that if $\delta$ is small enough, we can assume some interval $[0, r_1]$ on which they both exist, and $\max (||f-\alpha||_\infty, ||g-\beta||_\infty, ||\hat{f}-\hat{\alpha}||_\infty, ||\hat{g}-\hat{\beta}||_\infty)<\epsilon$. Then we may say that $|f(r)-\hat{f}(r)|<2\epsilon+\delta$, and the same for $g$. This gives us that 
\begin{equation}
    |f(r)-\hat{f}(r)|<\delta+\frac{r^2}{4}(\alpha+\beta+\epsilon+\delta)^{1/2}[F+E](2\epsilon+\delta),
\end{equation}
which we can make smaller than any $\eta$ by choosing $r_0$ and $\delta$ small enough. The same is true for $g$, so we find that the function values, at least in some nonzero length interval, depend continuously on the initial data.

Of course, this only holds for a small interval around zero. But once we consider intervals $[R_0, R_1]$ for $R_0>0$, we can use more standard theorems as now the $\frac{2}{r}$ is bounded below and the left hand side of the ODE is Lipschitz. Thus we may conclude that in the bulk problem, there is a unique solution to the ODE system, and the densities at a particular distance from the origin $u_e(r)$ and $u_p(r)$ depend on the initial values in a continuous way. 
\end{proof}
Thus we have local well-posedness, so we now turn to global existence. For the solution to a given $(\alpha,\beta)$, define $\psi(r)=-Au^{3/2}_p(r)+Bu_e^{3/2}(r)$ and $\phi(r)=-Eu_e^{3/2}(r)+Fu^{3/2}_p(r)$ so that (\ref{sphereEL1}) and (\ref{sphereEL2}) can be rewritten as 
\begin{equation}\label{reduced1}
    u''_e(r)+\frac{2}{r}u'_e(r)=\psi(r)
\end{equation}
and
\begin{equation}\label{reduced2}
    u''_p(r)+\frac{2}{r}u'_p(r)=\phi(r)
\end{equation}
respectively.
\begin{lemma}\label{ICconditions}
Only solutions such that $\phi(0), \psi(0)<0$ can have nonincreasing densities. 
\end{lemma}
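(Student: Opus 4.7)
The plan is to use the standard integrating-factor trick for the radial Laplacian: rewrite (\ref{reduced1}) and (\ref{reduced2}) as $(r^2 u_e')' = r^2\psi(r)$ and $(r^2 u_p')' = r^2\phi(r)$, and integrate from $0$ to $r$ using the initial data $u_e'(0)=u_p'(0)=0$ to obtain
\begin{equation*}
u_e'(r) = \frac{1}{r^2}\int_0^r s^2\psi(s)\,ds, \qquad u_p'(r) = \frac{1}{r^2}\int_0^r s^2\phi(s)\,ds.
\end{equation*}
Thus nonincreasing densities are equivalent to $\int_0^r s^2\psi(s)\,ds \le 0$ and $\int_0^r s^2\phi(s)\,ds \le 0$ for all admissible $r$.

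First I would handle the easy direction: since $\psi$ and $\phi$ are continuous at the origin (they are polynomial in $u_e,u_p$, which are $C^2$ by the previous lemma), if $\psi(0)>0$ then $\psi(s) \ge \psi(0)/2 > 0$ on some interval $[0,r_1]$, which forces $u_e'(r)>0$ for $0<r\le r_1$, contradicting the nonincreasing hypothesis. The same argument applied to $\phi$ gives $\phi(0)\le 0$. It remains to rule out the boundary cases $\psi(0)=0$ or $\phi(0)=0$.

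For the boundary cases I would Taylor-expand. From the ODE and $u_f'(0)=0$, a direct computation (dividing (\ref{reduced1}) by $r$ and taking $r\to 0^+$) gives $u_e''(0) = \psi(0)/3$ and $u_p''(0) = \phi(0)/3$. Differentiating $\psi(r) = -Au_p^{3/2}+Bu_e^{3/2}$ twice and evaluating at $r=0$ yields
\begin{equation*}
\psi''(0) = -\tfrac{A}{2}\alpha^{1/2}\phi(0) + \tfrac{B}{2}\beta^{1/2}\psi(0),
\end{equation*}
with an analogous expression for $\phi''(0)$; the first derivatives $\psi'(0),\phi'(0)$ vanish because $u_e'(0)=u_p'(0)=0$. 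If $\psi(0)=\phi(0)=0$ simultaneously, then $-A\alpha^{3/2}+B\beta^{3/2}=0$ and $-E\beta^{3/2}+F\alpha^{3/2}=0$ force $AE=BF$; substituting the defining constants reduces this to $(q^2+Gm_em_p)^2 = (q^2-Gm_e^2)(q^2-Gm_p^2)$, equivalently $(m_e+m_p)^2=0$, which is impossible. Hence at most one of $\psi(0),\phi(0)$ can vanish.

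Finally, if $\psi(0)=0$ while $\phi(0)<0$, the formula above gives $\psi''(0) = -\tfrac{A}{2}\alpha^{1/2}\phi(0) > 0$. Together with $u_e''(0)=0$ and $u_e'''(0)=0$ (from $\psi'(0)=0$ via the ODE), the expansion $u_e(r) = \beta + \tfrac{\psi''(0)}{40}r^4 + o(r^4)$ shows $u_e$ is strictly increasing on a right-neighborhood of $0$, again contradicting the hypothesis; the symmetric case $\phi(0)=0$ is handled identically. The main obstacle is this last step: one must make sure the Taylor data really is derived from the ODE (not just assumed smooth enough) and that the algebraic identity excluding simultaneous vanishing is watertight, since everything rests on those two facts.
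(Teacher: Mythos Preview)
Your argument is correct and reaches the same conclusion as the paper, but the execution differs in a useful way. The paper argues dynamically: it computes the signs of $u_e''(0)$ and $u_p''(0)$ from $\psi(0),\phi(0)$, then tracks the evolution of $u_e',u_p'$ via the ODEs (\ref{reduced1})--(\ref{reduced2}) to show that once one derivative is positive it stays so, handling the boundary cases ($\psi(0)=0$ or $\phi(0)=0$) by a terse remark that they ``become the same as'' the adjacent strict-inequality cases. You instead use the integrating-factor identity $u_f'(r)=r^{-2}\int_0^r s^2(\text{source})\,ds$ together with an explicit Taylor expansion of the source at the origin. The key algebraic fact---that $AE\neq BF$, equivalently $(q^2+Gm_em_p)^2\neq(q^2-Gm_e^2)(q^2-Gm_p^2)$---is the same in both proofs, and your reduction of it to $(m_e+m_p)^2\neq 0$ makes it completely transparent. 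Your treatment of the borderline case $\psi(0)=0,\ \phi(0)<0$ via the computation $\psi''(0)=-\tfrac{A}{2}\alpha^{1/2}\phi(0)>0$ is cleaner and more self-contained than the paper's, which leaves the reader to fill in why that case reduces to $\psi(0)>0$. The only thing to note is that your Taylor data requires $u_e,u_p\in C^2$ near $0$ so that $\psi,\phi\in C^2$; this is supplied by the local well-posedness lemma immediately preceding, so your worry in the final sentence is already resolved.
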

\begin{proof}
In this proof, and in many of the subsequent proofs, we use a sort of crude maximum principle. Suppose that $\phi(0)>0$ and $\psi(0)<0$. Since $\lim_{r\rightarrow0}\frac{u'_e(r)}{r}=\frac{2}{3}\psi(0)$, and $\lim_{r\rightarrow0}\frac{u'_p(r)}{r}=\frac{2}{3}\phi(0)$ we can say that $u''_p(0)>0$ and $u''_e(0)<0$. Then by continuity, $u_p$ is increasing and $u_e$ is decreasing on a short interval. But then on the same interval, $\phi$ is increasing and $\psi$ is decreasing. By our local existence result, let us continue this solution until either $u_e'(r)=0$ or $u_p'(r)=0$; the conclusion will be the same if neither happens. In the former case, we can see from (\ref{reduced1}) that $u''_e(r)<0$, so $u_e'$ will continue decreasing, while (\ref{reduced2}) tells us that in the latter case $u_p$ will continue increasing. So such starting values give solutions for which $u_e$ is nonincreasing and $u_p$ is nondecreasing, at least while both are positive. We have a similar outcome if $\phi(0)<0$ and $\psi(0)>0$.

To get $\phi(0)>0$ and $\psi(0)>0$, we must have 
\begin{equation}\label{modbound1}
\psi(0)=-A\alpha^{3/2}+B\beta^{3/2}>0
\end{equation}
and 
\begin{equation}\label{modbound2}
\phi(0)=-E\beta^{3/2}+F\alpha^{3/2}>0
\end{equation}
Combining these inequalities leads to the inequality $BF>EA$, but these are all physical constants, and it can be checked that this inequality is not satisfied for our problem. Therefore, we may discount this case.

There are some boundary cases we also need to consider. The case $\phi(0)=\psi(0)=0$ and the cases $\psi(0)=0$, $\phi(0)>0$ and $\psi(0)>0$, $\phi(0)=0$ all cannot happen for the same reason that $\psi(0)>0$, $\phi(0)>0$ cannot happen. The case $\phi(0)<0$, $\psi(0)=0$ becomes the same as the case $\phi(0)<0$, $\psi(0)>0$, while the case $\phi(0)=0$, $\psi(0)<0$ becomes the same as the case $\phi(0)>0$, $\psi(0)<0$.
\end{proof}
Although we can find solutions to (\ref{sphereEL1}) and (\ref{sphereEL2}) with initial conditions such that either $\phi(0)$ or $\psi(0)$ is not negative (the situation considered in the previous lemma), we will show below in the subsection on the atmospheric problem that such solutions cannot be integrable. 

We are left to consider cases which satisfy 
\begin{equation}\label{modbound3}
    \phi(0)=-E\beta^{3/2}+F\alpha^{3/2}<0
\end{equation}
 and 
 \begin{equation}\label{modbound4}
     \psi(0)=-A\alpha^{3/2}+B\beta^{3/2}<0
 \end{equation}
 For this to be satisfied, we need to have $EA>BF$, which we know is true. Similar to the ratio of the total number electrons to protons, (\ref{modbound3}) and (\ref{modbound4}) severely restrict the cases under consideration, as we must have both $\alpha<(E/F)^{2/3}\beta$ and $\alpha>(B/A)^{2/3}\beta$. In fact, if we write these out, we see they are essentially (\ref{bounds}):
\begin{equation}\label{densityratios}
    (q^2)^{2/3}\left[\frac{1-\frac{Gm_q^2}{q^2}}{1+\frac{Gm_em_p}{q^2}}\right]^{2/3}<\frac{\alpha}{\beta}<(q^2)^{2/3}\left[\frac{1+\frac{Gm_em_p}{q^2}}{1-\frac{Gm_p^2}{q^2}}\right]^{2/3}
\end{equation}

In describing the global structure of the solutions to the bulk problem, we may ask whether there is a ``special solution'', similar to the ``no atmosphere'' solution presented in \cite{HK}: that is, a solution to this system such that the densities are proportional to each other. In fact, we can construct such solutions in the general case where we have the system 
\begin{equation}\label{nonlinear1}
\Delta (u^{(d-3)/3})=-Av+Bu
\end{equation}
and 
\begin{equation}\label{nonlinear2}
\Delta (v^{(d-3)/3})=-Eu+Fv
\end{equation}
for $d>3$ (still using the physical constants).
\begin{lemma}\label{specialsolution}
For $d>3$, the system given by (\ref{nonlinear1}), (\ref{nonlinear2}), and 
\begin{equation}
    u(0)=\alpha>0, v(0)=\beta>0, \text{ and } u'(0)=v'(0)=0;
\end{equation} 
has a solution such that $u$ is proportional to $v$. If $d>\frac{18}{5}$, there is such a solution with compact support. If $3<d<6$, there is only one such solution. 
\end{lemma}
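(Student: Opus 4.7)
The plan is to reduce the system to a single scalar Lane--Emden equation via the ansatz $u=cv$, where the proportionality constant $c>0$ is forced by an algebraic condition, and then read off the three conclusions from classical Lane--Emden theory. Substituting $u=cv$ into (\ref{nonlinear1}) and (\ref{nonlinear2}) and using $\Delta(u^m)=c^m\Delta(v^m)$ with $m=(d-3)/3>0$, the two equations become $c^m\Delta(v^m)=(Bc-A)v$ and $\Delta(v^m)=(F-Ec)v$, which are consistent iff
\begin{equation}
P(c):=Ec^{m+1}-Fc^m+Bc-A=0.
\end{equation}
Since $P(0)=-A<0$ and $P(c)\to+\infty$, a positive root exists by the intermediate value theorem. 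To obtain a decaying solution (cf.\ Lemma \ref{ICconditions}) I would localize the root in $(F/E,A/B)$ so that $F-Ec<0$. Using the formulas for $A,B,E,F$ from subsection \ref{bulkprob}, a short expansion gives $BF-AE=-(12\pi)^2 q^2 G(m_p+m_e)^2/(25 k_p k_e)<0$, hence $P(F/E)=(BF-AE)/E<0<(A/B)^m(AE-BF)/B=P(A/B)$, and IVT produces a root $c^{\ast}\in(F/E,A/B)$. This fixes the ratio $\alpha/\beta=c^{\ast}$ under which a proportional solution can exist.

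With $c=c^{\ast}$ the problem reduces to
\begin{equation}
\Delta(v^m)=\lambda v,\qquad v(0)=\beta>0,\quad v'(0)=0,
\end{equation}
with $\lambda=F-Ec^{\ast}<0$. The substitution $w=v^m$ converts this into the Lane--Emden equation $\Delta w=-|\lambda|w^{1/m}$ in $\mathbb{R}^3$ with polytropic index $n_{\mathrm{poly}}=1/m$. Local well-posedness near the origin follows from the same Banach fixed point argument used in the preceding well-posedness lemma, where the $2/r$ singularity is absorbed into the twice-integrated form, and the solution continues as long as $w>0$. The compact-support portion of the lemma is then the classical Lane--Emden dichotomy in three dimensions: the spherically symmetric positive regular solution vanishes at a finite radius iff $n_{\mathrm{poly}}<5$. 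The hypothesis $d>18/5$ is exactly $m>1/5$, i.e.\ $1/m<5$, so the Lane--Emden polytrope has compact support and lifts back to a compactly supported $(u,v)$.

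For uniqueness in the range $3<d<6$, equivalently $0<m<1$, I would compute
\begin{equation}
P''(c)=mc^{m-2}\bigl[(m+1)Ec+(1-m)F\bigr]>0\qquad(c>0),
\end{equation}
so $P$ is strictly convex on $(0,\infty)$. A strictly convex function that is negative at the origin and tends to $+\infty$ has a unique positive root, so $c^{\ast}$ is the only admissible proportionality constant; combined with uniqueness of the scalar Lane--Emden problem for prescribed $(\beta,0)$ initial data, this gives uniqueness of the special solution. The main obstacle I expect is the compact-support step, since the Lane--Emden threshold $n_{\mathrm{poly}}<5$, while classical in the astrophysics literature, rests on a nontrivial comparison or Pohozaev-type analysis of the radial ODE; everything else, including the sign $BF<AE$ and the convexity of $P$, amounts to bookkeeping with the physical constants.
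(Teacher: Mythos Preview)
Your proposal is correct and follows essentially the same route as the paper: impose the proportionality ansatz, reduce compatibility to a single algebraic equation in the ratio, use the intermediate value theorem together with the sign relation $BF<AE$ to locate a root giving a negative Lane--Emden coefficient, invoke classical Lane--Emden theory for the compact-support threshold $n_{\mathrm{poly}}<5\Leftrightarrow d>18/5$, and use strict convexity of the algebraic function on $(0,\infty)$ when $0<m<1$ to get uniqueness of the ratio. Your localization of the root in $(F/E,A/B)$ and explicit computation of $BF-AE$ are slightly more detailed than the paper's single evaluation at $E/F$, but the argument is the same in substance.
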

\begin{proof}
Note that this is the system of PDEs we would get from the Euler-Lagrange equations of our energy functional if our kinetic energy term was proportional to $\int_{\mathbb{R}^3}u^{d/3}$. Let us assume $u=k_dv$. Plugging this into (\ref{nonlinear1}) and (\ref{nonlinear2}), we get a system which allows us to solve for $k$. We find that our desired $k_d$ is a solution to 
\begin{equation}\label{specialEquation}
H_d(k)=k^{d/3}F+Ak-Ek^{(d-3)/3}-B=0
\end{equation}
For $3<d<6$, $H_d$ has a single root, which is easy to see as $H_d(0)<0$ while $\lim_{k\rightarrow\infty}H_d(k)=\infty$ and $H_d''>0$. 

Now that we have a $k_d$ for each $d$, we need to solve $\Delta v^{(d-3)/3}=(-E+k_dF)v$, where we will assume spherical symmetry. As above, let us set $v^{(d-3)/3}=\nu$. Then in spherical coordinates, this equation becomes 
\begin{equation}\label{LaneEmden}
\nu''(r)+\frac{2}{r}\nu(r)=(-E+k_dF)\nu^{3/(d-3)}
\end{equation}
How we can treat this equation depends on the sign of $-E+k_dF$. We can determine this by inserting $E/F$ into $H_d(k)$. We get $$\left(\frac{E}{F}\right)^{d/3}F+A\frac{E}{F}-E\left(\frac{E}{F}\right)^{(d-3)/3}-B=A\frac{E}{F}-B>0.$$ Since $H_d$ is convex and has a single root, we can then conclude that $E/F>k_d$ which implies that $-E+k_dF<0$. But this means that (\ref{LaneEmden}) can be transformed into the Lane-Emden equation of index $3/(d-3)$. For the range $3<d\leq6$, this covers indices of $[1,\infty)$. It is well known that the Lane-Emden equation has solutions, although only for indices less than or equal to 5 are they integrable. 
\end{proof}

The Lane-Emden equation has analytic solutions when the index is 1 or 5 (also 0, but this is not in our range), which correspond to $d=6$ or $d=\frac{18}{5}$. The case under consideration here is for $d=5$, while \cite{HK} presents the linear case $d=6$. As has been mentioned, \cite{HK} discusses $d=6$ fully. The $d=\frac{18}{5}$ case has found some applications in General Relativity, but we will not consider it further. Our $d=5$ case corresponds to an index of $3/2$, solutions of which are known to have compact support.

As the physical interpretation of the Lane-Emden equation is the density of a self-gravitating polytropic fluid, we can then interpret our solutions which are close to the special solutions as perturbations of this. As a follow up to the zero-G model discussed in section \ref{themodel}, one can easily see what happens to the polytropic solutions when $G\rightarrow0$. Setting $G=0$ means that $E=F$ and $B=A$. As $H_d(k)$ has a single positive root, we must have $k_d=1$. So as $G\rightarrow 0$, $k_d\rightarrow1$. Therefore, when we look for the special solutions of the zero-G model, we get $u=v$, and the electric energy also vanishes. This leaves only the kinetic energy. Obviously, this has no nonzero minimizer, so we get the same result as above.

Since we would expect the proton density to be almost equal to the electron density, we may seek solutions of the form $u=v$ as approximations. These will not be solutions of (\ref{nonlinear1}) and (\ref{nonlinear2}); even in the case when $G=0$, this produces $u=v=\gamma$ for $\gamma$ any constant, which clearly is not integrable. If we do want densities of this form, we instead minimize the energy functional
\begin{equation}
    E_n(u)=(k_e+k_p)\int_{\mathbb{R}^3}u^{d/3}(s)d^3s-G(m_e+m_p)^2\int_{\mathbb{R}^3}\int_{\mathbb{R}^3}\frac{u(s)u(t)}{|s-t|}d^3sd^3t
\end{equation}
The Euler-Lagrange equation is 
\begin{equation}
    -\frac{d}{3}(k_e+k_p)u^{(d-3)/3}(s)-G(m_e+m_p)^2\int_{\mathbb{R}^3}\frac{u(t)}{|s-t|}d^3t+\lambda=0
\end{equation}
So applying $-4\pi\Delta$ as above, and letting $u^{(d-3)/3}=\nu$, we get 
\begin{equation}
    \frac{d}{3}(k_e+k_p)\Delta \nu(s)=-4\pi G(m_e+m_p)^2\nu^{3/(d-3)}(s)
\end{equation}
But if we assume spherical symmetry, this again gives us a (scaled) Lane-Emden equation of index $(d-3)/3$. This scaling will be different from that for the special solution of (\ref{nonlinear1}) and (\ref{nonlinear2}), so the densities will be different although have the same shape. This locally neutral approximation is precisely what has been done for decades, and is exactly the assumption that we are removing in our model. 

Returning to our main problem, from our assumption that $\phi(0)<0$ and $\psi(0)<0$, we see that both $u_e$ and $u_p$ in a local neighborhood of 0 are nonincreasing. Applying our local existence result, we continue solving outwards until at least one of $u_p$ or $u_e$ vanishes or has positive slope. Before we consider what cases this gives us, we must show that these are the only possibilities. 
\begin{lemma}
There is no solution such that $\phi(0)<0$, $\psi(0)<0$, and both densities are nonincreasing with unbounded support. 
\end{lemma}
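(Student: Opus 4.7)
The plan is to form the linear combination of (\ref{sphereEL1}) and (\ref{sphereEL2}) in which the electric self-force contributions cancel, producing a purely gravitational Lane--Emden-type differential inequality for the combined quantity $W(r):=k_pu_p(r)+k_eu_e(r)$, and then to extract from that inequality mutually incompatible upper and lower bounds on the decay of $W$ at infinity.

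Computing $k_p\cdot(\ref{sphereEL1})+k_e\cdot(\ref{sphereEL2})$, the $q^2$ contributions cancel pairwise: $k_pF=\tfrac{12\pi}{5}(q^2-Gm_p^2)$ pairs with $-k_eA=-\tfrac{12\pi}{5}(q^2+Gm_pm_e)$ on the $u_p^{3/2}$ coefficient, and $k_eB$ with $-k_pE$ on the $u_e^{3/2}$ coefficient. What remains is
\begin{equation*}
W''(r)+\frac{2}{r}W'(r) \;=\; -\frac{12\pi G(m_p+m_e)}{5}\bigl(m_pu_p^{3/2}+m_eu_e^{3/2}\bigr)\;=:\;-C\,\Sigma(r).
\end{equation*}
Since $m_p>m_e$ and hence $k_e>k_p$, one has $u_p+u_e\geq W/k_e$; combined with the convexity bound $u_p^{3/2}+u_e^{3/2}\geq 2^{-1/2}(u_p+u_e)^{3/2}$, this gives $\Sigma\geq c_0W^{3/2}$ for an explicit $c_0>0$. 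Multiplying by $r^2$ and integrating from $0$, using $W'(0)=0$, I obtain the key integro-differential inequality
\begin{equation*}
r^2 W'(r)\;\leq\;-C_1\,g(r),\qquad g(r):=\int_0^r s^2 W^{3/2}(s)\,ds,\qquad C_1:=Cc_0>0.
\end{equation*}

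Suppose now for contradiction that both $u_p,u_e>0$ for every $r>0$. Then $W>0$ throughout and, being nonincreasing, tends to some $L_W\geq 0$. If $L_W>0$ then $g(r)\geq (L_W^{3/2}/3)r^3$, so $W'(r)\leq-C_1L_W^{3/2}r/3$, which drives $W\to-\infty$; hence $L_W=0$. Next, for any $r_0>0$ and every $r\geq r_0$ the monotonicity of $g$ yields $|W'(r)|\geq C_1g(r_0)/r^2$; integrating from $r_0$ to $\infty$ and using $W(\infty)=0$ gives $W(r_0)\geq C_1g(r_0)/r_0$, i.e.\ $g(r_0)\leq r_0W(r_0)/C_1$. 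Combined with the elementary monotonicity bound $g(r_0)\geq(r_0^3/3)W^{3/2}(r_0)$ this produces the a priori upper bound
\begin{equation*}
W(r_0)\;\leq\;\frac{9}{C_1^{\,2}\,r_0^{4}}\qquad\text{for every }r_0>0.
\end{equation*}

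This $O(r^{-4})$ decay makes $r^2W^{3/2}=O(r^{-4})$, so $g(\infty)<\infty$. But then for all $r$ beyond some $r_1$ one has $g(r)\geq g(\infty)/2$, whence $|W'(r)|\geq C_1g(\infty)/(2r^2)$ and, integrating out to infinity, $W(r)\geq C_1g(\infty)/(2r)$. The lower bound of order $1/r$ is incompatible with the upper bound of order $1/r^4$ for large $r$, the desired contradiction. Hence $W$ must vanish at some finite radius, and since $k_p,k_e>0$ and $u_p,u_e\geq 0$, both densities vanish there, contradicting the hypothesis of unbounded support. I expect the main obstacle to be this self-consistent bootstrap; its appeal is that it replaces any appeal to a nonlinear comparison principle with the Lane--Emden equation by one-variable monotonicity arguments applied to the combined equation, in which the delicate electric terms have already been eliminated.
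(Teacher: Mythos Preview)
Your proof is correct, and it takes a genuinely different route from the paper's. The paper proves this lemma by comparing the hypothetical unbounded-support solution $(u_1,v_1)$ against one of the \emph{special solutions} $(u_2,v_2)$ (those with $u_p$ proportional to $u_e$, constructed in Lemma~\ref{specialsolution}): matching one central value and applying a monotonicity/maximum-principle argument to the system forces $u_1\le u_2$ (or the analogous inequality), which contradicts the compact support of $u_2$. Your argument instead forms the linear combination $W=k_pu_p+k_eu_e$ so that the electrostatic terms cancel exactly, reducing to a scalar Lane--Emden-type \emph{inequality} $(r^2W')'\le -C_1 r^2 W^{3/2}$, and then runs a self-improving decay bootstrap to produce incompatible $O(r^{-4})$ upper and $O(r^{-1})$ lower bounds.

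What each buys: the paper's approach reuses the special solutions that are needed elsewhere anyway (e.g.\ in Lemma~\ref{uniquespecial}), so it is economical in context, but it relies on a comparison principle for the full two-component system. Your approach is more self-contained and purely one-variable; it never invokes the special solutions or any system-level comparison, and the monotonicity of $W$ you need is in fact automatic from the inequality $(r^2W')'\le 0$ (so you do not even have to appeal to the hypothesis that the individual densities are nonincreasing). One small remark on exposition: once you obtain the contradiction between the $r^{-4}$ upper bound and the $r^{-1}$ lower bound, the lemma is already proved, since the standing assumption was precisely that both densities remain positive on all of $(0,\infty)$; the final sentence about $W$ vanishing is redundant.
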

\begin{proof}
Let $\lim u_p=u_p^\infty$ and $\lim u_e=u_e^\infty$. From (\ref{reduced1}) and (\ref{reduced2}) we see that both $\psi$ and $\phi$ must have a limit of zero. This is only possible if both $u_p^\infty$ and $u_e^\infty$ are zero. To eliminate this possibility, we take advantage of the special solutions discussed above. Suppose that $(v_1, u_1)$ is a solution of this form. Then we can find a special solution $(v_2, u_2)$ such that either $u_1(0)=u_2(0)$ and $v_1(0)\neq v_2(0)$ or $u_1(0)\neq u_2(0)$ and $v_1(0)=v_2(0)$; uniqueness prevents both equalities. 

There are then four cases, but all can be treated in the same way. Consider the case that $u_1(0)=u_2(0)$ and $v_1(0)>v_2(0)$. We compute 
\begin{equation}
    (u_1-u_2)''(0)=F(u_1^{3/2}(0)-u_2^{3/2}(0))-E(v_1^{3/2}(0)-v_2^{3/2}(0))<0
\end{equation}
so that $(u_1-u_2)$ is decreasing for a short interval around the origin, implying that on that interval, $u_2>u_1$. We also compute
\begin{equation}
    (v_1-v_2)''(0)=B(v_1^{3/2}(0)-v_2^{3/2}(0))-A(u_1^{3/2}(0)-u_2^{3/2}(0))>0
\end{equation}
So $v_1$ is initially greater than $v_2$, and this difference is on some interval increasing. Now we apply the existence result to continue these solutions until either $u_1'=u_2'$ or $v_1'=v_2'$; it is clear that $u_1<u_2$ and $v_1>v_2$ at any distance less than or equal to this. If this never happens, we get a contradiction since $u_1<u_2$, but $u_2$ vanishes at some finite distance. If $u_1'=u_2'$, (\ref{sphereEL2}) gives us that $(u_1-u_2)''<0$, and if $v_1'=v_2'$, (\ref{sphereEL1}) gives us that $(v_1-v_2)''>0$. So $u_1\leq u_2$, which again gives a contradiction since $u_2$ has compact support. The other three cases are treated similarly, and allow us to conclude that there is no solution to the bulk problem with unbounded support.  
\end{proof}

This same technique can be used to show that the special solutions are the only solutions such that the densities vanish at the same point. 
\begin{lemma}\label{uniquespecial}
For $d=5$, the special solution given by lemma \ref{specialsolution} is the only solution such that the densities have the same support. 
\end{lemma}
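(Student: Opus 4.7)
The plan is to mimic the comparison-with-special-solution argument used in the preceding lemma. Suppose for contradiction that $(u_1,v_1)$ is a solution in which both densities share the common compact support $[0,R_1]$ but which is not the special solution from Lemma \ref{specialsolution}. Because the initial value problem is uniquely solvable, this forces $u_1(0) \neq k_5 v_1(0)$, so without loss of generality $u_1(0) > k_5 v_1(0)$ (the reverse case is symmetric). I would compare $(u_1,v_1)$ against the special solution $(u_2,v_2)=(k_5 v_2,v_2)$ chosen so that $v_2(0)=v_1(0)$, and hence $u_2(0)=k_5 v_1(0) < u_1(0)$. By Lemma \ref{specialsolution}, $(u_2,v_2)$ has support $[0,R_2]$ for some finite $R_2$, since $d=5$ lands in the compactly supported Lane--Emden regime.

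Next I would verify the strict inequalities $u_1 > u_2$ and $v_1 < v_2$ on a right neighbourhood of the origin. Up to positive factors produced by the singular term $\tfrac{2}{r}u'$, the second derivatives of the differences at $r=0$ are
\begin{equation}
F\bigl(u_1^{3/2}(0)-u_2^{3/2}(0)\bigr)-E\bigl(v_1^{3/2}(0)-v_2^{3/2}(0)\bigr) \quad\text{and}\quad B\bigl(v_1^{3/2}(0)-v_2^{3/2}(0)\bigr)-A\bigl(u_1^{3/2}(0)-u_2^{3/2}(0)\bigr);
\end{equation}
with $v_1(0)=v_2(0)$ and $u_1(0)>u_2(0)$ these are strictly positive and strictly negative respectively, so indeed $u_1>u_2$ and $v_1<v_2$ just to the right of $0$.

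The real work is to propagate these inequalities to the larger of the two supports. This is the same maximum-principle style reasoning sketched at the close of the preceding lemma: continue both solutions until at some first radius $u_1'=u_2'$ or $v_1'=v_2'$ occurs, and at that radius (\ref{sphereEL1}), respectively (\ref{sphereEL2}), forces $(u_1-u_2)''$, respectively $(v_1-v_2)''$, to have a sign that prevents the strict inequality from reversing. Once $u_1>u_2$ and $v_1<v_2$ are established wherever all four functions are positive, the contradiction is immediate: since $v_1<v_2$ and $v_2(R_2)=0$, the support of $v_1$ ends at some $R_1^v<R_2$; since $u_1>u_2$ and $u_2(R_2)=0$, we have $u_1(R_2)>0$, so the support of $u_1$ extends strictly past $R_2$. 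Hence the supports of $u_1$ and $v_1$ differ, contradicting the hypothesis, and $(u_1,v_1)$ must be the special solution.

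The main obstacle is the propagation step. Because the radial system couples $u_p$ and $u_e$ with mixed signs ($-E u_e^{3/2}$ in the $u_p$ equation and $-A u_p^{3/2}$ in the $u_e$ equation), a naive scalar comparison principle does not directly apply, and one has to track both inequalities simultaneously through every crossing of derivative differences. Fortunately the cross-coupling has the correct sign to preserve the two inequalities in tandem, and the case analysis is exactly the one already executed at the end of the preceding lemma; so this step is feasible, but it is where all the genuine content sits.
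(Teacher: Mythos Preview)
Your proposal is correct and follows essentially the same comparison-with-a-special-solution argument as the paper: match one initial value, derive the pair of strict inequalities $u_1>u_2$, $v_1<v_2$ via the signed second-derivative computation and the derivative-crossing case analysis, and then read off that the two components of $(u_1,v_1)$ cannot vanish at the same radius. One small tightening: your final sentence claims $u_1(R_2)>0$, but the comparison is only guaranteed on the interval where all four functions are positive, which ends at $R_1^v<R_2$; the contradiction already occurs there, since $u_1(R_1^v)\geq u_2(R_1^v)>0$ while by hypothesis $u_1$ should vanish with $v_1$ at $R_1^v$.
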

\begin{proof}
Suppose $(u_1, v_1)$ is another such solution. We can again find $(u_2, v_2)$ such that either $u_1(0)=u_2(0)$ and $v_1(0)\neq v_2(0)$ or $u_1(0)\neq u_2(0)$ and $v_1(0)=v_2(0)$. Assuming the case that $u_1(0)=u_2(0)$ and $v_1(0)>v_2(0)$, we can again conclude that $u_1\leq u_2$ and $v_1\geq v_2$. In fact, since we can say that $u_1'\leq u_2'$ and at some point $u_1'<u_2'$, we can conclude that $u_1<u_2$. With the same reasoning $v_1>v_2$. But since $v_2$ and $u_2$ vanish at the same point, we have a contradiction. As the other cases are similar, we conclude that the special solutions are the unique solutions such that the densities vanish at the same point. 
\end{proof}

Then we have returned to our stopping criteria. In the simplest case, one of the densities vanishes first, and we find ourselves in the exterior problem, to be elaborated on below. The case when both vanish at the same distance must be a special solution, as shown above. 
\begin{lemma}\label{increasedecrease}
If $u_f'(R_0)>0$ at $R_0>0$, then for $R>R_0$, $u_f'\geq 0$ while $u_{\sim f}'\leq 0$. 
\end{lemma}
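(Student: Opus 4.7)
The plan is to argue by contradiction using the sign structure of the ``source'' terms $\phi$ and $\psi$, coupled with three facts. Fact (i): differentiating gives $\phi'=\tfrac{3}{2}(Fu_p^{1/2}u_p'-Eu_e^{1/2}u_e')$ and $\psi'=\tfrac{3}{2}(-Au_p^{1/2}u_p'+Bu_e^{1/2}u_e')$, so whenever $u_p'\ge 0$ and $u_e'\le 0$ on an interval, $\phi$ is nondecreasing and $\psi$ is nonincreasing there. Fact (ii): the algebraic identity
\begin{equation*}
B\phi+E\psi=(BF-EA)u_p^{3/2},
\end{equation*}
combined with $BF<EA$ (the inequality behind the ratio bounds in Lemma~\ref{ICconditions}), shows that at any point where $u_p>0$, $\phi(r)\ge 0$ forces $\psi(r)<0$; symmetrically $A\phi+F\psi=(FB-AE)u_e^{3/2}<0$, so $\psi(r)\ge 0$ forces $\phi(r)<0$. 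Fact (iii): from $(r^2u_p')'=r^2\phi$, at any critical point $u_p'(r_0)=0$ one has $u_p''(r_0)=\phi(r_0)$, and analogously for $u_e$ and $\psi$.

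WLOG take $f=p$. First I would pin down the signs at $R_0$ itself. In the bulk setting we have $\phi(0),\psi(0)<0$, so both $u_p',u_e'<0$ near $0$. Let $R^*\le R_0$ be the last zero of $u_p'$ preceding $R_0$; by hypothesis $u_p'>0$ on $(R^*,R_0]$, and by the contextual assumption that $R_0$ is a point at which a slope has just become positive (nothing of the sort having happened for $u_e$ earlier) we have $u_e'\le 0$ on $[0,R_0]$. Fact (iii) at $R^*$ then yields $\phi(R^*)\ge 0$, and Fact (ii) gives $\psi(R^*)<0$. Fact (i) applied on $(R^*,R_0]$, where $u_p'>0$ strictly, makes $\phi'>0$ strictly, so $\phi(R_0)>\phi(R^*)\ge 0$; symmetrically $\psi(R_0)\le\psi(R^*)<0$.

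Now set $R_1=\inf\{r>R_0:u_p'(r)<0\text{ or }u_e'(r)>0\}$; I want to show $R_1=\infty$. Suppose for contradiction $R_1<\infty$. Then on $[R_0,R_1]$ the signs $u_p'\ge 0$, $u_e'\le 0$ persist, so by Fact (i) $\phi$ is nondecreasing and $\psi$ is nonincreasing on $[R_0,R_1]$; in particular $\phi(R_1)\ge\phi(R_0)>0$ and $\psi(R_1)\le\psi(R_0)<0$. At $R_1$ one of two transitions must happen. If $u_p'(R_1)=0$ and $u_p'$ turns negative just after, then $u_p''(R_1)\le 0$, so Fact (iii) gives $\phi(R_1)\le 0$, contradicting $\phi(R_1)>0$. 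If instead $u_e'(R_1)=0$ and $u_e'$ turns positive, then $u_e''(R_1)\ge 0$, so $\psi(R_1)\ge 0$, contradicting $\psi(R_1)<0$. Either contradiction yields $R_1=\infty$, which is the conclusion.

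The main obstacle is establishing strictness of $\phi(R_0)>0$ (and $\psi(R_0)<0$); everything else is a book-keeping chain of signs. Strictness requires that the ``entry'' inequality $\phi(R^*)\ge 0$ be genuinely improved on $(R^*,R_0]$, which is where the combination $u_p'>0$ on that interval (which is non-trivial: it uses that $R_0$ is actually reached by $u_p'$ being positive, not merely touching zero) together with $u_e'\le 0$ becomes essential. One should be slightly careful at degenerate transitions (where $u_p''(R^*)=0$ as well) by passing to a slightly larger base point $R_0'\in(R^*,R_0]$ where $u_p'(R_0')>0$ is strict and $\phi(R_0')$ is already strictly positive, which is always possible since the strict inequality $\phi'>0$ on $(R^*,R_0]$ survives.
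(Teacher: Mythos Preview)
Your argument is correct and follows essentially the same approach as the paper's: both hinge on the observation that at a critical point of $u_f$ the sign of $u_f''$ is that of $\phi$ or $\psi$, together with the monotonicity of $\phi$ and $\psi$ once $u_p'\ge 0$ and $u_e'\le 0$. Your explicit identity $B\phi+E\psi=(BF-EA)\,u_p^{3/2}$ and the resulting strict inequality $\phi(R_0)>0$ are a clean packaging of what the paper asserts more informally (``we must always have one of $\psi$ or $\phi$ be strictly negative'') and handles via case analysis at the degenerate subcases $\phi(R_0)=0$.
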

\begin{proof}
Let us assume that $u_p'$ becomes positive before $u_e'$. By continuity, $u_p'$ must first be zero, so let us consider what can happen at $R_0$, the last point at which $u_p'=0$ before it becomes positive. We are going to examine all the possibilities for different values of $u_p$, $u_e$, $u_p'$, and $u_e'$ at $R_0$ and show that all have the same behavior.

First, consider if $u_e'(R_0)=u_p'(R_0)=0$. We must always have one of $\psi$ or $\phi$ be strictly negative, and since we are assuming $u_p'$ becomes positive first, this must be $\psi$. Therefore $u_e''(R_0)<0$, and on some interval $[R_0, R_0+\delta]$, $u_e'$ is negative. If $\phi(R_0)>0$, then $u_p'$ is positive on an interval also. If $\phi(R_0)=0$ then $u_p''(R_0)=0$, and we use $\psi(R_0)<0$ to conclude that $u''(R_0)<0$. But this means that $\phi''(R_0)>0$. Since $\phi'(R_0)=0$, we again conclude that $u_p$ is increasing while $u_e$ is decreasing on some interval $[R_0, R_0+\delta]$.

Now we may consider when $u_p'(R_0)=0$ and $u'_e(R_0)<0$. If $\phi(R_0)>0$ then $u_p''(R_0)>0$ and again we find that on some interval $[R_0, R_0+\delta]$ $u'_e$ is decreasing while $u'_p$ is increasing. If $\phi(R_0)=0$, then $u_p''=0$. We must then account for the behavior of $u_e$. Since $u_e'(R_0)<0$, then $\phi'(R_0)>0$. This implies that $u_p'\geq0$ on an interval, and we come out with the same result.  

So we can conclude that their is some interval $[R_0, R_0+\delta]$ such that $u_p$ is increasing and $u_e$ is decreasing. It remains to conclude that, as long as (\ref{sphereEL1}) and (\ref{sphereEL2}) are valid (when both densities are positive), $u_p$ is nondecreasing and $u_e$ is nonincreasing.  Note that in all cases above, on this same short interval, $\phi\geq 0$ and $\psi\leq 0$. $u_p'$ cannot become negative without first being zero and $u_e'$ cannot become positive without first being zero. Suppose we continue our solutions until the first time one of these happens, again suppose it is $u_p'(R_1)=0$. Then $u_p''(R_1)$ will have the same sign as $\phi$. But $\phi$ cannot decrease since $u_e$ is decreasing and $u_p$ is increasing on the interval $[R_0, R_1]$. Therefore $\phi(R_1)\geq 0$, and we can conclude that $u_p'$ cannot become negative. The same argument is true for $u_e'$ not becoming positive. Therefore, for $R>R_0$, $u_e'(R)\leq 0$ and $u_p'(R)\geq 0$. Obviously, the same result holds if $u_e'$ first becomes positive.  
\end{proof}

To conclude our classification of the bulk problem, we must then determine what happens when either $u_p'$ or $u_e'$ becomes positive. 
\begin{lemma}\label{boundeddensity}
If either density is at some point increasing, the other density will vanish, and the increasing density will remain bounded on the vanishing density's support.  
\end{lemma}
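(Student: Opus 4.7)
The plan is to exploit the key inequality $\psi\le -c\,u_e^{3/2}$, which places $u_e$ into the Lane-Emden index $3/2$ regime where compact support is automatic. WLOG assume $u_p'(R_0)>0$. By Lemma \ref{increasedecrease}, on the maximal bulk interval $[R_0,R^*)$ we have $u_p'\ge 0$, $u_e'\le 0$, and $\phi\ge 0$. The last of these rewrites as $F u_p^{3/2}\ge E u_e^{3/2}$, and substituting into $\psi$ yields
\begin{equation*}
  \psi \;=\; -A u_p^{3/2}+B u_e^{3/2} \;\le\; \left(B-\frac{AE}{F}\right)u_e^{3/2} \;=\; -c\,u_e^{3/2},\qquad c:=\frac{AE-BF}{F}>0,
\end{equation*}
using the physical inequality $AE>BF$ noted in Lemma \ref{ICconditions}. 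Hence $(r^2 u_e')'\le -c\,r^2 u_e^{3/2}$.

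For the vanishing of $u_e$, differentiate $\phi$ to get $\phi'=\tfrac{3F}{2}u_p^{1/2}u_p'-\tfrac{3E}{2}u_e^{1/2}u_e'\ge 0$; combined with $u_p'(R_0)>0$ and continuity this produces some $r_1\in(R_0,R^*)$ with $\phi_1:=\phi(r_1)>0$, and then $\phi\ge\phi_1$ on $[r_1,R^*)$ by monotonicity. Integrating $(r^2 u_p')'\ge \phi_1 r^2$ twice shows $u_p(r)\gtrsim \phi_1 r^2$ should $R^*=\infty$, whence $\psi\le -A u_p^{3/2}+O(1)\lesssim -r^3$ in the far field. A second integration of $(r^2u_e')'=r^2\psi$ then forces $u_e(r)\to -\infty$, contradicting $u_e\ge 0$. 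Therefore $R^*<\infty$, and at $R^*$ the bulk interval ends either with $u_e(R^*)=0$ (the desired conclusion, setting $R_1:=R^*$) or with $u_p$ blowing up.

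To rule out the blow-up alternative — which is exactly what boundedness of $u_p$ requires — suppose $u_p(r)\to\infty$ as $r\to R^*$. A subsolution comparison of $(r^2 u_p')'\ge F r^2 u_p^{3/2}/2$ (valid once $u_p$ is large enough that the $-E u_e^{3/2}$ term is dominated) with the scalar Lane-Emden blow-up ODE $v''+(2/r)v'=F v^{3/2}/2$ yields the lower rate $u_p(r)\gtrsim (R^*-r)^{-4}$ near $R^*$. So for any $r_M\in(R_0,R^*)$ with $u_p(r_M)=M$, one has $R^*-r_M\gtrsim M^{-1/4}$. Simultaneously, since $u_p\ge M$ on $[r_M,R^*)$, $\psi\le -AM^{3/2}/2$ there, and integrating $(r^2 u_e')'\le -AM^{3/2}r^2/2$ twice gives
\begin{equation*}
 u_e(r_M+s)\;\le\; u_e(r_M)-\tfrac{A}{16}\,M^{3/2}s^2,
\end{equation*}
so $u_e$ vanishes within $s=O(M^{-3/4})$ of $r_M$. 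Since $M^{-3/4}\ll M^{-1/4}$ for $M$ large, $u_e$ must vanish at a point strictly less than $R^*$, contradicting $u_e>0$ on $[R_0,R^*)$. Consequently $u_p$ cannot blow up at $R^*$, and is bounded on $[0,R_1]$.

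The principal technical obstacle is the lower blow-up rate estimate $u_p\gtrsim(R^*-r)^{-4}$: the scalar Lane-Emden comparison is standard, but it must be coupled with the vector structure in a regime where the coupling term $-E u_e^{3/2}$ is uniformly dominated. Fortunately this is automatic once $u_p$ exceeds a fixed threshold (any $M_0>(E u_e(R_0)^{3/2}/F)^{2/3}$ suffices), at which point the scalar Lane-Emden comparison applies verbatim with $F$ replaced by $F/2$, closing the argument.
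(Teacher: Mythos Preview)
Your overall strategy is sound and different from the paper's, but the blow-up step contains a genuine gap. You invoke the one-sided inequality $(r^2u_p')'\ge \tfrac{F}{2}r^2u_p^{3/2}$ and claim it yields the lower rate $u_p(r)\gtrsim(R^*-r)^{-4}$. It does not: a supersolution of the scalar equation can blow up \emph{more slowly} than the solution. For instance, $w(r)=(R^*-r)^{-2}$ satisfies $w''\sim 6(R^*-r)^{-4}\gg (R^*-r)^{-3}=w^{3/2}$ near $R^*$, so the differential inequality holds, yet $w\not\gtrsim(R^*-r)^{-4}$. In that scenario $R^*-r_M\sim M^{-1/2}$, and more generally the supersolution inequality alone allows $R^*-r_M$ to be arbitrarily small relative to $M^{-3/4}$, so your comparison $M^{-3/4}\ll R^*-r_M$ is not justified as written.

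The fix is easy but uses the \emph{opposite} inequality. Since $u_e\ge 0$ and $u_p'\ge 0$, one has $u_p''\le u_p''+\tfrac{2}{r}u_p'=Fu_p^{3/2}-Eu_e^{3/2}\le Fu_p^{3/2}$. Multiplying by $u_p'\ge 0$ and integrating from $R_0$ gives $(u_p')^2\le \tfrac{4F}{5}u_p^{5/2}+C_0$, hence $u_p'\le C\,u_p^{5/4}$ once $u_p$ is large. Separating variables and integrating from $r_M$ to $R^*$ then yields $R^*-r_M\ge cM^{-1/4}$ directly, after which your $M^{-3/4}$ vanishing estimate for $u_e$ closes the argument.

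For comparison, the paper's proof avoids all of this. It forms the linear combination $\tfrac{A}{F}\cdot(\ref{sphereEL1})+(\ref{sphereEL2})$, which eliminates $u_p^{3/2}$ from the right-hand side:
\[
\Bigl(u_e+\tfrac{A}{F}u_p\Bigr)''+\tfrac{2}{r}\Bigl(u_e+\tfrac{A}{F}u_p\Bigr)'=\Bigl(B-\tfrac{AE}{F}\Bigr)u_e^{3/2}.
\]
Since $u_e$ is decreasing, the right side is bounded; two integrations show $u_e+\tfrac{A}{F}u_p$ stays bounded on any finite interval, so $u_p$ cannot blow up while $u_e>0$. The case of both densities persisting to infinity is dispatched by a short limit argument. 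Your route is more quantitative (and, once corrected, gives explicit rate information), but the paper's linear-combination trick is a one-line structural observation that sidesteps any blow-up analysis.
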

\begin{proof}
We have already seen one density becomes nonincreasing while the other is nondecreasing. Only three things can happen from here. For simplicity, assume that $u_p'>0$. In the first situation, $u_p$ stays bounded while $u_e$ vanishes with compact support; in the second situation, $u_p$ develops a singularity while $u_e>0$; and in the third situation, both $u_p$ and $u_e$ have support on the entire line. We will here prove that the second and third options cannot happen. To show that the second situation cannot happen, we multiply (\ref{sphereEL1}) by $A/F$ and add this to (\ref{sphereEL2}). This gives us 
\begin{equation}
    (u_e''+\frac{A}{F}u_p'')+\frac{2}{r}(u_e'+\frac{A}{F}u_p')=(B-\frac{A}{F}E)u_e^{3/2}
\end{equation}
If $u_p$ developed a singularity at some $R_0$, then the left hand side would become unbounded as $r\rightarrow R_0$, while the right hand side would remain bounded, a contradiction. To see that the third option cannot happen, first note that this case implies that $\psi$ has a limit of zero. We can therefore conclude that $u_p$ must be bounded and has some finite limit as $r\rightarrow\infty$. This implies that $\lim_{r\rightarrow\infty}\phi(r)=(\frac{B}{A}F-E)\lim_{r\rightarrow\infty}u_e^{3/2}(r)$. Since $(\frac{B}{A}F-E)<0$ but $\phi\geq 0$, we must have $\lim_{r\rightarrow\infty}u_e^{3/2}(r)=0$. Of course, this implies $\lim_{r\rightarrow\infty}u_p^{3/2}(r)=0$, a contradiction. Then we have eliminated the second and third scenarios, so only the first scenario is possible. 
\end{proof}
Let us summarize our results for the bulk problem:
\begin{enumerate}
    \item We cannot have both densities increasing. This follows from lemma \ref{ICconditions} and lemma \ref{increasedecrease}
    \item If a density is increasing, it is bounded on the decreasing density's support, which must be compact. This is lemma \ref{boundeddensity} 
    \item The only solutions for which the densities have the same support are the special solutions. This is lemma \ref{uniquespecial}
\end{enumerate}
So given any initial values for the bulk problem, if we do not have a special solution, we see that one of the densities will vanish at some finite radius $R_0$. This transfers directly to the atmospheric problem.
\subsection{The Atmospheric Problem}\label{atmosprob}
For the atmospheric problem, we are trying to solve the initial value problem
\begin{align}\label{atmosphere}
    \Delta u &=Du^{3/2} \text{ on } (R_0, R_1)\\
    u(R_0)&=\gamma>0
\end{align}
Where $D>0$, $R_0>0$, and $R_1$ is either where $u$ vanishes, or $\infty$. With radial symmetry, it is a standard exercise to show that this problem is locally well-posed. We therefore focus only on the global structure of the solutions.

There are a few simple observations which can be made. First, if $u'$ becomes zero on the support of $u$, then $u$ will always be nondecreasing after that point. This is easy to see as if $u''=Du^{3/2}>0$, the first derivative must be increasing. This tells us that we cannot have any oscillatory behavior and that any solution which ever hits zero must have always been decreasing. When we combine the bulk and atmospheric problem, this will also prove the statement that we can ignore any solution of the bulk problem for which one of the densities is nondecreasing. 

Second, let us consider what the limiting values of $u$ may be. In particular, we want to conclude that there are no solutions such that $u(r)\rightarrow\alpha>0$ as $r\rightarrow \infty$ (since there is no oscillatory behavior, we know $u$ is monotonic, so such a limit exists). For suppose that such a solution exists, then $u$ may approach $\alpha$ from above or below. In either case, we must have $u'(r)\rightarrow0$, so for $r$ larger than some $R$, we can conclude that $u''(r)>\alpha-\epsilon$>0 contradicting that $u'(r)\rightarrow 0$. So if $u$ remains positive, it can only have limits of $\infty$ or 0.

Further, there may be only one strictly positive solution which approaches zero at infinity. For suppose that $u_1$ and $u_2$ were two such solutions, and let $u_1'(R_0)>u_2'(R_0)$. Then $u_1>u_2$ on $(R_0, R]$ for any finite $R$ by the maximum principle, and the difference is largest at $R$. But by assumption, if we take $R$ large enough we must have both $u_1(r)$ and $u_2(r)$ in $(0, \epsilon)$ for any $\epsilon>0$. Thus we can force $u_1'(R_0)$ to be arbitrarily close to $u_2'(R_0)$, so we must have equality and $u_1=u_2$. Call this possible solution $u_\infty$. This shows that for fixed $R_0$ and $u(R_0)$, if $u$ is a solution such that $u'(R_0)>u'_\infty(R_0)$, $u$ must be unbounded.  

Finally, concerning solutions which have compact support,  we can say that if we have two solutions $u_1$ and $u_2$ satisfying $u_1(R_0)=u_2(R_0)$, the one with the larger slope at $R_0$ must hit zero at a larger $r$. This follows from a maximum principle as the solution with a larger initial slope must have a larger function value than the other solution, at least as long as one of the functions is positive. This tells us that the initial slopes have a linear ordering: larger initial slopes lead to larger radii when the density becomes zero, if it becomes zero at all. However, nothing above tells us that any solution ever hits zero, so that is what we turn to now. 

If we pose this question as the boundary value problem, 
\begin{align}\label{exteriorPDE}
    \Delta u &=f(u) \text{ on }\Omega\\
    u&=g\text{ on }\partial\Omega
\end{align}
where $\Omega$ is the annular domain with inner radius $R_0$ and outer radius $R$, $g=a>0$ on the inner radius and $g=0$ on the outer radius, it is a standard exercise that a classical unique solution exists. Since the boundary values are spherically symmetric, by uniqueness, so is the solution. Therefore, the solution found here will be the same as the unique solution to the ODE problem with the corresponding $u'(R_0)$ value.

From the above arguments, we may conclude that given a fixed $a>0$ and $R_0>0$, for every $R>R_0$, we can find a unique strictly decreasing solution $u_R$, and that $u_{R_1}(r)\geq u_{R_2}(r)$ if $R_1>R_2$, with the inequality strict for $R_0<r\leq R_2$. Now, we want to prove that there is a solution (we have already stated it must be unique) with unbounded support which has $\lim_{r\rightarrow\infty}u(r)\rightarrow0$. 
\begin{theorem}\label{limitingcase}
For any given $R_0$ and $a>0$, there is a unique spherically symmetric solution to $\Delta u=Du^{3/2}$ which is both strictly positive and satisfies $\lim_{r\rightarrow\infty}u(r)=0$.
\end{theorem}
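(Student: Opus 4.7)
The plan is to prove existence via a shooting argument; uniqueness has already been established by the maximum-principle ordering of solutions stated just above the theorem. For each initial slope $\tau \in \mathbb{R}$ let $u_\tau$ denote the maximal classical solution of the radial equation with $u_\tau(R_0) = a$ and $u_\tau'(R_0) = \tau$; local existence is standard since $u^{3/2}$ is locally Lipschitz away from $u = 0$ and the coefficient $2/r$ is bounded on $[R_0, \infty)$. Multiplying the equation by $r^2$ gives $(r^2 u_\tau')' = D r^2 u_\tau^{3/2} \geq 0$, so $r^2 u_\tau'$ is nondecreasing along every solution. Combined with the ``no oscillation'' observation and the exclusion of a positive finite limit, both already recorded in the excerpt, this forces each solution into exactly one of three exclusive types: (A) $u_\tau$ vanishes at some finite radius; (B) $u_\tau > 0$ on $[R_0, \infty)$ with $u_\tau \to \infty$; or the desired (C) $u_\tau > 0$ on $[R_0, \infty)$ with $u_\tau \to 0$.

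Let $\mathcal{A}, \mathcal{B} \subset \mathbb{R}$ be the parameter sets giving types (A) and (B). I would first check that both are nonempty. For $\tau \geq 0$ the monotonicity of $r^2 u_\tau'$ yields $u_\tau$ nondecreasing and hence $u_\tau \geq a$; then $r^2 u_\tau'(r) \geq \tfrac{Da^{3/2}}{3}(r^3 - R_0^3)$ forces $u_\tau \to \infty$, so $\mathcal{B} \supseteq [0, \infty)$. For $\tau = -s$ with $s$ large, the crude bound $u_\tau'(r) \leq -sR_0^2/r^2 + Da^{3/2} r/3$, valid while $u_\tau \in [0, a]$, integrates to $u_\tau(R_0 + \epsilon) \leq 0$ for $\epsilon = O(1/s)$, placing large negative $\tau$ in $\mathcal{A}$.

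Next I would establish that $\mathcal{A}$ and $\mathcal{B}$ are open in $\mathbb{R}$. For $\mathcal{B}$: if $\tau_0 \in \mathcal{B}$ then at some $r_1$ one has $u_{\tau_0}(r_1) > 0$ and $u_{\tau_0}'(r_1) > 0$. On $[R_0, r_1]$ the solution is bounded below by a positive constant, so the nonlinearity is Lipschitz there, and continuous dependence on initial data propagates the positive-increasing state to nearby $\tau$, from which the previous monotonicity argument again forces $u_\tau \to \infty$. For $\mathcal{A}$: the zero of $u_{\tau_0}$ at $r^*(\tau_0)$ is strictly transversal, i.e.\ $u_{\tau_0}'(r^*) < 0$, since otherwise $(u_{\tau_0}, u_{\tau_0}') = (0,0)$ at $r^*$ would propagate backward by uniqueness of the IVP on the region where $u > 0$ to force $u_{\tau_0} \equiv 0$, contradicting $u_{\tau_0}(R_0) = a$. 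Continuous dependence on $[R_0, r^* - \delta]$ then places nearby $u_\tau$ close to $u_{\tau_0}$ in $C^1$ at $r^* - \delta$, and transversality forces each nearby $u_\tau$ to cross zero close to $r^*$. Since $\mathcal{A}$ and $\mathcal{B}$ are disjoint, open, and nonempty in the connected space $\mathbb{R}$, their union is proper, and any $\tau_* \notin \mathcal{A} \cup \mathcal{B}$ produces a type-(C) solution.

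I expect the main technical obstacle to be the openness of $\mathcal{A}$: because $u \mapsto u^{3/2}$ is not locally Lipschitz at $u = 0$, standard continuous-dependence results do not extend through the first zero, so the comparison must be carried out on an interval where the solution is bounded below by a fixed positive constant, and the conclusion that nearby shots also vanish nearby has to be extracted from the strict transversality of the zero rather than from continuity of solutions up to and past it.
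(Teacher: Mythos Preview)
Your shooting argument is correct and takes a genuinely different route from the paper. The paper constructs the solution as the monotone pointwise limit $\hat u(r)=\lim_{R\to\infty}u_R(r)$ of the annular boundary-value solutions $u_R$ set up just before the theorem, then invokes uniform $H^3$ elliptic estimates and Rellich--Kondrachov compactness to upgrade this to $C^2$ convergence on each compact subinterval, so that $\hat u$ inherits the equation. Your method avoids that machinery in favor of openness of $\mathcal A,\mathcal B$ and connectedness of the slope line. The paper's construction has the side benefit of producing $\hat u$ concretely as $\sup_R u_R$, which the subsequent integrability lemma (comparison with $C/r^4$) exploits directly; from your abstract existence that ordering would have to be recovered afterward via the comparison principle, though this is easy.

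Two small remarks. First, your closing worry is unfounded: $u\mapsto u^{3/2}$ \emph{is} locally Lipschitz at $u=0$, since its derivative $\tfrac32 u^{1/2}$ is bounded on every $[0,M]$; hence standard uniqueness and continuous dependence hold right through the first zero, and your backward-uniqueness argument for $u'_{\tau_0}(r^*)<0$ is fully rigorous as written (you may be thinking of sublinear powers such as $u^{1/2}$). Second, solutions above $\hat u$ in fact blow up at a \emph{finite} radius (this is the lemma immediately after the theorem), so your type (B) should read ``$u_\tau$ becomes unbounded on its maximal interval of existence'' rather than ``$u_\tau>0$ on $[R_0,\infty)$ with $u_\tau\to\infty$''; your openness and nonemptiness arguments are unaffected by this rewording.
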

\begin{proof}
First, since each $u_R$ is strictly decreasing and $u_{R_1}(r)\geq u_{R_2}(r)$ if $R_1>R_2$, $\lim_{R\rightarrow\infty} u_R(r)$ exists for every $r$. Define this value to be $\hat{u}(r)$. The claim is that $\hat{u}(r)$ is the desired function. To show this, we are going to show that $u_R\rightarrow \hat{u}$ in $C^2{[R_0, R_1]}$ for any $R_0<R_1<\infty$. Then in particular we will have that $D^\alpha u_R(r)\rightarrow D^{\alpha}\hat{u}(r)$ for any $|\alpha|\leq 2$ and any $R_0<r<\infty$. So $\hat{u}$ will solve the ODE. 

To get this convergence, we use Rellich-Kondrachov to get a compact embedding of the set $\{u_R\}_{R>N}$ when we have restricted to $\Omega_N=(R_0, N)$. So fix $N$ and consider the set $\{u_R\}_{R>N}$. In particular, each of these functions is strictly greater than zero on $\Omega_N$. Then applying some elliptic regularity estimates, we conclude that each $u_R$ is in $H^3(\Omega_N)$ (meaning the $L^2$ Sobolev space) and further satisfies 
\begin{equation}\label{ellipticregularity}
||u_R||_{3, \Omega_N}\leq C(||u_R^{3/2}||_{1,\Omega_N}+||u_R||_{0,\Omega_N}+||u_R||_{3, \partial\Omega_N})    
\end{equation}
Thinking of $\Omega_N$ as a set in $\mathbb{R}^3$, the last term is bounded as $u_R$ is less or equal to $a$ on the boundary. The middle term is also bounded by $a$ multiplied by the size of $\Omega_N$. To bound the first term, we need to have some control over $u_R'$, and to do this, we apply the same estimate one derivative lower: $$||u_R||_{1, \Omega_N}\leq||u_R||_{2, \Omega_N}\leq C(||u_R^{3/2}||_{0,\Omega_N}+||u_R||_{0,\Omega_N}+||u||_{2, \partial\Omega_N})\leq K$$ for some fixed $K$, for all $R>N$. This allows us to get a uniform bound on (\ref{ellipticregularity}). So we not only have that $\{u_R\}_{R>N}$ is contained in $H^3(\Omega_N)$, we have that it is bounded as well. Then, treating these as functions on the real line, Rellich-Kondrachov gives us that this set embeds compactly into $C^{2}(\overline{\Omega})=C^{2}([R_0, N])$. In particular we get pointwise convergence as described above, so $\hat{u}$ is a solution to the ODE. 
\end{proof}
$\hat{u}$ is then a sort of boundary case to solutions of (\ref{atmosphere}): solutions with $u'(R_0)<\hat{u}'(R_0)$ have compact support while solutions with $u'(R_0)>\hat{u}'(R_0)$ are unbounded. Although not interesting from a physical perspective, we can say more about these unbounded solutions.
\begin{lemma}
Any solution greater than $\hat{u}$ develops a singularity. 
\end{lemma}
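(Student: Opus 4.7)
The plan is to argue by contradiction: assume $u$ is a solution with $u'(R_0) > \hat u'(R_0)$, so that $u > \hat u$ on the common interval (by the weak maximum principle applied to $w = u - \hat u$, which satisfies $\Delta w = D(u^{3/2}-\hat u^{3/2}) \geq 0$ wherever $u \geq \hat u$, with $w(R_0)=0$ and $w'(R_0)>0$), and that $u$ exists on all of $[R_0,\infty)$; we derive a contradiction. Since $(r^2 u')' = D r^2 u^{3/2} > 0$, the quantity $r^2 u'$ is strictly increasing, so $u'$ is either positive from the start or becomes so past some $r_1$ and remains so thereafter. Combining this with $u \geq \hat u > 0$ and the previously established exclusion of positive finite limits as $r\to\infty$, global existence forces $u(r) \to \infty$ as $r \to \infty$.

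The heart of the argument is that for $r$ large the first-order term $(2/r) u'$ is dominated by $D u^{3/2}$, so the equation reduces effectively to the scalar $u'' \approx D u^{3/2}$, which blows up in finite time. To control $u'$, I would multiply the ODE by $u'>0$ and integrate on $[r_1,r]$; dropping the nonnegative integral $\int (2/s)(u')^2\,ds$ gives the a priori bound
\begin{equation*}
(u'(r))^2 \;\leq\; (u'(r_1))^2 + \tfrac{4D}{5}\, u^{5/2}(r),
\end{equation*}
so $u' \leq M\, u^{5/4}$ once $u$ is sufficiently large. Therefore $(2/r) u' \leq (2M/r)\, u^{5/4} \leq (D/2)\, u^{3/2}$ whenever $u \geq (4M/(Dr))^4$, which holds for all $r \geq r_*$ large enough because $u \to \infty$. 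This yields the one-sided inequality $u'' \geq (D/2)\, u^{3/2}$ on $[r_*, \infty)$.

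From here finite-time blow-up follows by the standard autonomous-ODE argument: multiplying $u'' \geq (D/2) u^{3/2}$ by $2u'>0$ and integrating gives $(u'(r))^2 \geq (D/5)\, u^{5/2}(r)$ for $r$ sufficiently large, hence $u' \geq \sqrt{D/5}\, u^{5/4}$; separating variables,
\begin{equation*}
u^{-1/4}(r) \;\leq\; u^{-1/4}(r_*) \;-\; \tfrac{1}{4}\sqrt{D/5}\,(r - r_*),
\end{equation*}
so $u$ must blow up at some finite $r^{**} \leq r_* + 4\, u^{-1/4}(r_*)/\sqrt{D/5}$, contradicting global existence. The main technical obstacle is precisely the absorption step: showing that the energy-type upper bound $u' \lesssim u^{5/4}$, combined with the $1/r$ prefactor and with $u\to\infty$, is strong enough to bound the gradient term by a fixed fraction of $D u^{3/2}$. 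Everything downstream is a routine comparison with the autonomous blow-up ODE $v'' = c v^{3/2}$.
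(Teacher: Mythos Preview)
Your argument is correct and complete. The paper takes a different route: rather than deriving the energy-type bound $u' \lesssim u^{5/4}$ to absorb the first-order term and reduce to the autonomous inequality $u'' \geq (D/2)\,u^{3/2}$, it builds an explicit blow-up barrier. Concretely, it sets $f_R(r) = \omega/(R+1-r)^4$ and checks directly that $\Delta f_R \leq D f_R^{3/2}$ on $[R,R+1)$ once $\omega$ is chosen large enough (independently of $R$). Since, as you also observe, both $u$ and $u'$ tend to infinity, one can pick $R$ with $u(R) > f_R(R) = \omega$ and $u'(R) > f_R'(R) = 4\omega$; a maximum-principle comparison then forces $u \geq f_R$ on $[R,R+1)$, and $f_R$ blows up at $R+1$. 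Your energy-and-absorption approach is more systematic and transfers without change to any superlinear power nonlinearity, while the paper's barrier argument is shorter once the comparison function is in hand and yields the sharper qualitative statement that blow-up occurs within unit distance as soon as $u$ and $u'$ exceed fixed thresholds.
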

\begin{proof}
The key here is what we have already proven about these solutions: they cannot remain bounded as $r\rightarrow\infty$. Then given any such solution of this type, we may assume that for large enough $r$, $u(r)>\gamma$ for some large $\gamma$. From the ODE, we may also conclude that $u'$ cannot be bounded. Fix a large $R$, and consider $f_R(r)=\frac{\omega}{(R+1-r)^4}$. If $\omega$ is chosen large enough, then on the interval $[R, R+1)$,
\begin{equation}\label{singularity}
    \Delta f_R(r)=\frac{20\omega}{(R+1-r)^6}+\frac{8\omega}{r(R+1-r)^5}\leq D\frac{\omega^{3/2}}{(R+1-r)^6}=Df_R^{3/2}(r)
\end{equation}
Note that $\omega$ is chosen independently of $R$. We have $\Delta(u(r)-f_R(r))=D(u^{3/2}{r}-f_R^{3/2}(r))>0$ on a small interval $[R, R+\delta]$ if $u(R)>f_R(R)$, and $u'(R)>f_R'(R)$. This implies that $u(r)>f_R(r)$ as long as (\ref{singularity}) holds. But since $u(r)$ and $u'(r)$ are unbounded, we may find an $R$ so that $u(R)>f_R(R)$ and $u'(R)>f'_R(R)$. Therefore $u(r)>f_R(r)$ on $[R, R+1)$, so $u(r)$ must develop a singularity.
\end{proof}
A final point we want to make is to show that $\hat{u}$ is integrable.
\begin{lemma}
$\hat{u}$ is integrable.
\end{lemma}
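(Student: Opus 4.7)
The plan is to exhibit an explicit integrable supersolution of the ODE that dominates $\hat{u}$ on $[R_0,\infty)$, then integrate the resulting pointwise bound. The scaling of the equation $\Delta u = D u^{3/2}$ on $\mathbb{R}^3$ (invariant under $u(\cdot)\mapsto \lambda^4 u(\lambda\,\cdot\,)$) dictates the self-similar decay rate $r^{-4}$, so I would try the ansatz $w(r) = C r^{-4}$.

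A direct computation in radial coordinates gives
\begin{equation*}
\Delta w = w''(r) + \tfrac{2}{r}w'(r) = 20Cr^{-6} - 8Cr^{-6} = 12C r^{-6},
\end{equation*}
while $Dw^{3/2} = DC^{3/2} r^{-6}$. Hence $\Delta w \le D w^{3/2}$ precisely when $C \ge 144/D^2$. Fixing $C := \max\{144/D^2,\; a R_0^4\}$, $w$ is a classical (radial) supersolution on $(R_0,\infty)$ with $w(R_0)\ge a = \hat{u}(R_0)$ and $w(r)\to 0$ as $r\to\infty$.

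Next I would run a comparison argument to conclude $\hat{u}\le w$ on $[R_0,\infty)$. Set $v:=\hat{u}-w$; then $v(R_0)\le 0$ and $v(r)\to 0$ as $r\to\infty$, so extending $v$ continuously by $v(\infty):=0$ it attains its supremum on the compactified half-line $[R_0,\infty]$. Suppose $\sup v > 0$. The supremum cannot be attained at $R_0$ (where $v\le 0$) nor at infinity (where $v=0$), so it occurs at an interior point $r_0\in(R_0,\infty)$, where $v'(r_0)=0$ and $v''(r_0)\le 0$, hence $\Delta v(r_0)\le 0$. On the other hand, using $\Delta w\le Dw^{3/2}$ and monotonicity of $t\mapsto t^{3/2}$,
\begin{equation*}
\Delta v = D\hat{u}^{3/2} - \Delta w \;\ge\; D\bigl(\hat{u}^{3/2} - w^{3/2}\bigr),
\end{equation*}
and at $r_0$ the right side is strictly positive because $\hat{u}(r_0) > w(r_0)$. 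This contradicts $\Delta v(r_0)\le 0$, so $\hat{u}(r) \le C r^{-4}$ for every $r\ge R_0$.

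Integrating this pointwise bound in $\mathbb{R}^3$ against the radial measure,
\begin{equation*}
\int_{|x|\ge R_0} \hat{u}(|x|)\,d^3x \;\le\; 4\pi C\int_{R_0}^{\infty} r^{-2}\,dr \;=\; \frac{4\pi C}{R_0} \;<\; \infty,
\end{equation*}
which is the desired integrability. The only delicate point is making the maximum principle rigorous on an unbounded domain: this is exactly what the boundary behavior $v(R_0)\le 0$ together with $v(\infty)=0$ is designed to handle, forcing any hypothetical positive supremum to occur at an interior maximum where the pointwise subharmonicity contradiction applies. Everything else is a direct computation.
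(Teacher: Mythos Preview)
Your proof is correct and follows essentially the same approach as the paper: both exploit the self-similar profile $Cr^{-4}$ as a supersolution of $\Delta u = Du^{3/2}$ (with equality at $C=144/D^2$) and compare $\hat{u}$ against it to extract the $r^{-4}$ decay. The paper splits into two cases according to whether $a$ lies below or above $144/(D^2 R_0^4)$, handling the first case via the uniqueness of the decaying solution and the second via a supersolution comparison; your choice $C=\max\{144/D^2,\,aR_0^4\}$ collapses both cases into a single clean maximum-principle argument, which is a modest but genuine simplification.
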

\begin{proof}
Recall that for each $R>0$ and each $a>0$ we get such a $\hat{u}$, so now we label them as $\hat{u}_{R,a}$. We will use that we have a special solution: $v(r)=\frac{C}{r^4}$ for $C=144/D^2$. Obviously, since the equation is not linear, we cannot scale this to get any other solutions. But we see that if $v(R)=a$, then $\hat{u}_{R,a}$ coincides with $v(r)$; this must be the case since we have already shown above that the solution to the ODE which has unbounded support and converges to zero is unique. For simplicity, fix $R$. Then we have two scenarios: either $a>v(R)$ or $a<v(R)$. 

In the latter case, suppose at some point $R_1$ $\hat{u}_{R,a}(R_1)=v(R_1)$.  By uniqueness of the solution converging to zero, both of these functions must after $R_1$ coincide. Therefore $\hat{u}_{R,a}\leq v$, and decays at least as fast as $r^{-4}$, which is integrable outside of a ball on $\mathbb{R}^3$. In the former case, we consider the function $\overline{v}(r)=\frac{\overline{c}}{r^4}$, where $\overline{c}$ is chosen so that $\overline{v}(R)>\hat{u}_{R, a}(R)=a$. Since $\overline{c}>144/D^2$, one can check that, treating the function as spherically symmetric on $\mathbb{R}^3$, $\Delta v\leq D\overline{v}^{3/2}$. Now suppose that there is some $R_1$ so that $\hat{u}_{R,a}(R_1)=\overline{v}(R_1)$. Then, for at least some small interval $[R_1, R_1+\delta]$, $\hat{u}_{R,a}> \overline{v}$ (we could not have equality since $\overline{v}$ does not satisfy the ODE). But then we have on that interval $\Delta (\hat{u}_{R,a}-\overline{v})\geq D(\hat{u}_{R,a}^{3/2}-\overline{v}^{3/2})> 0$. By the maximum principle, this implies that $\hat{u}_{R,a}(R_1+\delta)>\overline{v}(R_1+\delta)$, so we may extend the interval longer and the functions would always get further apart. But since they both converge to zero, this cannot happen. So we see that $\hat{u}_{R,a}<\overline{v}$, and therefore goes to zero at least as fast as $r^{-4}$ and is therefore integrable. So we have shown that in all scenarios $\hat{u}_{R,a}$ is an integrable function.
\end{proof}
\subsection{Combining the Problems}
Now that we have addressed the atmospheric problem and the bulk problem in isolation, we can bring them together to solve the system of equations. The method is clear. Given any pair of central densities which satisfy (\ref{densityratios}), we use the results of subsection \ref{bulkprob} to find the unique solution of the system continuing outward radially until one of the densities vanishes, say at $R_0$. The nonvanishing density will satisfy $u(R_0)=a>0$ and $u'(R_0)=b$. We then use these values as our initial values in the atmospheric problem and apply the results of subsection \ref{atmosprob}.

One issue which needs to be addressed is regularity. We have established that the solutions of both the bulk and atmospheric problems are $C^2$ on their domains. But the density which is nonzero in the atmosphere might have a jump in its second derivative at the interface of the two problems. To show that this is not the case, let us only consider those initial values such that $u^{3/2}_p=\rho_p$ and $u^{3/2}_e=\rho_e$ are integrable, recalling that in section \ref{themodel} we transformed the actual densities by the inverse of this. It is clear that these will be the densities which have compact support, as well as the limiting case discussed in theorem \ref{limitingcase}. Of course, for our minimization problem, all the other solutions are nonviable since one of the densities is unbounded. 

Before we proceed, we need the following extension of lemma 4.2 in \cite{GT}.

\begin{lemma}
Let $f$ be bounded, locally H\"older continuous with exponent $\alpha\leq 1$, and in $L^p$ for any $1\leq p<\infty$ on $\mathbb{R}^n$. Then if $w$ is the Newtonian potential of $f$,  $w\in C^2(\mathbb{R}^n)$.
\end{lemma}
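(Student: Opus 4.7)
The plan is to reduce to the original Gilbarg--Trudinger Lemma 4.2 by a localization argument, and handle the tail using differentiation under the integral sign.

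First I would verify that the Newtonian potential
\[
w(x) = \int_{\mathbb{R}^n} \Gamma(x-y)\,f(y)\,dy
\]
is well-defined as an absolutely convergent integral for every $x \in \mathbb{R}^n$. Splitting the domain into $\{|y-x|\leq 1\}$ and its complement, boundedness of $f$ controls the singularity of $\Gamma$ near $y=x$ (since $\Gamma(x-y)$ is locally integrable in $\mathbb{R}^n$), while $f \in L^1$ combined with the decay $|\Gamma(x-y)| \lesssim |x-y|^{2-n}$ handles the behavior at infinity. This step is essentially routine.

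Next, fix an arbitrary $R > 0$ and decompose $f = f_1 + f_2$, where $f_1 = f\,\chi_{B_{2R}}$ and $f_2 = f\,\chi_{\mathbb{R}^n \setminus B_{2R}}$. Then $w = w_1 + w_2$ where $w_i$ is the Newtonian potential of $f_i$. The function $f_1$ is bounded, locally H\"older continuous with exponent $\alpha$, and supported in the bounded set $B_{2R}$, so the original Lemma~4.2 of \cite{GT} applies directly: $w_1 \in C^2(B_{2R})$, and in particular $w_1 \in C^2(B_R)$. For $w_2$, observe that for $x \in B_R$ and $y$ in the support of $f_2$ we have $|x-y| \geq R$, so the integrand $\Gamma(x-y)f_2(y)$ and all its $x$-derivatives of order up to two are dominated, uniformly in $x \in B_R$, by a multiple of $|f(y)|$ (with prefactors that are bounded by negative powers of $R$). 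Since $f_2 \in L^1$, a standard application of the dominated convergence theorem justifies differentiating under the integral sign twice, yielding $w_2 \in C^\infty(B_R)$ and in particular $w_2 \in C^2(B_R)$.

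Combining, $w = w_1 + w_2 \in C^2(B_R)$. Since $R > 0$ was arbitrary, we conclude $w \in C^2(\mathbb{R}^n)$.

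The main obstacle, such as it is, lies in the first step: making sure the integral defining $w$ actually converges at infinity. This is where the hypothesis $f \in L^p$ for all $1 \leq p < \infty$ is used, via $p = 1$, so that the tail of the integral of $f$ against $\Gamma$ is finite. Once $w$ is known to be well-defined pointwise, the rest of the argument is a straightforward localization onto $B_{2R}$ (where Gilbarg--Trudinger applies verbatim) together with the trivial smoothness of the far-field contribution.
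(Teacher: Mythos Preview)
Your argument is correct (at least for $n\geq 3$, which is the case actually used in the paper). It is, however, a genuinely different route from the paper's own proof. The paper does not split $f$ spatially; instead it truncates $f$ to $f_R=f\chi_{B_R}$, invokes the explicit Gilbarg--Trudinger formula for $D_{ij}w_R$, rewrites that formula by splitting the integration over $B_R$ into $B_1(x)$ and $B_R\setminus B_1(x)$, and then passes to the limit $R\to\infty$. This produces a candidate function $u(x)$ for $D_{ij}w$, and the proof concludes by showing $D_{ij}w_R\to u$ and $w_R\to w$ uniformly on compact sets, so that $u=D_{ij}w$.

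Your localization is cleaner: you treat Gilbarg--Trudinger's lemma as a black box on the near piece $f_1=f\chi_{B_{2R}}$ and handle the far piece $f_2$ by a trivial dominated-convergence differentiation, since on $B_R$ the kernel against $f_2$ has no singularity. This avoids both the explicit $D_{ij}w_R$ formula and the uniform-convergence-of-derivatives step. One small point worth flagging: when you apply Lemma~4.2 of Gilbarg--Trudinger to $f_1$, you are really applying it to $f$ restricted to the domain $\Omega=B_{2R}$ (where it is locally H\"older), not to the cut-off function $f\chi_{B_{2R}}$ viewed on $\mathbb{R}^n$ (which need not be H\"older across $\partial B_{2R}$); this is fine since the two Newtonian potentials coincide, but it is worth being explicit. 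Your observation that only $f\in L^1$ is really needed is also correct: for $n\geq 3$ both $\Gamma$ and $D_{ij}\Gamma$ are bounded on $\{|x-y|\geq R\}$, so the full strength of the $L^p$ hypothesis is not used.
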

\begin{proof}
Consider $f_R(y)=\chi_{B_R}f(y)$, and call $w_R$ its Newtonian potential. Then $f_R$ satisfies the hypotheses of lemma 4.2 in \cite{GT} in $B_R$, and so $w_R$ is $C^2(B_R)$. By that lemma, we have for any $x\in B_R$
\begin{equation}
    D_{ij}w_R(x)=\int_{B_R}D_{ij}\Gamma(x-y)(f_R(y)-f_R(x))dy-f_R(x)\int_{\partial B_R}D_i\Gamma(x-y)\nu_j(y)ds_y
\end{equation}
where $\Gamma$ is the fundamental solution of Laplace's equation. For $x$ such that $|x|<R-1$, rewrite this as 
\begin{align}
    D_{ij}w_R(x)&=\int_{B_1(x)}D_{ij}\Gamma(x-y)(f_R(y)-f_R(x))dy+\int_{B_R\backslash B_1(x)}D_{ij}\Gamma(x-y)(f_R(y)-f_R(x))dy\\
    &-f_R(x)\int_{\partial B_R}D_i\Gamma(x-y)\nu_j(y)ds_y \nonumber\\
    &=\int_{B_1(x)}D_{ij}\Gamma(x-y)(f_R(y)-f_R(x))dy+\int_{B_R\backslash B_1(x)}D_{ij}\Gamma(x-y)f_R(y)\\
    &-f_R(x)\int_{\partial B_1}D_i\Gamma(x-y)\nu_j(y)ds_y \nonumber
\end{align}
 For any selected $x$, we can define $D_{ij}w_R(x)$ like this for all $R>|x|+1$. Now define for $x\in \mathbb{R}^3$ 
 \begin{equation}
     u(x)=\int_{B_1(x)}D_{ij}\Gamma(x-y)(f(y)-f(x))dy+\int_{\mathbb{R}^n\backslash B_1(x)}D_{ij}\Gamma(x-y)f(y)dy-f(x)\int_{\partial B_1}D_i\Gamma(x-y)\nu_j(y)ds_y
 \end{equation}
 This is well defined because we can apply H\"older's inequality to the second term. Then for large $R$ we have 
 \begin{equation}
     |u(x)-D_{ij}w_R(x)|\leq \int_{ B^c_R(x)}\left|D_{ij}\Gamma(x-y)f(y)\right| dy\leq C \int_{ B^c_R(x)}\frac{f(y)}{|x-y|^n} dy
 \end{equation}
 which can be made as small as we like if we take $R$ large enough. Then we have that for any $x\in \mathbb{R}^n$, $D_{ij}w_R(x)\rightarrow u(x)$. Further, on any compact set in $\mathbb{R}^n$, this convergence is uniform. Since we also have that $w_R(x)\rightarrow w(x)$ uniformly on any compact set, we can conclude that $u=D_{ij}w$, and that $w$ is $C^2(\mathbb{R}^n)$.
\end{proof}
In the following lemma, we need to use that $\rho_e$ and $\rho_p$ are H\"older continuous. For a density with compact support, this will be true as long as its first derivative does not become unbounded as it approaches the boundary of its support. This is however ruled out by (\ref{sphereEL1}) and (\ref{sphereEL2}) since a large negative first derivative would lead to a positive second derivative. 
\begin{lemma}
When $\rho_e$ and $\rho_p$ are integrable, they are both $C^2$ on their supports. 
\end{lemma}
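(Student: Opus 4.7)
The plan is to use the preceding lemma (the $C^2$-regularity criterion for Newtonian potentials) together with the Euler-Lagrange equations (\ref{el1}) and (\ref{el2}) to transfer smoothness from $B\rho_e,B\rho_p$ back to $\rho_e^{2/3},\rho_p^{2/3}$, and hence to the densities themselves. The key point is that the bulk/atmosphere interface, where one density vanishes while the other stays positive, lies in the \emph{interior} of the support of the nonvanishing density, so once the right-hand side of the rearranged Euler-Lagrange equation is globally $C^2$ there can be no jump in the second derivative of that density at the interface.

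The first step is to check that $\rho_e$ and $\rho_p$ satisfy the hypotheses of the preceding lemma on $\mathbb{R}^3$. Boundedness follows from the ODE analysis of Sections \ref{bulkprob}–\ref{atmosprob}: on the compactly supported branch each density is majorized by its central value, while on the limiting branch of Theorem \ref{limitingcase} one has $\hat u = O(r^{-4})$ and hence $\rho_f = \hat u^{3/2} = O(r^{-6})$. Combined with $\rho_f \in L^1$, this yields $\rho_f \in L^p(\mathbb{R}^3)$ for every $p \in [1,\infty]$. For local H\"older continuity, $\rho_f$ is $C^2$ on the interiors of the bulk and atmospheric regions by (\ref{sphereEL1})–(\ref{sphereEL2}) and (\ref{atmosphere}); at the outer edge of the support, the remark just above the statement rules out an unbounded first derivative, so $u_f = \rho_f^{2/3} \in C^2$ vanishes with bounded slope and $\rho_f = u_f^{3/2}$ vanishes like $(R-r)_+^{3/2}$, which is globally Lipschitz on $\mathbb{R}^3$.

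Applying the preceding lemma, $B\rho_p$ and $B\rho_e$ both lie in $C^2(\mathbb{R}^3)$. Rearranging (\ref{el1}) and (\ref{el2}) gives, at any point where $\rho_f>0$,
\begin{equation*}
\tfrac{5}{3}k_f\,\rho_f^{2/3}(x) \;=\; \Phi_f(x) + \lambda_f,
\end{equation*}
where $\Phi_f$ is a fixed linear combination of $B\rho_p$ and $B\rho_e$ and is therefore $C^2(\mathbb{R}^3)$. Hence $\rho_f^{2/3}$ coincides on the interior of $\operatorname{supp}\rho_f$ with a $C^2(\mathbb{R}^3)$ function, and since $s\mapsto s^{3/2}$ is $C^\infty$ on $(0,\infty)$, the composition $\rho_f = (\rho_f^{2/3})^{3/2}$ is $C^2$ on the interior of its support. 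In particular the bulk/atmosphere interface is interior to $\operatorname{supp}\rho_p$ (resp.\ $\operatorname{supp}\rho_e$), so the feared second-derivative jump there is ruled out.

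The main obstacle is verifying the H\"older hypothesis of the preceding lemma near the outer boundary of the support: one has to use the ODE to argue that $\rho_f$ does not develop an unbounded derivative as it approaches zero. Once this is in hand, the potential-theoretic regularity and the smoothness of the map $s\mapsto s^{3/2}$ on $(0,\infty)$ do the rest.
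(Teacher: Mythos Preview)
Your strategy—verify the hypotheses of the preceding lemma, conclude that $B\rho_e, B\rho_p \in C^2(\mathbb{R}^3)$, and then read off $C^2$ regularity of $\rho_f^{2/3}$ from the Euler–Lagrange relation—is the paper's plan as well. There is, however, a circularity in your third step. You invoke (\ref{el1}) and (\ref{el2}) to write $\tfrac{5}{3}k_f\rho_f^{2/3}=\Phi_f+\lambda_f$ wherever $\rho_f>0$. But the lemma sits in Section~\ref{ELsection} and concerns the densities built piecewise in Sections~\ref{bulkprob}--\ref{atmosprob}: one solves the bulk ODE out to $R_0$, then the atmospheric ODE beyond, matching value and first derivative at $R_0$. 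For these piecewise-constructed $\rho_f$, the \emph{integral} Euler--Lagrange equations (\ref{el1}), (\ref{el2}) have not yet been established; indeed, the sentence immediately after the lemma in the paper says that $C^2$ regularity is what finally allows one to conclude (\ref{el1}) and (\ref{el2}). Citing them as input is therefore begging the question.

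The paper closes this gap by defining an auxiliary $\bar\rho_f$ via the right-hand side of (\ref{el1})/(\ref{el2}) (your $\Phi_f+\lambda_f$, with $\lambda_f$ chosen to make $\bar\rho_f$ continuous at the outer edge), observing from the preceding lemma that $\bar\rho_f^{2/3}$ is globally $C^2$, and then proving $\bar\rho_f=\rho_f$ on the support. The mechanism is that both $\bar\rho_f^{2/3}$ and $\rho_f^{2/3}$ satisfy the same Poisson equation with the \emph{given} $\rho_e,\rho_p$ on the right, so their difference is harmonic separately on $[0,R_0]$ and on $[R_0,R_1]$; a maximum-principle and boundary-matching argument, using that both vanish at $R_1$ (or at infinity in the limiting case of Theorem~\ref{limitingcase}), forces the difference to be identically zero across the interface. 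Once $\rho_f^{2/3}=\bar\rho_f^{2/3}$ is in hand, your composition step finishes the proof. Your verification of boundedness, $L^p$ membership, and H\"older continuity at the outer boundary is fine and in fact more explicit than the paper's.
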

\begin{proof}
When $\rho_{f}$ have compact support, let us define new densities by 

\begin{equation}\label{el1def}
    \frac{5}{3}k_p\bar{\rho}_p^{2/3}(x):=-q^2\int_{\mathbb{R}^3}\frac{(\rho_p-\rho_e)(y)}{|x-y|}d^3y+Gm_p \int_{\mathbb{R}^3}\frac{(m_p\rho_p+m_e\rho_e)(y)}{|x-y|}d^3y+\lambda_p
\end{equation}
and
\begin{equation}\label{el2def}
    \frac{5}{3}k_e\bar{\rho}_e^{2/3}(x):=C q^2\int_{\mathbb{R}^3}\frac{(\rho_p-\rho_e)(y)}{|x-y|}d^3y+Gm_e \int_{\mathbb{R}^3}\frac{(m_p\rho_p+m_e\rho_e)(y)}{|x-y|}d^3y+\lambda_e
\end{equation}
where we take (\ref{el1def}) to hold on the support of $\rho_p$, $\bar{\rho}_p$ is defined to be zero outside of this support, and $\lambda_p$ is chosen to make $\bar{\rho}_p$ continuous. $\bar{\rho}_e$ is defined analogously. We can then apply lemma 4.2 in \cite{GT} to conclude that both $\bar{\rho}_e$ and $\bar{\rho}_p$ are $C^2$ on their supports. 

Now apply $-\Delta$ to (\ref{el1def}) and (\ref{el2def}). Suppose that $\rho_p$ is the density which does not vanish at $R_0$, but vanishes at $R_1$. Considering the atmospheric and bulk problem separately, we can conclude that
\begin{equation}\label{maximumprinciple}
\Delta(\bar{\rho}_{p}-\rho_{p})=0 \text{ on }[0, R_0] \text{ and }[R_0, R_1]
\end{equation}
Now $\bar{\rho}_p(R_0)=\rho_p(R_0)$, for if not, then we have $\bar{\rho}_p(r)-\rho_p(r)=\bar{\rho}_p(R_0)-\rho_p(R_0)$ for $r<R_0$. But neither function has a jump in value at $R_0$, so we apply the maximum principle again in $B_{R_1}\backslash B_{R_0}$ to conclude that $\bar{\rho}_p(r)-\rho_p(r)=\bar{\rho}_p(R_0)-\rho_p(R_0)$ on this set also. But we know that both functions vanish at $R_1$, so we have that $\bar{\rho}_p=\rho_p$ on $[0, R_1]$. Then $\rho_p$ cannot have a jump in its second derivative as $\bar{\rho}_p$ does not.

Using the extension of lemma 4.2 proved above, we can define $\bar{\rho}_e$ and  $\bar{\rho}_p$ in the same way for the case in which one of $\rho_e$ or $\rho_p$ have noncompact support taking $\lambda_f=0$ for the density with unbounded support, and still conclude that $\bar{\rho}_e$ and  $\bar{\rho}_p$ are $C^2$ on their supports. In the second part of the argument, assuming $\rho_p$ is the density with unbounded support, we have 
\begin{equation}\label{maximumprinciple}
\Delta(\bar{\rho}_{p}-\rho_{p})=0 \text{ on }[0, R_0] \text{ and }[R_0, \infty)
\end{equation}
We can still apply the maximum principle on $[0, R_0]$, and we can apply it to any $[R_0, R]$ $R<\infty$. So $\bar{\rho}_p-\rho_p$ is a constant on $[R_0, \infty)$. If this constant is not zero then $\bar{\rho}_p$ does not go to zero as $x\rightarrow\infty$, which clearly contradicts its definition.
\end{proof}
So the solutions to the system we found are $C^2$ on their support, and are thus solutions to the Euler-Lagrange equations (\ref{el1}) and (\ref{el2}). 

\section{The Structure of Minimizers}\label{synthesissection}

We now have quite a few results concerning the minimizers of (\ref{functional}), and here we combine them to see a more complete picture. First, we have theorems \ref{minexist} and \ref{compactminimizer} which tell us that as long as $N_e/N_p$ satisfies (\ref{5/3bounds}), (\ref{functional}) has a spherically symmetric minimizer such that the densities integrate to $N_e$ and $N_p$, and if the inequalities in (\ref{5/3bounds}) are sharp, the minimizer has compact support. Any such minimizer must satisfy the Euler-Lagrange equations, and we have seen in section \ref{ELsection} that given two central densities $(\alpha, \beta)$, these equations have a unique solution for which both densities are decreasing. These results fit together quite well, since we know that for a fixed $\alpha$, there is a single $\beta$ such that $u_p$ has unbounded support but is integrable, and a single $\beta$ such that $u_e$ has unbounded support but is integrable. Such solutions apparently correspond to the equalities of (\ref{5/3bounds}). If we call these two values $\beta_\alpha^h$ and $\beta_\alpha^l$, respectively, then every $\beta\in [\beta_\alpha^l, \beta_\alpha^h]$ will produce curves with compact support, and correspond to strict inequalities in (\ref{5/3bounds}).

We want to show that the minimizer is unique, and to do this, we must bridge the gap between the two approaches. It is clear that if we were given a minimizer for a pair $(N_e, N_p)$, we could use the central densities in solving the Euler-Lagrange equations, and therefore we recover the minimizer. So to show that the minimizer is unique, we need to show there is a bijection from $(N_e, N_p)$ to $(\alpha, \beta)$.
\begin{lemma}\label{bijection}
There is a bijective map from admissible values of $(N_e, N_p)$ to $(\alpha, \beta)$.
\end{lemma}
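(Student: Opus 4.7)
I plan to set up the map $\Phi:\mathcal{D}\to\mathcal{R}$ by $\Phi(\alpha,\beta):=(N_e(\alpha,\beta),N_p(\alpha,\beta))$, where $\mathcal{D}$ is the region of central densities cut out by (\ref{densityratios}), $\mathcal{R}$ is the region of particle numbers cut out by (\ref{5/3bounds}), and $N_f(\alpha,\beta)=\int u_f^{3/2}$ for the unique radial, integrable solution of the Euler--Lagrange system constructed in Section~\ref{ELsection} with $u_p(0)=\alpha$, $u_e(0)=\beta$. The lemma reduces to showing $\Phi$ is a bijection.

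Continuity of $\Phi$ would follow from the local well-posedness of the ODE system (proved in Section~\ref{bulkprob}) together with uniform control on the supports of the densities for $(\alpha,\beta)$ ranging over compact subsets of $\mathcal{D}$; the boundary of $\mathcal{D}$ is exactly where one of the solutions degenerates to the unbounded-support limit of Theorem~\ref{limitingcase}, so on interior compacta the supports are uniformly bounded and the integrals pass to the limit by dominated convergence. Surjectivity is essentially free: Theorems~\ref{minexist} and \ref{compactminimizer} furnish a radial minimizer for any $(N_e,N_p)\in\mathcal{R}$, and the regularity analysis of Section~\ref{ELsection} shows this minimizer satisfies the Euler--Lagrange equations classically, so its values at the origin produce a preimage.

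Injectivity is the main work. Suppose $\Phi(\alpha_1,\beta_1)=\Phi(\alpha_2,\beta_2)$. I would first dispose of the subcases where the central data differ in only one coordinate, or in opposite directions, by the comparison technique of Lemma~\ref{uniquespecial}. For instance, if $\alpha_1=\alpha_2$ and $\beta_1>\beta_2$, then the identities $u_e''(0)=\tfrac{1}{3}\psi(0)$ and $u_p''(0)=\tfrac{1}{3}\phi(0)$ combined with a short Taylor expansion give $u_{e,1}>u_{e,2}$ and $u_{p,1}<u_{p,2}$ just off the origin. This inequality propagates to all $r>0$: if it were to first fail at some $R_1$ through a coincidence of derivatives, the right-hand side of the relevant equation at $R_1$ already has the sign needed to force the ordering to persist, exactly as in Lemma~\ref{uniquespecial}. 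Integrating $u_{e,1}^{3/2}>u_{e,2}^{3/2}$ against $r^2\,dr$ on a set of positive measure then contradicts $N_{e,1}=N_{e,2}$. The subcases $\beta_1=\beta_2$ with $\alpha_1\neq\alpha_2$, and $\alpha_1>\alpha_2$ with $\beta_1<\beta_2$, yield contradictions by the same method.

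The hard subcase is $\alpha_1>\alpha_2$ and $\beta_1>\beta_2$: both differences $u_{p,1}-u_{p,2}$ and $u_{e,1}-u_{e,2}$ are positive at the origin but the right-hand sides of the ODEs for these differences are not signed, so direct comparison breaks down. My plan is to parlay the previous subcases into a topological argument. Those cases already show that along each coordinate line through any point of $\mathcal{D}$ the components of $\Phi$ move strictly monotonically in opposite directions, which prevents any neighborhood of a point of $\mathcal{D}$ from being collapsed by $\Phi$; that is, $\Phi$ is locally injective. Together with continuity and surjectivity, plus a properness estimate at $\partial\mathcal{D}$ saying that as $(\alpha,\beta)\to\partial\mathcal{D}$ the ratio $N_e/N_p$ tends to a boundary value of (\ref{5/3bounds}), a standard covering-map argument on the simply connected region $\mathcal{R}$ promotes local injectivity to global injectivity. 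Verifying the properness step---essentially that each of the two non-compact-support boundaries of $\mathcal{D}$ is mapped in a one-to-one fashion onto the corresponding boundary of $\mathcal{R}$---is where I expect the real technical effort to lie.
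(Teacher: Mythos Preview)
Your treatment of the easy subcases (one central value equal, or the two central values ordered oppositely) via the comparison argument of Lemma~\ref{uniquespecial} is correct and coincides with what the paper does. The divergence is in the hard subcase $\alpha_1>\alpha_2$, $\beta_1>\beta_2$, and there your plan has a genuine gap.

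The claim that strict monotonicity of the two components of $\Phi$ in opposite directions along each coordinate line forces local injectivity is false. The map $\Phi(\alpha,\beta)=(\beta-\alpha,\alpha-\beta)$ satisfies exactly those monotonicity conditions, is injective on every horizontal and every vertical line, and is consistent with all three of your easy-subcase conclusions (if $\alpha_1=\alpha_2$, or $\beta_1=\beta_2$, or the coordinates are ordered oppositely, the images differ). Yet it collapses the whole plane to a line and is nowhere locally injective. In other words, the sign pattern you extract from the comparison subcases does not exclude a rank-one Jacobian, so the covering-map argument cannot even get off the ground. Even if local injectivity were available, you yourself flag the properness step as the real work, so the proposal is a sketch rather than a proof.

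The paper sidesteps all of this with a single structural observation you have not used: the system (\ref{sphereEL1})--(\ref{sphereEL2}) has a scaling symmetry. If $(u_e(r),u_p(r))$ is a solution, then so is $\bigl(\lambda u_e(as),\lambda u_p(as)\bigr)$ provided $a^2=\lambda^{1/2}$, and under this rescaling both integrals $\int u_f^{3/2}$ are multiplied by the common factor $a^{3}$. Given two data $(\alpha_1,\beta_1)$ and $(\alpha_2,\beta_2)$ with $\alpha_1=\lambda\alpha_2$ and $\lambda>1$, rescale the second solution by $\lambda$ to produce a solution $(\theta,\eta)$ whose $u_p$-central value is $\alpha_1$. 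Now $(\theta,\eta)$ and $(u_{e,1},u_{p,1})$ share one central value, so the easy-subcase comparison applies and gives, say, $\int\theta^{3/2}\le N_{e,1}$ together with $\int\eta^{3/2}\ge N_{p,1}$ (or the reverse). But $\int\theta^{3/2}=a^{3}N_{e,2}$ and $\int\eta^{3/2}=a^{3}N_{p,2}$ with $a^{3}>1$, so the hypothesis $N_{e,1}=N_{e,2}$, $N_{p,1}=N_{p,2}$ is immediately contradicted. The hard case is thus reduced to the easy case in one line; no continuity, properness, or covering-space machinery is needed.
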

\begin{proof}
It is clear that given an $(\alpha,\beta)$, we can compute $N_e=\int u_e^{3/2}$ and $N_p=\int u_p^{3/2}$ by solving the ODE system. As a given pair of central densities has a unique solution, we just need to show that this map from $(\alpha,\beta)$ to $(N_e, N_p)$ is injective. To this end, we note that solutions to the bulk and atmospheric problem exhibit a scaling structure. That is, given a solution $(u_e(r), u_p(r))$, we can find another solution $(\theta(s), \eta(s))$ such that $\lambda u_e(r)=\theta(s)$ and $\lambda u_p(r)=\eta(s)$ where $r=as$ and $\lambda^{1/2}=a^2$. 

Then suppose we have $(\alpha_1, \beta_1)$ and $(\alpha_2, \beta_2)$ such that $\alpha_1\neq\alpha_2$ and $\beta_1\neq\beta_2$; if just one of these is an equality, we can use the type of argument given in lemma \ref{uniquespecial} to show that at least one of $\int u^{3/2}_{e,1}\neq \int u^{3/2}_{e,2}$ or $\int u^{3/2}_{p,1}\neq \int u^{3/2}_{p,2}$ must be true. Assume that $\alpha_1=\lambda\alpha_2$ for $\lambda>1$, so that we have a solution $(\theta, \eta)$ with $\theta(0)=\alpha_1$ and the relation $(\lambda u_{e,2}(r), \lambda u_{p,2})=(\theta(s), \eta(s))$. Then we have 
\begin{equation}
    \int u^{3/2}_{e,2}=4\pi\int_0^\infty r^2u^{3/2}_{e,2}(r)dr=4\pi\int_0^\infty r^2u^{3/2}_{e,2}(r)dr=\frac{4\pi a^3}{\lambda^{3/2}}\int_0^\infty s^2\theta^{3/2}(s)ds=a^{-3}\int \theta^{3/2}
\end{equation}
and similarly,
\begin{equation}
    \int u^{3/2}_{p,2}=a^{-3}\int \eta^{3/2}.
\end{equation}
A comparison principle type argument like what was used in lemma \ref{uniquespecial} tells us that either $\theta\leq u_{e,1}$ and $\eta \geq u_{p,1}$ or $\eta\leq u_{p,1}$ and $\theta\geq u_{e,1}$, depending on the relationship between $\lambda\beta_2$ and $\beta_1$. So we can conclude that we cannot have both $\int u^{3/2}_{e,1}= \int u^{3/2}_{e,2}$ and $\int u^{3/2}_{p,1}= \int u^{3/2}_{p,2}$. Therefore, the map from the central densities to the number of particles is injective.
\end{proof}

See \cite{HK} for further discussion and numerical solutions for these equations. 
\subsection{The Special Relativistic Case}
Let us rewrite (\ref{specialel3}) and (\ref{specialel4}) as 
\begin{equation}\label{specialEL3}
    \frac{1}{r^2}\frac{d}{dr}\left(r^2\frac{d}{dr}y_p\right)=\frac{1}{\alpha_pk_p^3}\left(1-G\frac{m_p^2}{q^2}\right)(y_p^2-1)^{3/2}-\frac{1}{\alpha_pk_e^3}\left(1+G\frac{m_pm_e}{q^2}\right)(y_q^2-1)^{3/2}
\end{equation}
and
\begin{equation}\label{specialEL4}
    \frac{1}{r^2}\frac{d}{dr}\left(r^2\frac{d}{dr}y_e\right)=-\frac{1}{\alpha_ek_p^3}\left(1+G\frac{m_pm_e}{q^2}\right)(y_p^2-1)^{3/2}+\frac{1}{\alpha_ek_e^3}\left(1-G\frac{m_q^2}{q^2}\right)(y_q^2-1)^{3/2},
\end{equation}
respectively.

Unlike the proof that minimizers of (\ref{functional}) exist, the proof of their uniqueness and the relation of the shape of the minimizers to $(N_e,N_p)$ does not so easily generalize to the special relativistic case. The main hindrance in directly generalizing the results of section \ref{ELsection} is that there are no special solutions which provide some basis for a type of comparison principle. 
\begin{lemma}
There are no decreasing solutions of the form $\rho_e=\hat{k}\rho_p$.
\end{lemma}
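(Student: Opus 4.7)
The plan is to assume such a solution exists and to extract two incompatible algebraic constraints on $\hat k$. First I would substitute $\rho_e = \hat k\rho_p$ into (\ref{specialel1}) and (\ref{specialel2}); using $y_f^2 = 1 + \alpha_f^2\rho_f^{2/3}$, the relation forces $y_e = \sqrt{1 + \alpha_e^2\hat k^{2/3}\rho_p^{2/3}}$, and the two Euler--Lagrange equations reduce on the common support to
\begin{equation*}
\Delta y_p = c_1\rho_p, \qquad \Delta y_e = c_2\rho_p,
\end{equation*}
with $c_1 = \frac{4\pi}{m_pc^2}[q^2(1-\hat k) - Gm_p(m_p + m_e\hat k)]$ and $c_2 = -\frac{4\pi}{m_ec^2}[q^2(1-\hat k) + Gm_e(m_p + m_e\hat k)]$.

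Next I would form the combination $c_2 y_p - c_1 y_e$, which is then harmonic. Since the densities are decreasing and integrable, $y_p, y_e \to 1$ at infinity, so $c_2 y_p - c_1 y_e$ is a bounded harmonic function on $\mathbb{R}^3$ with limit $c_2 - c_1$; by Liouville's theorem (or, in the compactly supported case, by uniqueness of the Dirichlet problem on the ball of support with the constant extension $y_p = y_e = 1$ outside), it is identically $c_2 - c_1$. This yields the pointwise identity $c_2(y_p - 1) \equiv c_1(y_e - 1)$.

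Because $\rho_p$ is continuous and non-increasing with $\rho_p(0) > 0$, by the intermediate value theorem $\rho_p(r)$ takes every value in $[0, \rho_p(0)]$ as $r$ varies, and the pointwise identity lifts to the functional identity
\begin{equation*}
c_2\left[\sqrt{1 + \alpha_p^2 s^{2/3}} - 1\right] = c_1\left[\sqrt{1 + \alpha_e^2\hat k^{2/3} s^{2/3}} - 1\right]
\end{equation*}
for all $s \in [0, \rho_p(0)]$. Matching the $s^{2/3}$ and $s^{4/3}$ coefficients in the Taylor expansion yields $c_2\alpha_p^2 = c_1\alpha_e^2\hat k^{2/3}$ and $c_2\alpha_p^4 = c_1\alpha_e^4\hat k^{4/3}$; dividing forces $\alpha_p^2 = \alpha_e^2\hat k^{2/3}$, and since $\alpha_f m_f$ is a mass-independent constant this pins down $\hat k = (m_e/m_p)^3$.

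Finally, with $\hat k = (m_e/m_p)^3$ one checks $\alpha_e^2\hat k^{2/3} = \alpha_p^2$, so $y_e \equiv y_p$ and therefore $c_1 = c_2$; a direct computation reduces $c_1 = c_2$ to $(m_e + m_p)q^2(1-\hat k) = 0$, i.e.\ $\hat k = 1$, contradicting $\hat k = (m_e/m_p)^3 \neq 1$. The degenerate subcase $c_1 = c_2 = 0$ is ruled out separately: the two defining equations force $Gm_p(m_p + m_e\hat k) = -Gm_e(m_p + m_e\hat k)$, hence $m_p + m_e\hat k = 0$, which is impossible for $\hat k > 0$. The main obstacle is the middle step, lifting a pointwise identity in $r$ to a functional identity in $s$ valid on an entire interval; this rigidification has no counterpart in the non-relativistic Lemma \ref{specialsolution}, where the $5/3$-homogeneity collapses the compatibility condition into a single algebraic equation for $k_d$, and it is precisely the two distinct asymptotic regimes of $\sqrt{1 + cs^{2/3}}-1$ (of order $s^{2/3}$ for small $s$ and $s^{1/3}$ for large $s$) that ultimately prevent any proportional solution from existing.
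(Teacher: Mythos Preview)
Your proof is correct and takes a genuinely different route from the paper's. The paper argues purely at the ODE level: it rewrites $y_p=\sqrt{k(y_e^2-1)+1}$, inserts this into (\ref{specialEL3}), and after a chain of differentiations and substitutions into (\ref{kcompatibility}) arrives at an identity whose right-hand side cannot be constant unless $y_e$ is. It is a local, computational argument that never appeals to global behavior.

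Your argument is more conceptual. By observing that $\Delta y_p=c_1\rho_p$ and $\Delta y_e=c_2\rho_p$ on the common support, you immediately get a harmonic linear combination and invoke Liouville / Dirichlet uniqueness to force $c_2(y_p-1)\equiv c_1(y_e-1)$. The Taylor-matching step then exploits precisely the inhomogeneity of $s\mapsto\sqrt{1+\alpha^2 s^{2/3}}-1$ to pin down $\hat k=(m_e/m_p)^3$, after which the contradiction $c_1=c_2\Rightarrow\hat k=1$ is a two-line computation. The degenerate cases ($c_1=0$ or $c_2=0$) are handled cleanly.

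What each buys: the paper's proof is self-contained and uses nothing beyond the radial ODE system, but the manipulations are lengthy and somewhat opaque. Your proof is shorter and makes transparent \emph{why} the relativistic case differs from Lemma~\ref{specialsolution}: the obstruction is exactly the failure of $y_f-1$ to be homogeneous in $\rho_f$, which your Taylor-matching step isolates. The only mild cost is that you use global information (boundedness of $y_f$ as $r\to\infty$, or the boundary value on the sphere of support), so you are implicitly assuming the decreasing solution is integrable; in the context of minimizers of $E^S$ this is automatic.
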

\begin{proof}
This is equivalent to saying there are no solutions of the form $(y_p^2-1)=k(y_q^2-1)$ for $k=\hat{k}\left(\frac{m_p}{m_e}\right)^3$. So we can conclude immediately that $k>0$, since we are only interested in positive solutions. Since we are further interested in decreasing solutions, we can plug $(y_p^2-1)=k(y_q^2-1)$ into the right hand side of (\ref{specialel3}) to get 
\begin{equation}
    \Delta y_p=\frac{4\pi}{m_pc^2}\left(\frac{\pi}{3}\right)\left(\frac{2c}{h}\right)^3\frac{1}{q^2}\left(m_p^3k\left(1-\frac{Gm^2_p}{q^2}\right)-m_e^3\left(1+\frac{Gm_pm_e}{q^2}\right)\right)(y_q^2-1)^{3/2}
\end{equation}
Since $y_p'(0)=0$, we can only have a decreasing solution if 
\begin{equation}
    k\leq \frac{m_e^3}{m_p^3}\frac{\left(1+\frac{Gm_pm_e}{q^2}\right)}{\left(1-\frac{Gm^2_p}{q^2}\right)}<1
\end{equation}
So we know that $0<k<1$.

Now we assume such a solution exists and derive a contradiction. Let $y_p=\sqrt{k(y_q^2-1)+1}$. Then (\ref{specialEL3}) becomes
\begin{equation}\label{newlaplacian}
    k\left[\frac{2}{r}\frac{y_e y_e'}{\sqrt{(y_q^2-1)k+1}}+\frac{(1-k)(y_e')^2+y_ey_e''(k(y_q^2-1)+1)}{((y_q^2-1)k+1)^{3/2}}\right]=(-E+Fk^{3/2})(y_q^2-1)^{3/2},
\end{equation}
for $E$ and $F$ appropriately defined. So just as in the proof of lemma (\ref{specialsolution}) we have a compatibility  condition on $k$ given by (\ref{newlaplacian}) and (\ref{specialEL4}), the latter of which we rewrite 
\begin{equation}\label{kcompatibility}
    y_e''+\frac{2}{r}y_e'=(-Ak^{3/2}+B)(y^2_e-1)^{3/2}
\end{equation}
This compatibility condition is not as straightforward though. In lemma (\ref{specialsolution}), our condition led us to an algebraic equation we could easily determine had a solution. Choosing the correct $k$ essentially eliminated one of the constraint equations, so then we only needed to solve a simple ODE. 

To proceed, we note that we can have neither $B=Ak^{3/2}$ nor $E=Fk^{3/2}$, for in the former case we would have $y_p$ be a constant and in the latter case we would have $y_e$ be a constant. Then combine (\ref{newlaplacian}) and (\ref{kcompatibility}) to get 
\begin{equation}
    \frac{2}{r}\frac{y_e y_e'}{\sqrt{(y_q^2-1)k+1}}+\frac{(1-k)(y_e')^2+y_ey_e''(k(y_q^2-1)+1)}{((y_q^2-1)k+1)^{3/2}}=\frac{-E+Fk^{3/2}}{k(-Ak^{3/2}+B)}\left[y_e''+\frac{2}{r}y_e'\right]
\end{equation}
This can be rearranged to get 
\begin{equation}\label{simplecompatibility}
    y_e''+\frac{2}{r}y_e'=\frac{(y_e')^2}{k(y_q^2-1)+1}\left[\frac{k(-Ak^{3/2}+B)(1-k)}{((y_q^2-1)k+1)^{1/2}(-E+k^{3/2}F)-y_ek(-Ak^{3/2}+B)}\right].
\end{equation}
Then we can recombine (\ref{simplecompatibility}) with (\ref{kcompatibility}) to get 
\begin{equation}\label{anothercompat}
    (y_e')^2=\frac{(y_q^2-1)^{3/2}}{k(1-k)}\left(k(y_q^2-1)+1\right)\left[(-E+Fk^{3/2})(k(y_q^2-1)+1)^{1/2}-y_ek(-Ak^{3/2}+B)\right].
\end{equation}
Then we have that 
\begin{equation}
    y_e'=\frac{(y_q^2-1)^{3/4}}{\sqrt{k(1-k)}}\left(k(y_q^2-1)+1\right)^{1/2}\left[(-E+Fk^{3/2})(k(y_q^2-1)+1)^{1/2}-y_ek(-Ak^{3/2}+B)\right]^{1/2}
\end{equation}
Differentiating (\ref{anothercompat}) we get 
\begin{align}
y_e''&=\frac{3(y_q^2-1)^{1/2}y_e}{2k(1-k)}(k(y_q^2-1)+1)\left[(-E+Fk^{3/2})(k(y_q^2-1)+1)^{1/2}-y_ek(-Ak^{3/2}+B)\right]\\
&+\frac{(y_q^2-1)^{3/2}y_e}{(1-k)}\left[(-E+Fk^{3/2})(k(y_q^2-1)+1)^{1/2}-y_ek(-Ak^{3/2}+B)\right] \nonumber\\
&+\frac{(y_q^2-1)^{3/2}}{2(1-k)}(k(y_q^2-1)+1)\left[(-E+Fk^{3/2})(k(y_q^2-1)+1)^{-1/2}y_e-(-Ak^{3/2}+B)\right]. \nonumber
\end{align}
 Then we can combine these last two equations with (\ref{kcompatibility}) to get 
\begin{align}
    -Ak^{3/2}+B&=\frac{3y_e}{2k(1-k)(y_q^2-1)}(k(y_q^2-1)+1)\left[(-E+Fk^{3/2})(k(y_q^2-1)+1)^{1/2}-y_ek(-Ak^{3/2}+B)\right]\\
&+\frac{y_e}{(1-k)}\left[(-E+Fk^{3/2})(k(y_q^2-1)+1)^{1/2}-y_ek(-Ak^{3/2}+B)\right] \nonumber\\
&+\frac{k(y_q^2-1)+1}{2(1-k)}\left[(-E+Fk^{3/2})(k(y_q^2-1)+1)^{-1/2}y_e-(-Ak^{3/2}+B)\right] \nonumber\\
&+ \frac{2}{r}\frac{(y_q^2-1)^{-3/4}}{\sqrt{k(1-k)}}\left(k(y_q^2-1)+1\right)^{1/2}\left[(-E+Fk^{3/2})(k(y_q^2-1)+1)^{1/2}-y_ek(-Ak^{3/2}+B)\right]^{1/2}\nonumber
\end{align}
the right side of which cannot be constant unless $y_e$ is constant. 
\end{proof}
Another major obstacle that prevents a direct generalization of lemma \ref{bijection} is that, as is easy to tell by direct computation, there is no scaling structure to these equations comparable to system (\ref{sphereEL1}) and (\ref{sphereEL2}). Despite these difficulties, numerics indicate that the structure of the solutions is exactly the same as in the Newtonian case, once one accounts for the limitations from (\ref{crudeestimate}), see \cite{HK}.

\section{Conclusion}\label{conclusion}
In this paper we have studied a two species Thomas-Fermi type model for a non-rotating non-neutral self-gravitating brown dwarf star in its ground state. This is an extension of previous results which assumed local neutrality and therefore were reduced to considering only a one fluid model.

In the Newtonian kinetic energy case, we gave a complete classification of the unique minimizers of (\ref{functional}) given that the total number of positively and negatively charged particles satisfied (\ref{5/3bounds}). We were also able to show the existence of minimizers in the special relativistic kinetic energy case.

Of course, these results still make a large number of unrealistic assumptions: at the very least, actual white dwarfs rotate and are not composed of only protons and electrons. In future work, we intend to extend this work to models removing these assumptions.  We believe it is however worthwhile to study the simplest nonneutral case before beginning to address more realistic, complicated models. 

\bibliographystyle{apsrev}

\begin{thebibliography}{10}

\bibitem{Thomas}
 Llewellyn Thomas,
 \textit{The calculation of atomic fields}, Mathematical Proceedings of the Cambridge Philosophical Society \textbf{23}(5), (1927).
 
 \bibitem{Fermi}
 Enrico Fermi,
 \textit{Un Metodo statistico per la determinazione di alcune priorieta dell'atome}, Rend. Accad. Naz. Lincei \textbf{6}, 602--607(1927).
 
 \bibitem{Fefferman}
Charles Fefferman and Luis Seco, 
\textit{On the energy of a large atom}, Bulletin of the American Mathematical Society \textbf{23}(2), 525--530 (1990).
 
 \bibitem{Teller}
Edward Teller, 
\textit{On the stability of molecules in Thomas-Fermi theory}, Rev.Mod. Phys. \textbf{34}, 627--631 (1962).

\bibitem{LiebThirring}
Lieb, Elliott H., and Walter E. Thirring,
\textit{Bound for the kinetic energy of fermions which proves the stability of matter},  Phy. Rev. Lett. \textbf{35}, 687--689 (1975). 

\bibitem{Spruch}
Spruch, Larry,
\textit{Pedagogic notes on Thomas-Fermi theory (and on some improvements): atoms, stars, and the stability of bulk matter}, Rev. Mod. Phys. \textbf{63}(1) 151--209 (1991).

\bibitem{LiebSeiringer}
Lieb, Elliott H., and Robert Seiringer,
\textit{The stability of matter in quantum mechanics}. Cambridge University Press, 2010.

\bibitem{GT}
David Gilbarg and Neil Trudinger.
\textit{Elliptic Partial Differential Equations of Second Order}.
Springer Verlag, 1983.

\bibitem{HK}
 Parker Hund and Michael K.-H. Kiessling,
 \textit{Electrically nonneutral ground states of stars}, Phys. Rev. D. 
 \textbf{103} (4), 043004 (2021).
 
 \bibitem{HK2}
 Parker Hund and Michael K.-H. Kiessling,
 \textit{How much electric surcharge fits on... a white dwarf star?}, American Journal of Physics
 \textbf{89} (3), 291--299 (2021)
 
 \bibitem{LS}
 Elliott Lieb and Barry Simon,
 \textit{The Thomas-Fermi Theory of Atoms, Molecules and Solid}, Advances in Mathematics \textbf{23}(1), 22--116 (1977).
 
 \bibitem{RRRX1}
 M. Rotondo, J. Rueda, R. Ruffini, and S.-S. Xue,
 \textit{Relativistic Thomas-Fermi treatment of compressed atoms and compressed nuclear matter cores of stellar dimensions}, Physical Review C \textbf{83}(4), 045805 (2011).
 
 \bibitem{RRRX2}
 M. Rotondo, J. Rueda, R. Ruffini, and S.-S. Xue,
 \textit{Relativistic Feynman-Metropolis-Teller theory for white dwarfs in general relativity}, Physical Review D \textbf{84}(8), 084007 (2011).
 
  \bibitem{AB1}
 J.F.G Auchmuty and Richard Beals,
 \textit{Variational Solutions of Some Nonlinear Free Boundary Problems}, Arch. Rat. Mech. \textbf{43}(4), 255--271 (1971).
 
  \bibitem{AB2}
 J.F.G Auchmuty and Richard Beals,
 \textit{Models of Rotating Stars}, The Astrophysical Journal \textbf{165}, L79 (1971).
  
  \bibitem{FT}
 Avner Friedman and Bruce Turkington,
 \textit{Existence and Dimensions of a Rotating White Dwarf}, J. Diff. Eq. \textbf{42}(3), 414-437 (1981).
  
  \bibitem{LY}
 Elliott Lieb and Horng-Tzer Yau,
 \textit{The Chandrasekhar Theory of Stellar Collapse as the Limit of Quantum Mechanics}, Comm. Math. Phys. \textbf{112}(1), 147-174 (1987).
 
  \bibitem{YYL}
 YanYan Li,
 \textit{On Uniformly Rotating Stars}, Arch. Rat. Mech. \textbf{115}(4), 367-393 (1991).
 
 \bibitem{Chandra}
S. Chandrasekhar,
 \textit{An Introduction to the Study of Stellar Structure}, 1st ed. (Dover, Mineola, NY, 1958)
 
 \bibitem{DG}
Djairo G. de Figueiredo,
\textit{Nonvariational Semilinear Elliptic Systems},
Advances in Mathematics and Applications. Springer, Cham, 131--151 (2018).

\bibitem{CFM1}
Ph. Clement, D.G. de Figueiredo, and E. Mitidieri.
\textit{Positive Solutions of Semilinear Elliptic Systems}.
Communications in Partial Differential Equations, 17.5 (1992): 923-940
\bibitem{CFM2}
Ph. Clement, D.G. de Figueiredo, and E. Mitidieri.
\textit{A priori Estimates for Positive Solutions of Semilinear Elliptic Systems via Hardy-Sobolev inequalities}.

\bibitem{FLN}
Djairo G. De Figueredo, Pierre-Louis Lions, Roger D. Nussbaum
\textit{A priori Estimates and Existence of Positive Solutions of Semilinear Elliptic Equations}.
J. Math. pures et appl. 61 (1982):41-63

\bibitem{Zou}
Henghui Zou.
\textit{A priori Estimates for Semlinear Elliptic Systems without Variational Structure and Their Applications}.

\bibitem{GS1}
B. Gidas, J. Spruck.
\textit{A priori Bounds for Positive Solutions of Nonlinear Elliptic Equations}.
Communications in Partial Differential Equations 6.8 (1981): 883-901.
\bibitem{GS2}
B. Gidas, J. Spruck.
\textit{Global and Local Behavior of Positive Solutions of Nonlinear Elliptic Equations}.
Communications on Pure and Applied Mathematics 34.4 (1981): 525-598.

\bibitem{BT}
H. Brezis and R.E.L Turner.
\textit{On a class of Superlinear Elliptic Problems}.
Communications in Partial Differential Equations, 2.6 (1977): 601-614.

\bibitem{Troy}
William C. Troy.
\textit{Symmetry Properties in Systems of Semilinear Elliptic Systems}.
Journal of Differential Equations, 42.3 (1981): 400-413.

\bibitem{CCST}
Z.Y. Chen, J.L. Chern, J. Shi, and Y.L. Tang.
\textit{On the Uniqueness and Structure of Solutions to a Coupled Elliptic System}.
Communications in Partial Differential Equations, 17.5 (1992): 923-940

\bibitem{KNY}
M. I. Krivoruchenko, D. K. Nadyozhin, and A. V. Yudin,
 \textit{Hydrostatic equilibrium of stars without electroneutrality constraint},
   Phys. Rev. D. \textbf{97}:083016 (2018). 
   
  \bibitem{Lions}
Pierre-Louis Lions,
 \textit{Minimizations Problems in $L^1(\mathbb{R}^3)$},
   J. of Funct. Anal. 41.2 (1981). 
   
 \bibitem{Serrin1}
James Serrin and Zou Henghui.
\textit{Existence of Positive Entire Solutions of Elliptic Hamiltonian Systems}.
Communications in Partial Differential Equations, 23.3-4 (1998): 577--599.

\bibitem{Strauss}
Strauss, W. A. (1977). Existence of solitary waves in higher dimensions. \textit{Communications in Mathematical Physics}, 55(2), 149--162.

\end{thebibliography}

$\phantom{nix}$
\end{document}